\mathchardef\ls="213C    
\mathchardef\gr="213E    
\mathchardef\uparrow="0222  
\mathchardef\downarrow="0223  
\newcommand\D{\mathcal{D}}
\newcommand\N{\mathcal{N}}
\newcommand\M{\mathcal{M}}
\newcommand\Pa{\mathcal{Q}}
\def\pow{{\mathcal P_{\!\!\!\omega}}}
\newcommand\pto\rightharpoonup
\newcommand\E\varepsilon
\newcommand\Exp{\mathsf{Exp}}
\newcommand\Plus{\mathsf{Plus}}
\newcommand\Empty{\mathsf{Empty}}
\newcommand\ndf{\mathit{NDF}}
\newcommand\id{\mathsf{Id}}
\newcommand\B{\mathsf{B}}
\newcommand\G{\mathcal{G}}
\newcommand\F{\mathcal{F}}
\newcommand\Pol{\mathbf{P}}
\newcommand\Fin{\mathbf{F}}
\newcommand\emp{\underline\emptyset}
\newcommand{\rules}[2]{\mbox{$\frac%
                      {\mbox{\small \rule[-5pt]{0pt}{14pt} $#1$}}
                      {\mbox{\small \rule[0pt]{0pt}{10pt}$#2$}}$}}
\DeclareSymbolFont{lasy}{U}{lasy}{m}{n}
\DeclareMathSymbol\myDiamond{\mathord}{lasy}{"33}
\newcommand{\myplus}{\mathbin{\rlap{$\myDiamond$}\hspace*{.01cm}\raisebox{.14ex}{$+$}}}
\newcommand{\mybigplus}{\scalebox{1.5}{$\myplus$}}
\def\expr#1{<\!< \, #1 \, >\!>}
\def\ceil#1{\lceil\, #1 \,\rceil}
  \def\hyph{-\penalty0\hskip0pt\relax} 
\theoremstyle{definition}
\newtheorem{mydefinition}{\textsc{Definition}}[section]
\theoremstyle{plain}
\newtheorem{mylemma}[mydefinition]{\textsc{Lemma}}
\theoremstyle{plain}
\newtheorem{mytheorem}[mydefinition]{\textsc{Theorem}}
\theoremstyle{plain}
\newtheorem{myproposition}[mydefinition]{\textsc{Proposition}}
\theoremstyle{plain}
\theoremstyle{definition}
\theoremstyle{definition}
\newtheorem{myexample}[mydefinition]{\textsc{Example}}
\newenvironment{definition}{
\begin{mydefinition}}
    {\hfill$\clubsuit$\end{mydefinition}}
\newenvironment{example}{
\begin{myexample}}
    {\hfill$\spadesuit$\end{myexample}}
\newenvironment{lemma}{
\begin{mylemma}}
    {\end{mylemma}}
\newenvironment{theorem}{
\begin{mytheorem}}
    {\end{mytheorem}}
\newcommand*\mycirc[1]{%
  \begin{tikzpicture}[baseline=(C.base)]
    \node[draw,circle,inner sep=1pt](C) {#1};
  \end{tikzpicture}}
\def\doi{6 (3:23) 2010}
\begin{document}

\title{Non-deterministic Kleene coalgebras}
\author[A.~Silva]{Alexandra Silva\rsuper a}  
\address{{\lsuper a}CWI, Amsterdam, The Netherlands}  
\email{ams@cwi.nl} 
\thanks{{\lsuper a}The first author was partially supported by the
Funda\c{c}\~ao para a Ci\^encia e a Tecnologia, Portugal, under grant
number SFRH/BD/27482/2006.} 

\author[M.~Bonsangue]{Marcello Bonsangue\rsuper b}
\address{{\lsuper b}LIACS, University of Leiden,The Netherlands}
\email{marcello@liacs.nl}  

\author[J.~Rutten]{Jan Rutten\rsuper c}	
\address{{\lsuper c}CWI (Amsterdam), VUA (Amsterdam) and RUN (Nijmegen) , The Netherlands}	
\email{janr@cwi.nl}  

\keywords{Coalgebra, Kleene's theorem, axiomatization}
\subjclass{F3.1, F3.2, F4.1}

\begin{abstract}
In this paper, we present a systematic way of deriving (1) languages of (generalised) regular expressions, and
(2) sound and complete axiomatizations thereof,
for a wide variety of systems. 
This generalizes both the
results of Kleene (on regular languages and deterministic finite
automata)
and Milner (on regular behaviours and finite labelled transition
systems),
and includes many other systems such as Mealy and Moore machines.
\end{abstract}

\maketitle

\section{Introduction}
In a previous paper~\cite{BRS08}, we presented a language to describe the 
behaviour of Mealy machines and a sound and complete axiomatization
thereof. The defined language and axiomatization can be seen as the analogue of classical regular
expressions~\cite{kleene} and Kleene algebra~\cite{kozen}, for
deterministic finite automata (DFA), or the
process algebra and axiomatization for labelled transition
systems (LTS)~\cite{milner}.

We now extend the previous approach and devise a framework wherein
languages and axiomatizations can be uniformly derived for a large
class of systems, including DFA, LTS and Mealy machines, which we
will model as coalgebras.
 
Coalgebras provide a general framework for the study of dynamical
systems such as DFA, Mealy machines and LTS. For a functor
$\G\colon  \mathbf{Set}
\to \mathbf{Set}$, a $\G$-coalgebra or $\G$-system is a pair $(S,g)$,
consisting of a set $S$ of states and a function $g\colon S\to \G(S)$
defining the ``transitions'' of the states. We call the functor $\G$
the \emph{type} of the system. For instance, DFA can be modelled as
coalgebras of the functor $\G(S) = 2 \times
S^A$, Mealy machines are obtained by taking $\G(S) = (B\times S)^A$ and 
image-finite LTS are coalgebras for the functor $\G(S) = (\pow
(S))^A$, where $\pow$ is finite powerset.

Under mild conditions, functors $\G$ have a \emph{final coalgebra}
(unique up to isomorphism) into which every $\G$-coalgebra can be
mapped via a unique so-called $\G$\hyph\emph{homomorphism}. The final
coalgebra can be viewed as the universe of all possible
$\G$-\emph{behaviours}: the unique homomorphism into the final
coalgebra maps every state of a coalgebra to a canonical
representative of its behaviour. This provides a general notion of
behavioural equivalence: two states are equivalent if and only if they are
mapped to the same element of the final coalgebra. Instantiating the
notion of final coalgebra for the aforementioned examples, the result
is as expected: for DFA the final coalgebra is the set $2^{A^*}$ of all languages over $A$; for Mealy
machines it is the set of causal functions $f\colon A^\omega\to B^\omega$;
and for LTS it is the set of
finitely branching trees with arcs labelled by $a\in A$ modulo bisimilarity. The notion of
equivalence also specializes to the familiar notions: for DFA, two states are
equivalent when they accept the same language;
for Mealy machines, if they realize (or compute) the same causal function; and 
for LTS if they are bisimilar.

It is the main aim of this paper to show how the type of a system,
given by the functor $\G$, is not only enough to determine a notion of
behaviour and behavioural equivalence, but also allows for a uniform
derivation of both 
a set of expressions describing behaviour and a corresponding
axiomatization. The theory of universal coalgebra~\cite{Rutten00}
provides a standard equivalence and a universal domain of behaviours,
uniquely based on the functor $\G$. The main contributions of this
paper are (1) the definition of a set of expressions $\Exp_\G$
describing $\G$-behaviours, (2) the proof of the correspondence between
behaviours described by $\Exp_\G$ and locally finite $\G$-coalgebras (this is the analogue of Kleene's theorem),  and (3) a corresponding sound and complete
axiomatization, with respect to bisimulation, of $\Exp_\G$ (this is the analogue of
Kleene algebra). All these results are solely based on the type of
the system, given by the functor $\G$. 

In a nutshell, we combine the work of Kleene
with coalgebra, considering the class of non-deterministic functors.
Hence, the title of the paper: non-deterministic Kleene coalgebras.   

\paragraph{\textbf{Organization of the paper.}} In Section~\ref{sec:ndc} we
introduce the class of non\hyph deterministic functors and coalgebras. In
Section~\ref{sec:expressions} we associate with each non\hyph
deterministic functor $\G$ a generalized language $\Exp_\G$ of regular
expressions and we present an analogue of Kleene's theorem, which
makes precise the connection between $\Exp_\G$  and $\G$-coalgebras. A
sound and complete axiomatization  of $\Exp_\G$ is presented in
Section~\ref{sec:axiom}. Section~\ref{sec:appl} contains two more
examples of application of the framework and Section~\ref{sec:pol_fin} shows a language
and axiomatization for the class of polynomial and finitary
coalgebras. Section~\ref{sec:conclusions5} presents
concluding remarks, directions for future work and discusses related work. 
This paper is an extended version of~\cite{regexp,BRS09b}: it includes
all the proofs, more examples and explanations, new material about polynomial and finitary functors and an extended discussion section.

\section{Preliminaries} \label{sec:ndc}
We give the basic definitions on non-deterministic functors and coalgebras
and introduce the notion of bisimulation.

\medskip\noindent
First we fix notation on sets and operations on them. Let
$\mathbf{Set}$ be the category of sets and functions. Sets are
denoted by capital letters $X,Y,\ldots$ and functions by lower case
$f,g,\ldots$. We write $\emptyset$ for the empty set and the collection of
all {\em finite} subsets of a set $X$ is defined as
$\pow(X) = \{Y\subseteq X\mid \mbox{\textrm{$Y$ finite}}\}$. The collection of functions from a set $X$ to a set
$Y$ is denoted by $Y^X$. We write $\mathit{id}_X$ for the identity
function on set $X$. Given functions $f\;:\; X\to Y$ and
$g\;:\;Y\to Z$ we write their composition as $g\circ f$. The product
of two sets $X,Y$ is written as $X\times Y$, with projection
functions
$
\xymatrix{X&X\times
Y\ar[l]_-{\pi_1}\ar[r]^-{\pi_2}&Y}.
$
 The set $1$ is a singleton set typically written as $1=\{ * \}$ and
it can be regarded as the empty product. We define $X\myplus Y$ as the set
$X\uplus Y \uplus \{\bot,\top\}$, where $\uplus$ is the disjoint
union of sets, with injections
$
\xymatrix{X\ar[r]^-{\kappa_1}&X
\uplus
Y &Y\ar[l]_-{\kappa_2}}$. 
 Note that the set $X\myplus Y$ is different from the classical
coproduct of X and Y (which we shall denote by $X+Y$), because of the two extra elements $\bot$ and
$\top$.
These extra elements will later be used to represent, respectively,
underspecification and inconsistency in the specification of some
systems. The intuition behind the
need of these extra elements will become clear when we present our language
of expressions and concrete examples, in
Section~\ref{sec:examples_synt}, of systems whose type
involves $\myplus$. Note that $X \myplus X
\not\cong 2\times X \cong X+ X$. 

For each of the operations defined above on sets, there are analogous
ones on functions. Let $f\colon X\to Y$, $f_1\colon X\to Y$ and $f_2\colon Z\to W$. We
define the following operations:
\[
\begin{array}{l@{\hspace{2.3cm}}l}
f_1\times f_2 \colon X\times Z \to Y\times W & f_1\myplus f_2 \colon X\myplus
Z \to Y\myplus W\\[.6ex]
(f_1\times f_2)(<x,z>) = <f_1(x),f_2(z)> &
(f_1\myplus f_2)(c) = c,\ c\in\{\bot,\top\}\\ &
(f_1\myplus f_2)(\kappa_i(x)) = \kappa_i(f_i(x)),\ i\in\{1,2\}\\[2.4ex]
f^A \colon X^A \to Y^A & \pow(f) \colon \pow (X) \to \pow (Y)\\[.6ex]

f^A (g) = f\circ g & \pow(f)(S) = \{f(x) \mid x\in S\}
\end{array}
\]
\noindent Note that here we are using the same symbols that we defined above for the
operations on sets. It will always be clear from the context which
operation is being used.

In our definition of non-deterministic functors we will use constant sets
equipped with an information order. In particular, we will use
join\hyph semilattices. A (bounded) join\hyph semilattice is a set $\B$ equipped
with a binary operation $\vee_\B$ and a constant $\bot_\B \in \B$, such
that $\vee_\B$ is commutative, associative and idempotent. The
element $\bot_\B$ is neutral with respect to $\vee_\B$. As usual, $\vee_\B$
gives rise to a partial ordering $\leq_\B$ on the elements of $\B$:
\[
b_1 \leq_\B b_2 \Leftrightarrow b_1\vee_\B b_2 = b_2
\]
Every set $S$
can be mapped into a join\hyph semilattice by taking $\B$ to be the
set of all finite subsets of $S$ with union as join.

\paragraph{\textbf{Non-deterministic functors.}} 
Non-deterministic functors 
are functors $\G\colon\textbf{Set} \to \textbf{Set}$, built inductively
from
the identity and constants, using $\times$, $\myplus$, $(-)^A$ and
$\pow$.
\begin{definition}\label{def:polfunctor}
The class $\ndf$ of {\em non\hyph deterministic functors}  on
$\mathbf{Set}$ is
inductively defined by putting:
\[
\ndf \ni \G ::= \id \mid \B \mid \G\myplus \G \mid \G\times
\G \mid \G^A \mid \pow \G
\]
where $\B$ is a finite (non-empty) join\hyph semilattice and $A$ is a finite set.
\end{definition}
Since we only consider finite exponents $A = \{a_1,\ldots,a_n\}$, the
functor $(-)^A$ is not really needed, since it is subsumed by a
product with $n$ components. However, to simplify the presentation, we
decided to
include it.

Next, we show the explicit definition of the functors above on a set $X$
and on a morphism $f\colon X\to Y$ (note that $\G(f) \colon \G(X) \to
\G(Y)$).
\[
\begin{array}{lll}
\id(X) = X & \B(X) = \B & (\G_1\myplus \G_2) (X) = \G_1(X)\myplus \G_2(X) \\
\id(f) = f &  \B(f) = \mathit{id}_\B & (\G_1\myplus \G_2) (f) = \G_1(f)
\myplus \G_2 (f) \\[1ex]
 (\G^A)(X) = \G(X)^A & (\pow
\G)(X)
= \pow (\G(X)) & (\G_1\times \G_2) (X) = \G_1(X) \times \G_2(X) \\
 (\G^A)(f) = \G(f)^A & 
(\pow \G)(f) = \pow (\G(f)) & (\G_1\times \G_2) (f) = \G_1(f) \times
\G_2(f)
\\
\end{array}
\]
Typical examples of non\hyph deterministic functors include $\M =
(\B\times \id)^A$,
 $\D = 2 \times \id^A$, $\Pa= (1 \myplus \id)^A$ and $\N= 2
\times (\pow\id)^A $, where $2 = \{0,1\}$ is a two-element join
semilattice with $0$ as bottom element ($1\vee 0 = 1$) and $1=\{*\}$ is a
one element join-semilattice. These
functors represent, respectively, the type  of Mealy,
deterministic, partial deterministic and non\hyph deterministic
automata. 
In this paper, we will use the last three as running examples.
In~\cite{BRS08}, we have studied in detail regular
expressions for Mealy automata. Similarly to what happened there, we
impose a join\hyph semilattice structure on the constant functor. The
product, exponentiation and powerset functors preserve the
join\hyph semilattice structure and thus do not need to be changed.
 This is not the case for the classical coproduct and
thus we use $\myplus$ instead, which also guarantees that the join
semilattice structure is preserved. 

Next, we give the definition of  the ingredient relation, which
relates a non\hyph deterministic functor $\G$ with its {\em
ingredients}, {\em
i.e.} the functors used in its inductive construction. We shall use
this relation later for typing our expressions.

\begin{definition}

Let $\lhd\subseteq \ndf\times \ndf$ be the least reflexive and
transitive relation on
non\hyph deterministic functors
such that
\[
\G_1\triangleleft \G_1\times \G_2,\ \ \ 
\G_2\lhd \G_1\times \G_2,\ \ \ 
\G_1\lhd \G_1\myplus \G_2,\ \ \ 
\G_2\lhd \G_1\myplus \G_2,\ \ \ 
\G\lhd \G^A,\ \ \ \G\lhd \pow \G
\]
\end{definition}
Here and throughout this document we use $\F \lhd \G$ as a shorthand
for $<\F,\G>\in \lhd$. If $\F \lhd \G$, then $\F$ is said to  be an
\emph{ingredient} of $\G$. For example, $2$, $\id$, $\id^A$ and $\D$
itself are all the ingredients of the deterministic automata functor
$\D=2\times\id^A$.

\paragraph{\textbf{Non-deterministic coalgebras.}} A {non\hyph
deterministic} coalgebra is a pair $(S, f\colon S\to \G(S))$, where
$S$ is a set of states and $\G$ is a non\hyph deterministic functor. 
 The functor $\G$, together with the
function $f$, determines the {\em transition structure} (or
dynamics) of the $\G$-coalgebra~\cite{Rutten00}. Mealy,
deterministic, partial deterministic and non\hyph deterministic
automata are, respectively, coalgebras for the functors
$\M =
(\B\times \id)^A$,
 $\D = 2 \times \id^A$, $\Pa= (1 \myplus \id)^A$ and $\N= 2
\times (\pow\id)^A $.

A {\em $\G$-homomorphism\/} from a $\G$-coalgebra $(S,f)$ to a
$\G$-coalgebra $(T,g)$ is a function $h\colon S \to T$ preserving the
transition structure, {\em i.e.} such that $g\circ h = \G(h) \circ
f$. 

\begin{definition}
A $\G$-coalgebra $(\Omega,\omega)$ is said to be {\em final} if
for any $\G$-coalgebra $(S,f)$ there exists a unique $\G$-homomorphism
 $\mathbf{beh}_S\colon S\to \Omega$. 
\end{definition}
For every non-deterministic functor $\G$
there exists a final $\G$-coalgebra $(\Omega_\G,
\omega_\G)$~\cite{Rutten00}. For instance, as we already mentioned in
the introduction, the final coalgebra for the functor $\D$ is the set
of languages $2^{A^*}$ over $A$, together with a transition function
$d\;:\;2^{A^*} \to 2\times(2^{A^*})^A $ defined as $d(\phi) =
<\phi(\epsilon), \lambda a\lambda w. \phi(aw)>$. Here $\epsilon$
denotes the empty sequence and $aw$ denotes the word resulting from
prefixing $w$ with the letter $a$. The notion of finality will play
a key role later in providing a semantics to expressions.

Given a $\G$-coalgebra $(S,f)$ and a subset $V$ of~$S$ with inclusion
map $i\colon V\to S$ we say that $V$ is a subcoalgebra of $S$ if there
exists $g\colon V\to \G(V)$ such that $i$ is a homomorphism.
Given $s\in
S$, $<s>=(T,t)$, denotes the smallest subcoalgebra generated by
$s$, with $T$ given by
\[
T = \bigcap \{V \mid V \text{ is a subcoalgebra of $S$ and } s\in V\}
\]
If the functor
$\F$ preserves arbitrary intersections, then the subcoalgebra $<s>$
exists. This will be the case for every functor considered in this
paper. Moreover, all the functors we will consider preserve
monos and thus the transition structure $t$ is
unique~\cite[Proposition 6.1]{Rutten00}. 
 
We will write $\mathit{Coalg}(\G)$ for the category of $\G$-coalgebras together
with coalgebra homomorphisms. We also write
$\mathit{Coalg}_{\mathbf{LF}}(\G)$ for the category of
$\G$-coalgebras that are {\em locally finite}. Objects are
$\G$-coalgebras
$(S,f)$ such that
for each state $s\in S$ the generated subcoalgebra $<s>$ is finite.
Maps are the usual homomorphisms of coalgebras.

Let $(S, f)$ and $(T,g)$ be two $\G$-coalgebras. We call a relation
$R \subseteq S\times T$ a {\em bisimulation\/}~\cite{HermidaJ98}
iff 
\[
<s,t>\in R \Rightarrow <f(s), g(t)>\in \overline \G(R)
\]
where $\overline \G(R)$ is defined as
 $
\overline \G(R) = \{ <\G(\pi_1)(x),\G(\pi_2)(x)> \mid x \in \G(R) \}
$. 
We write $s\sim_\G t$ whenever there exists a bisimulation relation
containing $(s,t)$ and we call $\sim_\G$ the bisimilarity relation.
We shall drop the subscript $\G$ whenever the functor $\G$ is clear
from the context. For all non-deterministic $\G$-coalgebras $(S,f)$ and $(T,g)$ and $s\in
S, t\in T$, it holds that $s\sim t \iff \mathbf{beh}_S(s) =
\mathbf{beh}_T(t)$ (the left to right implication always holds, whereas
the right to left implication only holds for certain classes of
functors, which include the ones we consider in this
paper~\cite{Rutten00,staton}). 

\section{A language of expressions for non\hyph deterministic
coalgebras}\label{sec:expressions}

In this section, we generalize the classical notion of regular
expressions to non\hyph deterministic coalgebras. We start by introducing an
untyped language of expressions and then we single out the
well-typed ones via an appropriate typing system, thereby associating
expressions to non\hyph deterministic functors. 

\begin{definition}[Expressions]
Let $A$ be a finite set, $\B$ a finite join\hyph semilattice and $X$ a set
of fixed point variables. The set $\Exp$ of all {\em expressions\/} is given
by the following grammar, where $a\in A$, $b\in \B$ and $x\in X$:
\[
\begin{array}{lcl}
\E &:: =& \emp \mid x \mid \E \oplus \E \mid \mu x.\gamma
    \mid b \mid l<\E> \mid r<\E> \mid l[\E] \mid r[\E] \mid a(\E) \mid
\{\E\}\\
\end{array}
\]
where $\gamma$ is a {\em guarded expression} given by:
\[
\begin{array}{lcl}
\gamma &:: =& \emp \mid \gamma \oplus \gamma \mid \mu
x.\gamma
    \mid b \mid l<\E> \mid r<\E> \mid l[\E] \mid r[\E] \mid a(\E) \mid
\{\E\}\\
\end{array}
\]
The only difference between the BNF of $\gamma$ and $\E$ is the
occurrence of $x$.
\end{definition}
 In the expression $\mu x.\gamma$, $\mu$ is a binder for
all the free occurrences of $x$ in $\gamma$. Variables that are not bound are
free. A {\em closed expression} is an expression
without free occurrences of fixed point variables $x$. We denote the set of closed
expressions by $\Exp^c$. 

Intuitively, expressions denote elements of the final coalgebra. The
expressions $\emp$, $\E_1\oplus \E_2$ and  $\mu x.\, \E$ will
play a similar role to, respectively, the empty language, the union
of languages and the Kleene star in classical regular expressions
for deterministic automata. The expressions $l<\E>$ and $r<\E>$ refer to the left and right hand-side
of products. Similarly, $l[\E]$ and $r[\E]$ refer to the left and right hand-side
of sums. The expressions $a(\E)$ and $\{\E\}$ denote function application and a singleton set, respectively. We shall soon
illustrate, by means of examples, the role of these expressions. Here,
it is already visible that our approach (to define a language) for the
powerset functor differs from classical modal logic where $\Box$ and
$\Diamond$ are used. 
 This is a choice, justified by the fact that our goal is to have a ``process algebra'' like 
language instead of a modal logic one. It also explains why we only consider finite powerset: 
every finite set can be written as the finite union of its singletons.

Our language does not have any operator denoting intersection or
complement (it only includes the sum operator $\oplus$). This is a
natural restriction, very much in the spirit of Kleene's regular
expressions for deterministic finite automata. We
will prove that this simple language is expressive enough to denote
exactly all locally finite coalgebras.

Next, we present a typing assignment system for associating
expressions to non\hyph deterministic functors. This will allow us to associate with each functor
$\G$ the expressions $\E\in \Exp^c$ that are valid specifications of
$\G$-coalgebras. The typing proceeds following
the structure of the expressions and the ingredients of the
functors.

\begin{definition}[Type system]\label{def:ts}
We define a typing relation $\vdash \subseteq \Exp \times \ndf
\times \ndf $ that will associate an expression $\E$ with two non\hyph
deterministic functors $\F$ and $\G$, which are related by the
ingredient relation ($\F$ is an ingredient of $\G$). We shall write
$\vdash \E\colon \F\lhd \G$ for $<\E,\F,\G> \in \;\vdash$.  The rules that define $\vdash$ are the following:
\[
\begin{array}{cccc}
\rules{}{\vdash \emp \colon \F\lhd \G }& 
\rules{}{\vdash b\colon \B\lhd \G}&
\rules{}{\vdash x \colon \G\lhd \G }&
\rules{\vdash\E\colon \G\lhd \G}
      {\vdash \mu x.\E \colon \G\lhd \G}\\\\
\rules{\vdash \E_1 \colon \F\lhd \G\;\;\;\; \vdash\E_2\colon \F\lhd
\G}{\vdash \E_1\oplus\E_2 \colon \F\lhd \G} &
\rules{\vdash \E \colon \G\lhd \G}
      {\vdash \E \colon \id\lhd \G}&
\rules{\vdash \E \colon \F \lhd \G}{\vdash \{\E\} \colon \pow \F \lhd \G}&
\rules{\vdash \E\colon \F\lhd \G}
      {\vdash a(\E) \colon \F^A\lhd \G}
\\\\
\rules{\vdash \E\colon \F_1\lhd \G}
      {\vdash l<\E> \colon \F_1\times \F_2\lhd \G}&
\rules{\vdash \E\colon \F_2\lhd \G}
      {\vdash r<\E> \colon \F_1\times \F_2\lhd \G}&
\rules{\vdash \E\colon \F_1\lhd \G}
      {\vdash l[\E] \colon \F_1\myplus \F_2\lhd \G}&
\rules{\vdash \E\colon \F_2\lhd \G}
      {\vdash r[\E] \colon \F_1\myplus \F_2\lhd \G}\end{array}
\]
\end{definition}
Intuitively, $\vdash \E\colon \F\lhd
\G$ (for a closed expression $\E$) means that $\E$ denotes an element 
 of $\F(\Omega_\G)$, where $\Omega_\G$ is the final
coalgebra of $\G$. As expected, there is a rule for each expression
construct. The extra rule involving $\id\lhd \G$ reflects
the isomorphism between the final coalgebra $\Omega_\G$ and
$\G(\Omega_\G)$ (Lambek's lemma, cf.~\cite{Rutten00}). Only fixed points at the outermost level of the
functor are allowed. This does not mean however that we disallow nested
fixed points. For instance, $\mu x.\, a(x \oplus \mu y.\, a(y))$ would be
a well-typed expression for the functor $\D$ of deterministic automata,
as it will become clear below, when we will present more examples of well-typed and non-well-typed
expressions. The presented type system is decidable
(expressions are of
finite length and the system is inductive on the structure of $\E\in
\Exp$). Note that the rules above are meant to be read as an inductive
definition rather than as an algorithm. In an eventual implementation,
extra care is needed in the case $\G=\id$, to avoid looping in the rule for
$\id\lhd \G$.  

We can formally define the set of $\G$-expressions: (closed and
guarded) well-typed
expressions associated with a non\hyph deterministic functor $\G$.

\begin{definition}[$\G$-expressions]
Let $\G$ be a non\hyph deterministic functor and $\F$ an ingredient of
$\G$.
We define $\Exp_{\F\lhd \G}$ by:
\[
\Exp_{\F\lhd \G} = \{\E \in \Exp^c \mid\ \vdash\E \colon \F\lhd \G\}\,.
\]
We define the set $\Exp_\G$ of well-typed {\em
$\G$-expressions\/} by $\Exp_{\G\lhd \G}$.
\end{definition}
Let us instantiate the definition of $\G$-expressions to the functors of
deterministic automata $\D = 2\times \id^A$.
\begin{example}[Deterministic expressions]
Let $A$ be a finite set of input actions and let $X$ be a set of
(recursion or) fixed point variables. The set $\Exp_\D$ of {\em deterministic
expressions\/} is given by the set of closed and guarded (each
variable occurs in the scope of $a(-)$) expressions generated by the following BNF grammar. For $a \in A$ and $x \in X$:
\[
\begin{array} {l@{\;}l@{\;}c@{\;}l}
\Exp_\D\ni &\E &::=&  \emp \mid \E \oplus \E \mid 
         \mu x.\E \mid 
 x \mid l<\E_1> \mid r<\E_2> \\
&\E_1 &::=& \emp \mid 0 \mid 1 \mid \E_1\oplus\E_1\\
&\E_2 &::=&  \emp \mid a(\E) \mid \E_2\oplus\E_2
\end{array}
        \]
\end{example}
Examples of well-typed expressions for the functor $\D = 2\times \id^A$
(with $2=\{0,1\}$ a two-element join-semilattice with $0$ as bottom
element; recall that the ingredients of $\D$ are $2$, $\id^A$ and $\D$
itself) include
$r<a(\emp)>$, $l<1>\oplus r<a(l<0>)>$ and $\mu x. r<a(x)> \oplus
l<1>$.
The expressions $l[1]$, $l<1>\oplus 1$ and $\mu x. 1$ are examples of
non well-typed
expressions for $\D$, because the functor $\D$ does not involve $\myplus$, the
subexpressions in the sum have different type, and
recursion is not at the outermost level ($1$ has type $2\lhd \D$),
respectively.

It is easy to see that the closed (and guarded) expressions generated
by the grammar presented above are exactly the elements of
$\Exp_\D$. The most interesting case to check is the expression
$r<a(\E)>$. Note that $a(\E)$ has type
$\id^A\lhd \D$ as long as $\E$ has type $\id \lhd \D$. And the crucial
remark here is that, by definition of $\vdash$, $\Exp_{\id\lhd
\G} \subseteq \Exp_\G$.  
Therefore, $\E$ has type $\id \lhd \D$ if it is of type
$\D\lhd \D$, or more precisely, if $\E\in \Exp_\D$, which explains why the
grammar above is correct. 

At this point, we should remark that the syntax of our expressions
differs from the classical regular expressions in the use of $\mu$
and action prefixing $a(\E)$ instead of star and full concatenation.
We shall prove later that these two syntactically different
formalisms are equally expressive (Theorems~\ref{kleene1} and
~\ref{kleene2}), but, to increase the intuition behind our expressions,
 let us present the syntactic translation from classical
regular expressions to $\Exp_\D$ (this translation is inspired
by~\cite{milner}) and back. 

\begin{definition}\label{def:regexp_to_exp} The set of regular expressions is given by the following syntax
\[
RE \ni r ::= \underline 0 \mid \underline 1 \mid a \mid r+r \mid r\cdot r \mid r^*
\]
where $a\in A$ and $\cdot$ denotes sequential composition. We define the following translations between regular expressions and deterministic expressions:
$$
\begin{array}{llp{.8cm}ll}
\multicolumn{2}{l}{(-)^\dagger\colon RE \to
\Exp_\D}&&\multicolumn{2}{l}{(-)^\ddagger\colon \Exp_\D \to RE}\\
 (\underline 0)^\dagger &= \emp &&  (\emp)^\ddagger
&= \underline 0\\
 (\underline 1)^\dagger &= l<1> && (l<\emp>)^\ddagger &=
(l<0>)^\ddagger = (r<\emp>)^\ddagger = \underline 0 \\
  (a)^\dagger &= r<a(l<1>)> && (l<1>)^\ddagger &=\underline 1 \\
 (r_1+r_2)^\dagger &= (r_1)^\dagger\oplus (r_2)^\dagger &&
(l<\E_1\oplus \E_2>)^\ddagger &=(l<\E_1>)^\ddagger+ (l<\E_2>)^\ddagger\\
 (r_1\cdot r_2)^\dagger &= (r_1)^\dagger [(r_2)^\dagger/l<1>] &&
(r<a(\E)>)^\ddagger &=a \cdot (\E)^\ddagger\\
  (r^*)^\dagger &= \mu x. (r)^\dagger[x/l<1>]\oplus l<1> &&
(r<\E_1\oplus \E_2>)^\ddagger &=(r<\E_1>)^\ddagger+
(r<\E_2>)^\ddagger\\
&&& (\E_1\oplus\E_2)^\ddagger &= (\E_1)^\ddagger+ (\E_2)^\ddagger\\
&&&  (\mu x. \E)^\ddagger &= \mathbf{sol}(\mathbf{eqs}(\mu x. \E)) 
\end{array}
$$
The function $\mathbf{eqs}$ translates $\mu x. \E$ into a system of
equations in the following way. Let $\mu x_1. \E_1, \ldots, \mu x_n.
\E_n$  be all the fixed point subexpressions of $\mu x. \E$, with
$x_1=x$ and $\E_1=\E$. We define $n$ equations $x_i =
(\overline\E_i)^\dagger$, where $\overline\E_i$ is obtained from
$\E_i$ by replacing each subexpression $\mu x_i. \E_i$ by $x_i$, for
all $i=1,\ldots n$. The solution of the system,
$\mathbf{sol}(\mathbf{eqs}(\mu x.\E))$, is then computed in the usual way (the solution of an equation of shape $x=rx+t$ is $r^*t$).

In~\cite{Rut98c}, regular expressions were given a coalgebraic
structure, using Brzozowski derivatives~\cite{Brz64}. Later in
this paper, we will provide a coalgebra structure to $\Exp_\D$, after
which the soundness of the above translations can be stated and proved: $r \sim
r^\dagger$ and $\E \sim \E^\ddagger$, where $\sim$ will coincide with language equivalence.
\end{definition}
Thus, the regular expression $aa^*$  is translated to $r<a(\mu x. r<a(x)> \oplus l<1>)>$, 
whereas the expression $\mu x. r<a(r<a(x)>)> \oplus l<1>$ is transformed into $(aa)^*$.

We present next the syntax for the
expressions in $\Exp_\Pa$ and in $\Exp_\N$ (recall that
$\Pa=(1\myplus\id)^A$ and $\N=2\times (\pow\id)^A$).

\begin{example}[Partial expressions]
Let $A$ be a finite set of input actions and $X$ be a set of (recursion or)
fixed point variables. The set $\Exp_\Pa$ of {\em partial expressions\/} is given by the set of closed and guarded expressions generated by the following BNF grammar. For $a \in A$ and $x \in X$:
\[
\begin{array} {l@{\;}l@{\;}c@{\;}l}
\Exp_\Pa\ni &\E &::=& \emp \mid  \E \oplus \E \mid 
         \mu x.\E \mid  x \mid 
         a(\E_1)\\
& \E_1&::=& \emp \mid \E_1\oplus \E_1   \mid l[\E_2] \mid r[\E] \\   
& \E_2 &::=& \emp \mid \E_2\oplus \E_2 \mid * 
\end{array}
\]
Intuitively, the expressions $a(l[*])$ and $a(r[\E])$ specify,
respectively, a state which has no
defined transition for input $a$ and a state with an outgoing
transition to another one specified by $\E$.  
\end{example}
\begin{example}[Non-deterministic expressions]
Let $A$ be a finite set of input actions and $X$ be a set of (recursion or)
fixed point variables. The set $\Exp_\N$ of {\em non\hyph deterministic expressions\/} is given by the set of closed and guarded expressions generated by the following BNF grammar. For $a \in A$ and $x \in X$:
\[
\begin{array} {l@{\;}l@{\;}c@{\;}l}
\Exp_\N \ni& \E &::=&  \emp \mid x \mid 
         r<\E_2> \mid  l<\E_1> \mid 
         \E \oplus \E \mid 
         \mu x.\E\\
&\E_1 &::=&  \emp \mid \E_1 \oplus \E_1 \mid 1 \mid 0\\
&\E_2 &::=&  \emp \mid \E_2 \oplus \E_2 \mid a(\E')\\
&\E' &::=&  \emp \mid \E' \oplus \E' \mid \{\E\}
\end{array}
\]
Intuitively, the expression $r<a(\{\E_1\}\oplus \{\E_2\})>$ specifies a
state which has two outgoing transitions labelled with the input letter $a$, one to a
state specified by $\E_1$ and another to a state specified by $\E_2$.
\end{example}

We have defined a language of expressions which gives us an
algebraic description of systems.  We should also remark at this point 
that in the examples we strictly follow the type system to derive the 
syntax of the expressions. However, it is obvious that many 
simplifications can be made in
order to obtain a more polished language. In particular, after the
axiomatization we will be able to decrease the number of levels in the
above grammars, since will we have axioms of the shape $a(\E)\oplus
a(\E') \equiv a(\E\oplus \E')$. In Section~\ref{sec:appl}, we will
sketch two examples where we apply some simplification to the syntax. 

The goal is now to present a generalization 
of Kleene's theorem for non\hyph deterministic coalgebras (Theorems~\ref{kleene1} and \ref{kleene2}). Recall that, for regular languages, the theorem states that a language is regular
if and only if it is recognized by a finite automaton. In order to
achieve our goal we will first show that the set $\Exp_\G$ of
$\G$-expressions carries a $\G$-coalgebra structure. 

\subsection{Expressions are coalgebras}\label{sec:expressions_coalg}

In this section, we show that the set of $\G$-expressions for a given
non\hyph deterministic functor $\G$ has a coalgebraic structure
$\delta_\G \colon
\Exp_\G\to \G(\Exp_\G)\,$. More precisely, we
are going to define a function
\[
\delta_{\F \lhd \G}\; : \; \Exp_{\F\lhd \G} \to \F (\Exp_\G)
\]
for every ingredient $\F$ of $\G$, and then set $\delta_\G =
\delta_{\G\lhd \G}$. Our definition of the
function $\delta_{\F \lhd \G}$ will make use of the following.
\begin{definition}For every $\G\in \ndf$ and for every $\F$ with
$\F\lhd \G$:
\begin{itemize} 
\item[(i)] we define a constant $\Empty_{\F\lhd \G} \in
\F(\Exp_{\G})$ by induction on the syntactic structure of $\F$:
\[
\begin{array}{lcl}
\Empty_{\id\lhd \G} &=& \emp \\
\Empty_{\B\lhd \G} &=& \bot_\B\\
\Empty_{\F_1\times \F_2\lhd \G} &=& <\Empty_{\F_1\lhd
\G}, \Empty_{\F_2\lhd \G}>
\end{array}
\begin{array}{lcl}
\Empty_{\F_1\myplus \F_2\lhd \G} &=& \bot\\
\Empty_{\F^A\lhd \G} &=& \lambda a. \Empty_{\F\lhd \G}\\
\Empty_{\pow \F\lhd \G} &=& \emptyset
\end{array}
\]
\item[(ii)] we define  a function $\Plus_{\F\lhd \G}\colon
\F(\Exp_{\G})\times
\F(\Exp_{\G})\to
\F(\Exp_\G)$ by induction on the syntactic structure of $\F$:
\[
\begin{array}{lcl}
\Plus_{\id\lhd \G} (\E_1, \E_2) &=& \E_1 \oplus \E_2\\
\Plus_{\B\lhd \G} (b_1,b_2) &=& b_1\vee_\B b_2\\
\Plus_{\F_1\times \F_2 \lhd \G} (<\E_1, \E_2>, <\E_3,
\E_4>)& =& <\Plus_{\F_1\lhd \G} (\E_1,\E_3),
\Plus_{\F_2\lhd \G} (
\E_2, \E_4)>\\
\Plus_{\F_1 \myplus \F_2\lhd \G} (\kappa_i(\E_1),\kappa_i(\E_2)) &=&
\kappa_i(\Plus_{\F_i\lhd \G} (\E_1,\E_2)),\ \ \ i\in \{1,2\}\\
\Plus_{\F_1 \myplus \F_2\lhd \G} (\kappa_i(\E_1),\kappa_j(\E_2)) &=& \top
\ \ \ i,j\in\{1,2\}\ \ and\ \ i\neq j\\
\Plus_{\F_1 \myplus \F_2\lhd \G} (x,\top) &=& \Plus_{\F_1
\myplus \F_2\lhd \G}
(\top,x) = \top\\
\Plus_{\F_1 \myplus \F_2\lhd \G} (x,\bot) &=& \Plus_{\F_1
\myplus \F_2\lhd \G}
(\bot,x) = x\\
\Plus_{\F^A\lhd \G}(f,g)& =& \lambda a.\ \Plus_{\F\lhd
\G}(f(a),g(a))\\
\Plus_{\pow \F\lhd \G}(s_1,s_2)&=& s_1\cup s_2
\end{array}
\]
\end{itemize}
Intuitively, one can think of the constant $\Empty_{\F \lhd \G}$
and the function $\Plus_{\F\lhd \G}$ as liftings of $\emp$
and $\oplus$ to the level of $\F(\Exp_{\G}$).
\end{definition}
We need two more things to define $\delta_{\F \lhd \G}$. First, we define an order $\preceq$ on the types of expressions. 
For $\F_1$, $\F_2$ and $\G$ non-deterministic functors such that
$\F_1\lhd \G$ and $\F_2\lhd \G$, we define $$(\F_1\lhd \G) \preceq
(\F_2\lhd \G) \Leftrightarrow \F_1\lhd \F_2$$
The order $\preceq$ is a partial order (structure inherited from
$\lhd$). Note also that  $(\F_1\lhd \G) = (\F_2\lhd \G) \Leftrightarrow \F_1=\F_2$. Second, we define a measure $N(\E)$ based on the maximum number of nested unguarded occurrences of $\mu$-expressions in $\E$ and unguarded occurrences of $\oplus$.  We say that a subexpression $\mu x.\E_1$ of $\E$  occurs unguarded if it is not in the scope of one of the operators $l<->$, $r<->$, $l[-]$, $r[-]$, $a(-)$ or $\{-\}$. 
\begin{definition}\label{def:N} For every guarded expression $\E$, we define $N(\E)$ as follows:
\begin{align*}
&N(\emp) = N(b) = N(a(\E)) = N(l<\E>) = N(r<\E>) = N(l[\E]) =
N(r[\E]) = N(\{\E\})= 0
\\
&N(\E_1 \oplus \E_2) = 1 + \mathit{max} \{ N(\E_1) , \, N(\E_2) \}\\
& N(\mu x.\E) = 1 + N(\E)
\end{align*}
\end{definition}
\noindent The measure $N$ induces a partial order on the set of expressions: $\E_1\ll \E_2 \Leftrightarrow N(\E_1) \leq N(\E_2)$, where $\leq$ is just the ordinary inequality of natural numbers. 

Now we have all we need to define $\delta_{\F \lhd \G}\colon
\Exp_{\F\lhd \G}\to \F(\Exp_\G)$.
\begin{definition}
For every ingredient $\F$ of a non\hyph deterministic functor
$\G$ and an expression $\E\in \Exp_{\F\lhd \G}$, we define $\delta_{\F\lhd
\G}(\E)$ as follows:
\[
\begin{array}{lcl}
\delta_{\F\lhd \G}  (\emp) &=& \Empty_{\F\lhd \G}\\
\delta_{\F\lhd \G}  (\E_1\oplus\E_2) &=& \Plus_{\F\lhd \G}(\delta_{\F\lhd
\G}  (\E_1), \delta_{\F\lhd \G}  (\E_2))\\
\delta_{\G\lhd \G}  (\mu x. \E) &=& \delta_{\G\lhd \G} (\E[\mu x.
\E/x])\\
\delta_{\id \lhd \G}(\E) &=& \E\ \ \ \textrm{for} \ \G\neq \id\\
\delta_{\B \lhd \G} (b) &=& b\\
\delta_{\F_1\times \F_2 \lhd \G} (l<\E>) &=& <
\delta_{\F_1\lhd \G}(\E), \Empty_{\F_2\lhd \G} >\\
\delta_{\F_1\times \F_2 \lhd \G} (r<\E>) &=& <\Empty_{\F_1\lhd \G},
\delta_{\F_2\lhd \G}(\E) >\\
\delta_{\F_1\myplus \F_2 \lhd \G} (l[\E]) &=& \kappa_1(\delta_{\F_1
\lhd \G}(\E))\\
\delta_{\F_1\myplus \F_2 \lhd \G} (r[\E]) &=& \kappa_2(\delta_{\F_2
\lhd \G}(\E))\\
\delta_{\F^A \lhd \G} (a(\E)) &=& \lambda a' .\left\{
\begin{array}{ll}\delta_{\F \lhd \G} (\E)
&\text{if }a=a'\\\Empty_{\F\lhd \G}&\text{otherwise}\end{array}\right.
\\
\delta_{\pow \F\lhd \G} (\{\E\}) &=& \{\,\delta_{\F
\lhd \G}(\E)\,\}
\end{array}
\]
Here, $\E[\mu x.\E/x]$ denotes syntactic substitution,
replacing every free occurrence of $x$  in $\E$ by $\mu x.\E$.
\end{definition}
In order to see that the definition of $\delta_{\F\lhd \G}$ is
well-formed, we have to observe that $\delta_{\F\lhd \G}$ can be seen
as a function having two arguments: the type $\F\lhd \G$ and the
expression $\E$. Then, we use induction on the Cartesian product of
types and expressions with orders $\preceq$ and $\ll$, respectively.
More precisely, given two pairs $<\F_1\lhd \G, \E_1>$ and  $<\F_2\lhd \G, \E_2>$ we have an order 
\begin{equation}\label{ind:order5}
\begin{array}{lcll}
<\F_1\lhd \G, \E_1> \leq <\F_2\lhd \G, \E_2> &\Leftrightarrow&&
\text{(i) } (\F_1\lhd \G) \preceq (\F_2\lhd \G) \\ &&\text{ or }&
\text{(ii) } (\F_1\lhd \G) = (\F_2\lhd \G)\text{ and } \E_1\ll \E_2
\end{array}
\end{equation}
Observe that in the definition above it is always true that
$<\F'\lhd \G, \E'> \leq <\F\lhd \G, \E>$, for all occurrences of
$\delta_{\F'\lhd \G}(\E')$ occurring in the right hand side of the
equation defining $\delta_{\F\lhd \G}(\E)$. In all cases, but the ones
that $\E$ is a fixed point or a sum expression, the inequality comes
from point (i) above. For the case of the sum, note that $<\F\lhd
\G,\E_1> \leq <\F\lhd \G,  \E_1\oplus\E_2>$ and  $<\F\lhd \G, \E_2> \leq
<\F\lhd \G,  \E_1\oplus\E_2>$ by point (ii), since $N(\E_1) \ls  N(\E_1\oplus\E_2)$ and $N(\E_2) \ls N(\E_1\oplus\E_2)$. Similarly, in the case of $\mu x.\E$ we
have that $N(\E) = N(\E[\mu x.\E/x])$, which can easily be proved by (standard) induction on the
syntactic structure of $\E$, since $\E$ is guarded (in $x$), and this
guarantees that $N(\E[\mu x.\E/x]) \ls N(\mu x. \E)$. Hence, $<\G\lhd
\G,\E> \leq <\G\lhd \G,\mu x.\E>$. Also note that clause
4 of the above definition overlaps with clauses 1 and 2 (by taking
$\F=\id$). However, they give the same result and
thus the function $\delta_{\F\lhd \G}$ is well-defined.

\begin{definition}
We define, for each non\hyph deterministic functor $\G$, a $\G$-coalgebra
\[
\delta_\G\colon \Exp_\G \to \G(\Exp_\G)
\]
by putting $\delta_\G = \delta_{\G\lhd \G}$.
\end{definition}
The function $\delta_\G$ can be thought of as the generalization of
the well-known notion of Brzozowski derivative~\cite{Brz64} for regular
expressions and, moreover, it provides an operational semantics for
expressions, as we shall see in Section~\ref{sec:expressive}.

The observation that the set of expressions has a coalgebra structure
will be crucial for the proof of the generalized Kleene theorem,
as will be shown in the next two sections.

\subsection{Expressions are expressive}\label{sec:expressive}

Having a $\G$-coalgebra structure on $\Exp_\G$ has two advantages. First,
it provides us, by finality, directly with a natural semantics
because of the existence of a (unique) homomorphism $\mathbf{beh} \colon \Exp_\G
\to \Omega_\G $, that
 assigns to every expression $\E$ an element $\mathbf{beh}(\E)$ of the
final coalgebra $\Omega_\G$.

The second advantage of the coalgebra structure on $\Exp_\G$ is that
it lets us use the notion of $\G$-bisimulation to relate
$\G$-coalgebras $(S,g)$ and expressions $\E\in \Exp_\G$. If one can
construct a bisimulation relation between an expression $\E$ and a
state $s$ of a given coalgebra, then the behaviour represented by
$\E$ is equal to the behaviour of the state $s$. This is the
analogue of computing the language $L(r)$ represented by a given
regular expression $r$ and the language $L(s)$ accepted by a state
$s$ of a finite state automaton and checking whether $L(r) = L(s)$.

The following theorem states that every state in a locally finite
$\G$-coalgebra can be represented by an expression in our language.
This generalizes {\em half} of Kleene's theorem for deterministic automata: if a language is
accepted by a finite automaton then it is regular ({\em i.e.} it can
be denoted by a regular expression). The
generalization of the other {\em half} of the theorem (if a language
is regular then it is accepted by a finite automaton) will be
presented in Section~\ref{sec:synthesis}. It is worth to remark that
in the usual definition of deterministic automaton the initial state
of the automaton is included and, thus, in the original Kleene's
theorem, it was enough to consider finite automata. In the
coalgebraic approach, the initial state is not explicitly modelled and
thus we need to consider locally-finite coalgebras: coalgebras where each state will generate a finite subcoalgebra.
\begin{theorem}\label{kleene1}
Let $\G$ be a non\hyph deterministic functor and let $(S,g)$ be a
locally-finite $\G$-coalgebra. Then, for any $s \in S$, there exists an
expression $\expr s \in \Exp_\G$ such that $s\sim \expr s$. 
\end{theorem}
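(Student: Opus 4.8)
The plan is to localise to the finite part of the coalgebra seen by $s$, to read off its transition structure as a finite guarded system of equations over $\Exp_\G$, and then to solve that system with fixed points, using the solution for $s$ as one half of an explicit bisimulation. First I would invoke local finiteness: the generated subcoalgebra $\langle s\rangle$ is finite, say with carrier $\{s_1,\dots,s_n\}$ and $s=s_1$, and its transition map (the restriction of $g$) satisfies $g(s_i)\in\G(\{s_1,\dots,s_n\})$. Since a bisimulation between $\Exp_\G$ and this finite carrier is a fortiori a bisimulation against $(S,g)$, it suffices to produce an expression bisimilar to $s_1$ inside $\langle s\rangle$. I also fix one fixed-point variable $x_i$ per state $s_i$.

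Next I would turn each transition $g(s_i)$ into an expression. By induction on the ingredient $\F\lhd\G$ I define a description map sending $p\in\F(\{s_1,\dots,s_n\})$ to $\lceil p\rceil_\F\in\Exp_{\F\lhd\G}$ with free variables among $x_1,\dots,x_n$: a state $s_j$ (at an $\id$-position) becomes $x_j$; $b\in\B$ becomes $b$; a pair becomes $l<\lceil\cdot\rceil>\oplus r<\lceil\cdot\rceil>$; the injections become $l[\cdot]$ and $r[\cdot]$, with $\bot\mapsto\emp$ and $\top\mapsto l[\emp]\oplus r[\emp]$ (which $\delta$ sends back to $\top$ through $\Plus$); a function $\phi$ becomes $\bigoplus_{a\in A}a(\lceil\phi(a)\rceil)$; and a finite set becomes the $\oplus$ of the singletons of its elements, the empty set giving $\emp$. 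Putting $\E_i:=\lceil g(s_i)\rceil_\G$ yields equations $x_i=\E_i$. Every variable now sits under one of the guards $l<->$, $r<->$, $l[-]$, $r[-]$, $a(-)$, $\{-\}$ — this is automatic once $\G\neq\id$, the degenerate case $\G=\id$ being settled outright by $\expr{s}=\emp$ — so each $\E_i$ is guarded. I would then record a compatibility lemma, proved by induction on $\F$ using that $\Empty_{\F\lhd\G}$ is a unit for $\Plus_{\F\lhd\G}$: for every closed substitution $j\colon s_k\mapsto e_k$ and every $p\in\F(\{s_1,\dots,s_n\})$,
\[
\delta_{\F\lhd\G}\bigl(\lceil p\rceil_\F[\vec e/\vec x]\bigr)=\F(j)(p).
\]

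I would then solve the guarded system by Gaussian elimination: guardedness of $\E_n$ in $x_n$ lets me take $\hat e_n=\mu x_n.\E_n$, substitute it for $x_n$ in $\E_1,\dots,\E_{n-1}$ (guardedness is preserved, as the eliminated occurrences were guarded), and recurse on the remaining variables; back-substitution produces closed, well-typed solutions $e_1,\dots,e_n\in\Exp_\G$, and I set $\expr{s}:=e_1$. I claim $R=\{(e_i,s_i)\mid 1\le i\le n\}$ is a bisimulation. Granting the key identity $\delta_\G(e_i)=\G(j)(g(s_i))$, where $j(s_k)=e_k$, the function $\langle j,\mathit{id}\rangle\colon\{s_1,\dots,s_n\}\to R$ carries $g(s_i)$ to $w=\G(\langle j,\mathit{id}\rangle)(g(s_i))\in\G(R)$, and functoriality gives $\G(\pi_1)(w)=\G(j)(g(s_i))=\delta_\G(e_i)$ and $\G(\pi_2)(w)=g(s_i)$, so $<\delta_\G(e_i),g(s_i)>\in\overline\G(R)$. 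Thus $R$ is a bisimulation, and $(e_1,s_1)\in R$ gives $s=s_1\sim e_1=\expr{s}$.

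The main obstacle is the key identity $\delta_\G(e_i)=\G(j)(g(s_i))$ for the fully eliminated solutions. By the compatibility lemma it is equivalent to $\delta_\G(e_i)=\delta_\G(\E_i[\vec e/\vec x])$, i.e.\ to the fact that the $\mu$-solutions satisfy their own equations one unfolding step deep. For a single equation this is exactly the clause $\delta_\G(\mu x.\E)=\delta_\G(\E[\mu x.\E/x])$; for the whole system it needs a substitution lemma saying that $\delta$ commutes with substitution of guarded expressions into guarded positions — which is precisely where guardedness, carefully threaded through the elimination, is indispensable. I expect this bookkeeping, not any conceptual gap, to be the technical heart of the argument.
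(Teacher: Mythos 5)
Your proposal is correct and is essentially the paper's own proof: your description map $\lceil - \rceil_\F$ coincides clause by clause with the paper's $\gamma^{\F}_{c}$ (including $\bot \mapsto \emp$ and $\top \mapsto l[\emp]\oplus r[\emp]$), your Gaussian elimination with back-substitution is the paper's iteration $A^{k+1}_i = A^k_i\{A^k_{k+1}/x_{k+1}\}$ ending in $\expr{s_i} = A^n_i$, and you treat the degenerate case $\G=\id$ the same way. Your compatibility lemma combined with the one-step unfolding of the $\mu$-solutions is a functional repackaging of the paper's two claims \mycirc{$1$} $R_{\F\lhd \G}=\overline\F(R)$ and \mycirc{$2$} $<g(s_i), \delta_\G(\expr{s_i})>\in R_{\G\lhd \G}$, resting on the same substitution identities (the paper's~(\ref{eq:subst1}) and~(\ref{eq:subst2})) and on the same use --- which you at least make explicit, where the paper leaves it tacit --- of $\Empty_{\F\lhd \G}$ acting as a unit for $\Plus_{\F\lhd \G}$.
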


\proof 
Let $s\in S$ and let $<s>= \{s_1, \ldots, s_n\}$ with $s_1=s$.
We construct, for every state $s_i \in <s>$, an expression $\expr {s_i} $ such that $s_i \sim \expr {s_i} $ .

If $\G= \id$, we set, for every $i$, $\expr {s_i} = \emp$.  It is easy to see that $\{<s_i,\emp> \mid s_i\in <s> \}$ is a bisimulation and, thus, we have that $s \sim \expr s$.

For $\G\neq \id$, we proceed in the following way. Let, for every $i$,
$A_i = \mu x_i. \gamma^\G_{g(s_i)}$ where, for $\F\lhd \G$ and $c\in \F
<s>$, the expression $\gamma^\F_{c}\in \Exp_{\F\lhd \G}$ is defined by
induction on the structure of~$\F$:
\[
\begin{array}{llll}
\gamma^{\id}_{s_i} = x_i & \gamma^{\B}_b = b &
{\gamma^{\F_1\times \F_2}_{<c,c'>} = l<\gamma^{\F_1}_{c}> \oplus
r<\gamma^{\F_2}_{c'}>} & \gamma^{\F^A}_f =  \bigoplus\limits_{a\in A}
a(\gamma^\F_{f(a)})
\\[2ex]
\gamma^{\F_1\myplus \F_2}_{\kappa_1(c)} = l[\gamma^{\F_1}_{c}] & 
\gamma^{\F_1\myplus \F_2}_{\kappa_2(c)} = r[\gamma^{\F_2}_{c}] &
\gamma^{\F_1\myplus \F_2}_{\bot} = \emp & 
\gamma^{\F_1\myplus \F_2}_{\top} = l[\emp]\oplus r[\emp] \\[2ex]
\multicolumn{2}{l}{\gamma^{\pow \F}_{C} = \begin{cases}
\bigoplus\limits_{c\in C}
\{\gamma^\F_{c}\} & C\neq\emptyset\\\emp &\text{otherwise}\end{cases} }
\end{array}
\]
Note that here the choice of $l[\emp]\oplus
r[\emp]$ to 
represent inconsistency is arbitrary but {\em canonical}, in the
sense that any other expression involving sum of $l[\E_1]$ and
$r[\E_2]$ will be bisimilar. Formally, the definition of $\gamma$
above is parametrized by a function from $\{s_1,\ldots, s_n\}$ to a
fixed set of
variables $\{x_1,\ldots, x_n\}$. It should also be noted that 
$\bigoplus\limits_{i\in I} \E_i$ stands for $\E_1 \oplus (\E_2 \oplus
(\E_3 \oplus
\ldots))$ (this is a choice, since later we will axiomatize $\oplus$
to be commutative and associative).

Let $A^0_i = A_i$, define $A_i^{k+1} =
A_i^k\{A^k_{k+1}/x_{k+1}\}$ and then set  $\expr{s_i} = A_i^n$. Here,
$A\{A'/x\}$ denotes syntactic replacement (that is, substitution
without renaming of bound variables in $A$ which are also free
variables in $A'$). The definition of $\expr{s_i}$ does not depend in
the chosen order of $\{s_1,\dots, s_n\}$: the expressions obtained are
just different modulo renaming of variables. 

Observe that  the term 
\[
A^n_i = ( \mu x_i.  \gamma^\G_{g(s_i)}) \{A^0_{1}/x_{1}\} \ldots \{A^{n-1}_{n}/x_{n}\} 
\]
is a closed term because, for every $j = 1,\ldots, n$, the term $A_j^{j-1}$ contains at most $n-j$ free variables in the set $\{x_{j+1}, \ldots, x_n\}$.

It remains to prove that $s_i\sim \expr {s_i}$. We show that 
 \mbox{$
R= \{ <s_i, \expr{s_i} >\mid s_i\in <s>\} $ }
is a bisimulation. For that, we define, for $\F\lhd \G$ and $c\in
\F<s>$, $\xi^\F_{c} = \gamma^\F_{c}  \{A^0_{1}/x_{1}\} \ldots \{A^{n-1}_{n}/x_{n}\}$ and the relation
\[
R_{\F\lhd \G} = \{<c, \delta_{\F\lhd \G}(\xi^\F_{c})> \mid c \in \F<s> \}.
\]
Then, we  prove that \mycirc{$1$} $R_{\F\lhd \G} = \overline \F(R)$ and \mycirc{$2$} $<g(s_i), \delta_\G(\expr{s_i})> \in R_{\G\lhd \G}$.

\begin{itemize}
\item[\mycirc{$1$}] By induction on the structure of $\F$.

 \fboxsep=2pt
\fbox{$\F=\id$} Note that $R_{\id\lhd \G} = \{<s_i, \xi^{\id}_{s_i}> \mid s_i\in <s>\}$  which is equal to $\id(R)=R$ provided that $\xi^{\id}_{s_i} = \expr{s_i}$. The latter is indeed the case:
\[
\begin{array}{lcll}
\xi^{\id}_{s_i} &=& \gamma^{\id}_{s_i}  \{A^0_{1}/x_{1}\} \ldots \{A^{n-1}_{n}/x_{n}\} & \text{(def. $\xi^{\id}_{s_i}$)}\\[1ex]
&=& x_i  \{A^0_{1}/x_{1}\} \ldots \{A^{n-1}_{n}/x_{n}\} & \text{(def. $\gamma^{\id}_{s_i}$)}\\[1ex]
&=& A^{i-1}_i \{A^{i}_{i+1}/x_{i+1}\} \ldots \{A^{n-1}_{n}/x_{n}\} & \text{($\{A_i^{i-1}/x_i\}$)} \\[1ex]
&=& A^0_i \{A^0_{1}/x_{1}\} \ldots \{A^{n-1}_{n}/x_{n}\}& \text{(def. $A^{i-1}_i$)}\\[1ex]
&=& \expr{s_i}& \text{(def. $\expr{s_i}$)}
\end{array} 
\]

\fbox{$\F=\B$} Note that, for $b\in \B$, $\xi^\B_b = \gamma^\B_b
\{A^0_{1}/x_{1}\} \ldots \{A^{n-1}_{n}/x_{n}\} = b$. Thus, we have
that $R_{\B\lhd \G} = \{<s_i, \xi^\B_{s_i}> \mid s_i\in \B <s>\} =  \{<b, b> \mid b\in \B\}  = \overline \B(R)$.  

\fbox{$\F=\F_1\times \F_2$} 
\[
\begin{array}{@{}lcll}
&&<<u,v>, <e,f>> \in \overline{\F_1\times \F_2}(R) \\&\iff& <u,e> \in
\overline{\F_1}(R) \text{ and }  <v,f> \in  \overline{\F_2}(R)&\text{(def. $\overline{\F_1\times \F_2}$)}\\[1.5ex]
&\iff&  <u,e>  \in R_{\F_1 \lhd \G} \text{ and } <v,f>  \in  R_{\F_2 \lhd \G}& \text{(ind. hyp.)}\\[1.5ex]
&\iff& <u,e>  = <c , \delta_{\F_1\lhd \G}(\xi^{\F_1}_{c})> \text{ and }
<v,f>  = <c' , \delta_{\F_2\lhd \G}(\xi^{\F_2}_{c'})>&\text{(def. $R_{\F_i\lhd \G}$)}\\[1.5ex]
&\iff& <u,v> = <c,c'> \text{ and }  <e,f> = \delta_{\F_1\times
\F_2\lhd \G}(l(\xi^{\F_1}_{c}) \oplus r(\xi^{\F_2}_{c'}))&\text{(def. $\delta_{\F\lhd \G}$)}\\[1.5ex]
&\iff& <u,v> = <c,c'> \text{ and }  <e,f> = \delta_{\F_1\times
\F_2\lhd \G}(\xi^{\F_1\times \F_2}_{<c,c'>})&\text{(def. $\xi^\F$)}\\[1.5ex]
&\iff&  <<u,v>, <e,f>> \in R_{\F_1\times \F_2\lhd \G} 
\end{array}
\]

\fbox{$\F=\F_1\myplus \F_2$}, 
\fbox{$\F=\F_1^A$} and 
\fbox{$\F=\pow \F_1$}: similar to $\F_1\times \F_2$.

\item[\mycirc{$2$}]  We want to prove that $<g(s_i),
\delta_\G(\expr{s_i})> \in R_{\G\lhd \G}$. For that, we must show that
$g(s_i) \in \G<s>$ and $\delta_\G(\expr{s_i}) =
\delta_\G(\xi^\G_{g(s_i)})$. The former follows by definition of $<s>$,
whereas for the latter we observe that:
{\small
\begin{center}
\begin{longtable}{lll}
&\ \ \ \; $\delta_\G(\expr{s_i})$ \\
&$= \delta_\G((\mu x_i. \gamma^\G_{g(s_i)})  \{A^0_{1}/x_{1}\} \ldots
\{A^{n-1}_{n}/x_{n}\}) $&$ \text{(def. of $\expr{s_i}$)}$\\[1.5ex]
&$=\delta_\G(\mu x_i. \gamma^\G_{g(s_i)}  \{A^0_{1}/x_{1}\} \ldots
\{A^{i-2}_{i-1}/x_{i-1}\}  \{A^i_{i+1}/x_{i+1}\}
\ldots\{A^{n-1}_{n}/x_{n}\})$\\[1.5ex]
&$=\delta_\G(\gamma^\G_{g(s_i)}  \{A^0_{1}/x_{1}\} \ldots
\{A^{i-2}_{i-1}/x_{i-1}\}  \{A^i_{i+1}/x_{i+1}\}
\ldots\{A^{n-1}_{n}/x_{n}\}[A^n_i/x_i])$&$  \text{(def. of
$\delta_\G$)}$\\[1.5ex]
&$=\delta_\G(\gamma^\G_{g(s_i)}  \{A^0_{1}/x_{1}\} \ldots
\{A^{i-2}_{i-1}/x_{i-1}\}  \{A^i_{i+1}/x_{i+1}\}
\ldots\{A^{n-1}_{n}/x_{n}\}\{A^n_i/x_i\})$&$  \text{($[A^n_i/x_i] =
\{A^n_i/x_i\}$)}$\\[1.5ex]
&$=\delta_\G(\gamma^\G_{g(s_i)}  \{A^0_{1}/x_{1}\} \ldots
\{A^{i-2}_{i-1}/x_{i-1}\} \{A^n_i/x_i\} \{A^i_{i+1}/x_{i+1}\}
\ldots\{A^{n-1}_{n}/x_{n}\})$&\\[1.5ex]
&$=\delta_\G(\xi^\G_{g(s_i)})$
\end{longtable}
\end{center}
}%
Here, note that $[A^n_i/x_i] = \{A^n_i/x_i\}$, because $A^n_i$ has no free variables.
The last two steps follow, respectively, because $x_i$ is not free in $A^i_{i+1},\ldots, A^{n-1}_{n}$ and: 
\begin{eqnarray}\nonumber
 && \{A^n_i/x_i\} \{A^i_{i+1}/x_{i+1}\} \ldots\{A^{n-1}_{n}/x_{n}\} \\
 \nonumber  &=& \{A^{i-1}_i \{A^i_{i+1}/x_{i+1}\} \ldots\{A^{n-1}_{n}/x_{n}\} /x_i\} \{A^i_{i+1}/x_{i+1}\} \ldots\{A^{n-1}_{n}/x_{n}\} \\[1.5ex]
 \label{eq:subst1} &=& \{A^{i-1}_i/x_i\} \{A^i_{i+1}/x_{i+1}\} \ldots\{A^{n-1}_{n}/x_{n}\}
  \end{eqnarray}
Equation~(\ref{eq:subst1}) uses the syntactic identity
\begin{equation}\label{eq:subst2}A\{B\{C/y\}/x\}\{C/y\} =
A\{B/x\}\{C/y\},\ \ \ y \text{ not free in $C$}\end{equation} 
\end{itemize}
\qed
Let us illustrate the construction appearing in the proof of Theorem~\ref{kleene1} by some examples. 
These examples will illustrate the similarity with the proof of Kleene's Theorem presented in most textbooks, where a regular expression denoting the language recognized by a state of a deterministic automaton is built using a system of equations.

Consider the following
deterministic automaton over $A=\{a,b\}$, whose transition function $g$ is
given by the following picture ($\xymatrix{*++[o][F=]{s}}$
represents that the state $s$ is final):
\[
\xymatrix{*++[o][F]{s_1} \ar[r]^{a}\ar@(l,u)[]^{b}&*++[o][F=]{s_2}
\ar@(r,u)[]_{a, b}}
\]
We define $A_1 = \mu x_1  . \gamma^\D_{g(s_1)}$ and
$A_2 = \mu x_2  . \,  \gamma^\D_{g(s_2)}$ where
\[
 \gamma^\D_{g(s_1)} = l<0> \oplus r<b(x_1) \oplus a(x_2)> \;\;\;\;\;
\gamma^\D_{g(s_2)} =
l<1> \oplus r<a(x_2) \oplus b(x_2)>
\]
We have $A^2_1 = A_1 \{A^1_2/x_2\}$ and $A^2_2 = A_2 \{A^0_1/x_1\} $.
Thus, $\expr {s_2} = A_2$ and, since $A^1_2=A_2$,  $\expr {s_1}$ is the expression 
\[
 \mu x_1 . \, l<0> \oplus r<b(x_1) \oplus a(\mu x_2 . \, l<1> \oplus r<a(x_2) \oplus
b(x_2)>
)>\]
By construction we have $s_1 \sim \expr{s_1}$ and
$s_2 \sim \expr{s_2}$.

For another example, take the following partial automaton, also over a two letter alphabet $A=\{a,b\}$:
\[
\xymatrix{*++[o][F]{q_1} \ar[r]^{a}& *++[o][F]{q_2}
\ar@(r,u)[]_{b}}
\]
In the graphical representation of a partial
automaton $(S,p)$ we omit transitions for which $p(s)(a)
=\kappa_1(*)$. In this case, this happens in $q_1$ for the input
letter $b$ and in $q_2$ for $a$.

We will have the equations
\[
\begin{array}{l}
A_1 = A^0_1 = A^1_1 =  \mu x_1  . b(l[*]) \oplus a(r[x_2]) \\[1ex]
A_2 = A^0_2 = A^1_2 =  \mu x_2  . a(l[*]) \oplus b(r[x_2])
\end{array}
\]
Thus:
\[
\begin{array}{l}
\expr{s_1} = A^2_1 = \mu x_1 . \, b(l[*]) \oplus a(r[\mu x_2 . \,
a(l[*]) \oplus b(r[x_2])])\\
\expr{s_2} = \mu x_2  . a(l[*]) \oplus b(r[x_2])
\end{array}
\]
Again we have $s_1 \sim \expr{s_1}$ and $s_2 \sim \expr {s_2}$. 

As a last example, let us consider the following non\hyph deterministic automaton, over a one letter alphabet $A=\{a\}$:
\[
\xymatrix{
*++[o][F]{s_1} \ar@(l,u)[]^{a}\ar@/^/[dr]^{a}\ar[rr]^{a}&& *++[o][F]{s_2}\ar@(r,u)[]_{a} \ar[dl]_{a} \\
&*++[o][F=]{s_3}\ar@(r,d)[]^{a}\ar@/^/[ul]^{a}&}
\]
We start with the equations:
\[
\begin{array}{lcl}
A_1&=&\mu x_1. l<0> \oplus r<a(\{x_1\} \oplus \{x_2\} \oplus \{x_3\})>\\
A_2&=&\mu x_2. l<0> \oplus r<a( \{x_2\} \oplus \{x_3\})>\\
A_3&=&\mu x_3. l<1> \oplus r<a(\{x_1\} \oplus \{x_3\})>
\end{array}
\]
Then we have the following iterations:
\[
\begin{array}{l}
A_1^1 = A_1\\[1ex]
A_1^2 = A_1 \{A_2^1/x_2\} = \mu x_1. l<0> \oplus r<a(\{x_1\} \oplus \{A_2\} \oplus \{x_3\})>\\[1ex]
A_1^3 = A_1 \{A_2^1/x_2\} \{A_3^2/x_3\} =  \mu x_1. l<0> \oplus r<a(\{x_1\} \oplus \{(A_2\{A_3^2/x_3\})\} \oplus \{A_3^2\})>\\
\\
A_2^1 =  A_2 \{A_1/x_1\} = A_2\\[1ex]
A_2^2 = A_2 \{A_1/x_1\} = A_2\\[1ex]
A_2^3 = A_2 \{A_1/x_1\}\{A_3^2/x_3\} = \mu x_2. l<0> \oplus r<a( \{x_2\} \oplus \{A_3^2\})>\\
\\
A_3^1 = A_3  \{A_1/x_1\} = \mu x_3. l<1> \oplus r<a(\{A_1\} \oplus \{x_3\})>\\[1ex]
A_3^2 = A_3  \{A_1/x_1\}\{A_2^1/x_2\} = \mu x_3. l<1> \oplus r<a(\{(A_1\{A_2^1/x_2\}) \} \oplus \{x_3\})>\\[1ex]
A_3^3 = A_3^2\\
\end{array}
\]
This yields the following expressions:
{\small
\[
\begin{array}{@{}l}
\expr{s_1} =  \mu x_1. l<0> \oplus r<a(\{x_1\} \oplus \{\expr{s_2}\} \oplus \{\expr{s_3}\})>\\
\expr{s_2} = \mu x_2. l<0> \oplus r<a( \{x_2\} \oplus \{\expr{s_3}\})>\\ 
\expr{s_3} = \mu x_3. l<1> \oplus r<a(\{ \mu x_1. l<0> \oplus r<a(\{x_1\} \oplus \{\mu x_2. l<0> \oplus r<a( \{x_2\} \oplus \{x_3\})>\} \oplus \{x_3\})> \} \oplus \{x_3\})>\\
\end{array}
\]
}
\subsection{Finite systems for expressions}\label{sec:synthesis}
Next, we prove the converse of
Theorem~\ref{kleene1}, that is, we
show how to construct a \emph{finite}
$\G$-coalgebra $(S,g)$ from an arbitrary expression $\E\in\Exp_\G$, such
that there exists a state $s\in S$ with $\E \sim_\G s$.

The immediate way of
obtaining a coalgebra from an expression $\E\in \Exp_\G$ is to compute
the subcoalgebra $<\E>$, since we have provided the set $\Exp_\G$ with a
coalgebra structure $\delta_\G\colon \Exp_\G\to \G(\Exp_\G)$. 
 However, the subcoalgebra
generated by an expression $\E\in \Exp_\G$ by repeatedly
applying
$\delta_\G$ is, in general, infinite. Take for instance the
deterministic expression $\E_1 = \mu x.\, r<a(x \oplus \mu y.\,
r<a(y)>)>$ (for simplicity, we consider $A=\{a\}$ and below we will
write, in the second component of $\delta_\D$, an expression $\E$
instead of the function mapping $a$ to $\E$) and
observe that:
\begin{center}
\scalebox{.975}{$
\begin{array}{@{}lcl}
\delta_\D(\E_1) &=& <0, \E_1 \oplus \mu y.\, r<a(y)>>\\
\delta_\D(\E_1 \oplus \mu y.\, r<a(y)>) &=& <0, \E_1 \oplus \mu
y.\, r<a(y)> \oplus \mu y.\, r<a(y)>>\\
\delta_\D(\E_1 \oplus \mu y.\, r<a(y)> \oplus \mu y.\, r<a(y)>) &=&
<0, \E_1 \oplus \mu
y.\, r<a(y)> \oplus \mu y.\, r<a(y)>\oplus \mu y.\, r<a(y)>>\\
&\vdots&
\end{array}$
}
\end{center}
As one would expect, all the new states are equivalent  and will be
identified by $\mathbf{beh}$ (the morphism into the final coalgebra). However, the 
function $\delta_\D$ does not make any state 
identification and thus yields an infinite coalgebra.

This phenomenon occurs also in classical regular expressions. It was
shown in~\cite{Brz64} that normalizing
the expressions using the axioms for associativity, commutativity and
idempotency was enough to guarantee finiteness\footnote{Actually, to
guarantee finiteness, similar to classical regular expressions, it
is enough to eliminate double occurrences of expressions $\E$ at the
outermost level of an expression $\cdots \oplus \E \oplus \cdots \oplus \E
\oplus \cdots$ (and to do this one needs the ACI axioms). Note that
this is weaker than taking expressions modulo the ACI
axioms: for instance, the expressions $\E_1\oplus \E_2$ and $\E_2\oplus
\E_1$, for $\E_1\neq \E_2$, would not be identified in the process
above. }. We will show in this
section that this also holds in our setting. 

Consider the following axioms (only the first three are essential, but we include
the fourth to obtain smaller coalgebras):
\[
\begin{array}{llll}
\mathit{(Associativity)} &  \E_1 \oplus (\E_2 \oplus \E_3)  \equiv
(\E_1 \oplus
\E_2) \oplus \E_3\\
\mathit{(Commutativity)} &  \E_1 \oplus \E_2  \equiv \E_2 \oplus
\E_1\\
\mathit{(Idempotency)}   &  \E \oplus \E  \equiv \E\\
\mathit{(Empty)}        &   \emp \oplus \E \equiv \E
\end{array}
\]
We define the relation $\equiv_{\mathit{ACIE}} \subseteq \Exp_{\F\lhd
\G}\times \Exp_{\F\lhd \G}$, written infix, as the least equivalence
relation containing the four identities above. The relation
$\equiv_{\mathit{ACIE}}$ gives rise to the (surjective) equivalence
map $[\E]_{\mathit{ACIE}} = \{\E'\mid \E\equiv_{\mathit{ACIE}}
\E'\}$. The following diagram shows the maps defined so far:
\[
\xymatrix{
\Exp_{\F\lhd \G}\ar[d]_{\delta_{\F\lhd \G}} \ar[rr]^-{[-]_{\mathit{ACIE}}}
&&
\Exp_{\F\lhd \G}/_{\equiv_{\mathit{ACIE}}}\\
\F(\Exp_{\G})\ar[rr]_-{\F([-]_{\mathit{ACIE}})} && \F (\Exp_\G/_{\equiv_{\mathit{ACIE}}}) 
}
\]
In order to complete the diagram, we next prove that $\equiv_{\mathit{ACIE}}$ is contained in the kernel
of $\F ([-]_{\mathit{ACIE}}) \circ \delta_{\F\lhd \G}$\footnote{This is
equivalent to prove that $\Exp_{\F\lhd \G}/_{\equiv_{\mathit{ACIE}}}$, together with
$[-]_{\mathit{ACIE}}$, is the coequalizer of the projection morphisms from $\equiv_{\mathit{ACIE}}$ to
$\Exp_{\F\lhd \G}$.}. 

This will
guarantee the existence of a function $$\overline{\delta}_{\F\lhd
\G}\colon
\Exp_{\F\lhd \G}/_{\equiv_{\mathit{ACIE}}} \to
\F(\Exp_\G/_{\equiv_{\mathit{ACIE}}})$$ which, when $\F=\G$, provides
$\Exp_\G/_\equiv$ with a coalgebraic structure $$\overline{\delta}_\G\colon
\Exp_\G/_{\equiv_{\mathit{ACIE}}} \to
\G(\Exp_\G/_{\equiv_{\mathit{ACIE}}})$$ (as before we write
$\overline{\delta}_\G$ for $\overline{\delta}_{\G\lhd \G}$) and which
makes $[-]_{\mathit{ACIE}}$ a homomorphism of coalgebras. 
\begin{lemma}\label{lemma:equivACIE}
Let $\G$ and $\F$ be non\hyph deterministic functors, with $\F\lhd \G$. For all $\E_1,\E_2\in \Exp_{\F\lhd \G}$,
\[
\E_1\equiv_{\mathit{ACIE}} \E_2 \Rightarrow (\F ([-]_{\mathit{ACIE}})) (\delta_{\F\lhd
\G}(\E_1)) = (\F ([-]_{\mathit{ACIE}})) (\delta_{\F\lhd
\G}(\E_2))
\]
\end{lemma}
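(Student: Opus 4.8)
The plan is to reduce the statement to checking the four defining identities of $\equiv_{\mathit{ACIE}}$ one at a time. Write $q = [-]_{\mathit{ACIE}}$ and $\Phi = \F(q)\circ\delta_{\F\lhd\G}\colon \Exp_{\F\lhd\G}\to\F(\Exp_\G/_{\equiv_{\mathit{ACIE}}})$. The relation $\ker\Phi = \{<\E_1,\E_2> \mid \Phi(\E_1)=\Phi(\E_2)\}$ is automatically an equivalence relation, being the kernel of a function. Since $\equiv_{\mathit{ACIE}}$ is \emph{by definition} the least equivalence relation containing the four identities Associativity, Commutativity, Idempotency and Empty, it suffices to show that every instance of each of these four identities lies in $\ker\Phi$; closure under reflexivity, symmetry and transitivity then comes for free. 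The two clauses $\delta_{\F\lhd\G}(\E_1\oplus\E_2)=\Plus_{\F\lhd\G}(\delta_{\F\lhd\G}(\E_1),\delta_{\F\lhd\G}(\E_2))$ and $\delta_{\F\lhd\G}(\emp)=\Empty_{\F\lhd\G}$ let us compute $\delta_{\F\lhd\G}$ on exactly the top-level shapes occurring in the identities.

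Unfolding those clauses, the four required instances follow at once from the following auxiliary claim, which I would state and prove for \emph{arbitrary} $x,y,z\in\F(\Exp_\G)$:
\begin{align*}
&\F(q)(\Plus_{\F\lhd\G}(\Plus_{\F\lhd\G}(x,y),z)) = \F(q)(\Plus_{\F\lhd\G}(x,\Plus_{\F\lhd\G}(y,z))),\\
&\F(q)(\Plus_{\F\lhd\G}(x,y)) = \F(q)(\Plus_{\F\lhd\G}(y,x)),\\
&\F(q)(\Plus_{\F\lhd\G}(x,x)) = \F(q)(x),\\
&\F(q)(\Plus_{\F\lhd\G}(\Empty_{\F\lhd\G},x)) = \F(q)(x).
\end{align*}
These say precisely that $\Plus_{\F\lhd\G}$ and $\Empty_{\F\lhd\G}$ obey the join\hyph semilattice laws once one composes with $\F(q)$; instantiating $x,y,z$ at $\delta_{\F\lhd\G}(\E_1),\delta_{\F\lhd\G}(\E_2),\delta_{\F\lhd\G}(\E_3)$ recovers exactly the four identities under $\Phi$.

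I would prove the auxiliary claim by induction on the structure of the ingredient $\F$, following the inductive definitions of $\Plus_{\F\lhd\G}$ and $\Empty_{\F\lhd\G}$. The base case $\F=\id$ is where the quotient does all the work: here $\F(q)=q$, $\Plus_{\id\lhd\G}(x,y)=x\oplus y$ and $\Empty_{\id\lhd\G}=\emp$, so the four equalities are nothing but the Associativity, Commutativity, Idempotency and Empty axioms read off through $q$. For $\F=\B$ the operations are $\vee_\B$ and $\bot_\B$ and $\B(q)=\mathit{id}_\B$, so the semilattice laws of $\B$ give the equalities on the nose. The cases $\F=\F_1\times\F_2$, $\F=\F_1^A$ and $\F=\pow\F_1$ are routine: $\Plus$ and $\Empty$ are componentwise, pointwise and set union respectively, and $\F(q)$ distributes accordingly, so each equality reduces to the induction hypotheses on the sub\hyph ingredients (for $\pow$, union is already an idempotent commutative monoid with unit $\emptyset$, and $\pow(\F_1(q))$ preserves unions).

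The one case requiring care, and the expected main obstacle, is $\F=\F_1\myplus\F_2$, because $\Plus_{\F_1\myplus\F_2\lhd\G}$ is defined by cases over the injections $\kappa_1,\kappa_2$ and the two extra elements $\bot,\top$. Here I would verify the four equalities by exhausting the shapes of $x,y,z$: arguments in a common injection $\kappa_i$ are handled by $\kappa_i\circ\Plus_{\F_i\lhd\G}$, so the laws follow from the induction hypothesis for $\F_i$; any mixture of distinct injections collapses to the absorbing element $\top$; and $\bot=\Empty_{\F_1\myplus\F_2\lhd\G}$ is neutral by definition. Since the defining clauses are symmetric and $\top$ is absorbing while $\bot$ is a unit, commutativity, idempotency and the unit law are immediate, whereas associativity needs a finite but slightly tedious split according to how many of $x,y,z$ land in the same injection versus in $\{\bot,\top\}$. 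Once the auxiliary claim holds for all ingredients, the reduction of the first paragraph closes the proof.
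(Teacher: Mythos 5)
Your proposal is correct and takes essentially the same route as the paper: the paper's proof likewise reduces the lemma to the four join-semilattice laws for $\Plus_{\F\lhd\G}$ and $\Empty_{\F\lhd\G}$ composed with $\F([-]_{\mathit{ACIE}})$, stated for arbitrary $x_1,x_2,x_3\in\F(\Exp_\G)$, and proves them by induction on the structure of $\F$ (with $\F=\id$ absorbing the axioms through the quotient and $\pow$ needing no induction hypothesis since union is already a semilattice). The only differences are presentational: you make explicit the kernel-of-$\Phi$ justification that the paper compresses into ``it is enough to prove'', and you work out the $\F_1\myplus\F_2$ case analysis ($\top$ absorbing, $\bot$ neutral, mixed injections collapsing to $\top$) that the paper omits as ``similar''.
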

\proof 
In order to improve readability, in this proof we will use $[-]$
to denote $[-]_{\mathit{ACIE}}$.  

It is enough to prove that for all $x_1, x_2, x_3 \in
\F(\Exp_{\G})$: 
\begin{itemize}
\item[\mycirc{1}] $\F([-])(\Plus_{\F\lhd \G}(\Plus_{\F\lhd
\G}(x_1, x_2),x_3)) = \F([-])(\Plus_{\F\lhd
\G}(x_1,\Plus_{\F\lhd \G}(x_2,x_3)))$
\item[\mycirc{2}] $\F([-])(\Plus_{\F\lhd
\G}(x_1,x_2))= \F([-])(\Plus_{\F\lhd \G}(x_2,x_1))$
\item[\mycirc{3}] $\F([-])(\Plus_{\F\lhd
\G}(x_1,x_1))=\F([-])(x_1)$
\item[\mycirc{4}] $\F([-])(\Plus_{\F\lhd \G}( \Empty_{\F\lhd
\G},x_1)) =\F([-])(x_1)$
\end{itemize}
By induction on the structure of $\F$. We illustrate a few cases, the
omitted ones are proved in a similar way.

\noindent\fbox{$\F=\id$}  $x_1, x_2, x_3\in \Exp_{\G}$
 
$$\begin{array}{@{}llll}
\mycirc{1}&&[\Plus_{\id\lhd \G}(\Plus_{\id\lhd
\G}(x_1,x_2),x_3)]\\ &=& [(x_1 \oplus x_2) \oplus x_3] & \text{(def.
$\Plus$)}\\
&=& [x_1 \oplus (x_2 \oplus x_3)] &\text{($\mathit{Associativity}$)}\\
  &=& [\Plus_{\id\lhd
\G}(x_1,\Plus_{\id\lhd
\G}(x_2,x_3))]  &\text{(def.
$\Plus$)}\\[1.5ex]
\mycirc{4}&& [\Plus_{\id\lhd \G}( \Empty_{\id\lhd
\G},x_1) ] \\
&=&[\emp \oplus x_1] &\text{(def.
$\Plus$ and $\Empty$)}\\
 &=& [x_1] &\text{($\Empty$)}
\end{array}$$

\noindent\fbox{$\F=\F_1\times \F_2$}  $x_1 = <u_1,v_1>, x_2 =
<u_2,v_2> \in (\F_1\times
\F_2)(\Exp_{\G})$
$$\begin{array}{@{}llll}
\mycirc{2}&&(\F_1\times \F_2) ([-])(\Plus_{\F_1\times \F_2\lhd
\G}(<u_1,v_1>,<u_2,v_2>))\\
&= & <\F_1 ([-])(\Plus_{\F_1\lhd \G}(u_1,u_2)),\F_2 ([-])(\Plus_{\F_2\lhd
\G}(v_1,v_2))>&\text{(def. $\Plus$)}\\
&=&<\F_1 ([-])(\Plus_{\F_1\lhd \G}(u_2,u_1)),\F_2 ([-])(\Plus_{\F_2\lhd
\G}(v_2,v_1))>&\text{(ind. hyp.)}\\
&=& (\F_1\times \F_2)([-])(\Plus_{\F_1\times \F_2\lhd
\G}(<u_2,v_2>,<u_1,v_1>)) &\text{(def. $\Plus$)}\\[1.5ex]
\mycirc{3}&&(\F_1\times \F_2)([-])(\Plus_{\F_1\times \F_2\lhd
\G}(<u_1,v_1>,<u_1,v_1>)) \\
&= & <\F_1 ([-])(\Plus_{\F_1\lhd \G}(u_1,u_1)),\F_2 ([-])(\Plus_{\F_2\lhd
\G}(v_1,v_1))>&\text{(def. $\Plus$)}\\
&=&<\F_1 ([-])(u_1), \F_2 ([-])(v_1)>&\text{(ind. hyp.)}\\
&=& ({\F_1\times \F_2})([-])(<u_1,v_1>) & 
\end{array}
$$

\noindent\fbox{$\F=\pow \F_1$}  $x_1, x_2, x_3 \in \pow \F_1(\Exp_{\G})$
$$\begin{array}{p{.5cm}cll}
\mycirc{1}&&\pow \F_1 ([-])(\Plus_{\pow \F_1\lhd \G}(x_1,\Plus_{\pow \F_1\lhd
\G}(x_2,x_3)))\\
&=& \pow \F_1 ([-])(x_1\cup(x_2\cup x_3))&\text{(def. $\Plus$)}\\
&=& \pow \F_1 ([-])((x_1\cup x_2)\cup x_3)&\\
&=& \pow \F_1 ([-])(\Plus_{\pow \F_1\lhd \G}(\Plus_{\pow \F_1\lhd
\G}(x_1,x_2),x_3))&\text{(def. $\Plus$)}\\
\end{array}
$$
In the last but one step, we use the fact
that, for any set $X$, $(\pow(X), \cup, \emptyset)$ is a join\hyph semilattice
(hence, $x_1\cup(x_2\cup x_3)=(x_1\cup x_2)\cup x_3$). Due to this
fact, in the case $\F=\pow \F_1$, in this particular proof, the
induction hypothesis will not be
used.
\qed
Thus, we have a well-defined function $$\overline{\delta}_{\F\lhd
\G}\colon\Exp_{\F\lhd \G}/_{\equiv_{\mathit{ACIE}}}\to \F(
\Exp_\G/_{\equiv_{\mathit{ACIE}}})$$ such that $\overline{\delta}_{\F\lhd
\G} ([\E]_{\mathit{ACIE}}) = (\F [-]_{\mathit{ACIE}})({\delta}_{\F\lhd
\G} (\E))$. 

We are ready to state and prove the second half of Kleene's theorem. 

\begin{theorem}\label{kleene2}
Let $\G$ be a non\hyph deterministic functor. For every $\E\in \Exp_\G$,
there exists $\Delta_\G(\E)=(S,g)$ such that $S$ is finite and there exists $s\in
S$ with $\E\sim s$.
\end{theorem}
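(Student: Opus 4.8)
The plan is to define $\Delta_\G(\E)$ as the subcoalgebra of the quotient coalgebra $(\Exp_\G/_{\equiv_{\mathit{ACIE}}},\, \overline{\delta}_\G)$ generated by the state $s = [\E]_{\mathit{ACIE}}$, and then to verify the two requirements separately: that $\E \sim s$, and that this generated subcoalgebra $<s>$ is finite. (It exists because, as noted in Section~\ref{sec:ndc}, every non\hyph deterministic functor preserves arbitrary intersections.)

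The bisimilarity $\E \sim s$ is essentially immediate. By Lemma~\ref{lemma:equivACIE} the map $[-]_{\mathit{ACIE}} \colon (\Exp_\G, \delta_\G) \to (\Exp_\G/_{\equiv_{\mathit{ACIE}}}, \overline{\delta}_\G)$ is a $\G$-coalgebra homomorphism; since the graph of a coalgebra homomorphism is always a bisimulation~\cite{Rutten00}, the pair $<\E, [\E]_{\mathit{ACIE}}>$ lies in a bisimulation, whence $\E \sim [\E]_{\mathit{ACIE}} = s$.

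The substantial part is finiteness of $<s>$, for which I would follow Brzozowski's strategy and produce a \emph{finite} base set $D$ of closed expressions such that every state reachable from $\E$ by iterating $\delta_\G$ is $\equiv_{\mathit{ACIE}}$-equivalent to a finite sum $\bigoplus_{d\in T} d$ of elements of $D$ (reading the empty sum as $\emp$). Concretely, I would enumerate the fixed\hyph point subexpressions $\mu x_1.\gamma_1, \ldots, \mu x_k.\gamma_k$ of $\E$ and, for each of the finitely many syntactic subexpressions $\E'$ of $\E$, define its closure $\widehat{\E'}$ by substituting the corresponding closed expression $\mu x_j.\gamma_j$ for each free variable $x_j$ of $\E'$; the set $D$ of these closures is finite. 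The core \emph{closure lemma} to establish, by induction on the ingredient $\F$ along the order~(\ref{ind:order5}) already used to make $\delta_{\F\lhd \G}$ well\hyph defined, states: if $\E'$ is a sum of elements of $D$, then every expression appearing as a ``next state'' inside $\delta_{\F\lhd \G}(\E')$ is again a sum of elements of $D$. The $\oplus$ case holds because $\delta$ turns $\oplus$ into $\Plus$, which is assembled componentwise from $\oplus$ and hence preserves ``being a sum over $D$''; the guarded cases $a(-)$, $l<->$, $r<->$, $l[-]$, $r[-]$, $\{-\}$ hold because $\delta$ merely strips the guard and hands back the inner (closed) subexpression, a member of $D$; and the fixed\hyph point case is where guardedness is crucial, since unfolding $\mu x.\gamma$ to $\gamma[\mu x.\gamma/x]$ leaves every argument of a guard a closed subexpression of $\E$, so that one observable step returns us to sums over $D$.

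Granting the closure lemma, every state of $<s>$ is represented by some sum $\bigoplus_{d\in T} d$ with $T\subseteq D$; the \emph{Associativity}, \emph{Commutativity}, \emph{Idempotency} and \emph{Empty} axioms make such a sum $\equiv_{\mathit{ACIE}}$-determined by the \emph{subset} $T$ it uses, so $<s>$ has at most $2^{|D|}$ distinct elements and is therefore finite. Setting $\Delta_\G(\E) = (<s>, \overline{\delta}_\G)$ and taking the distinguished state $s$ then yields the theorem. I expect the closure lemma to be the main obstacle: the bookkeeping of nested substitutions for fixed points, and the verification that guardedness genuinely drives the $\delta$-images back into the finite base $D$ modulo $\equiv_{\mathit{ACIE}}$, is the delicate point, whereas the bisimilarity claim drops out of Lemma~\ref{lemma:equivACIE}.
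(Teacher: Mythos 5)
Your proposal is correct and takes essentially the same route as the paper's proof: define $\Delta_\G(\E)$ as the subcoalgebra of $(\Exp_\G/_{\equiv_{\mathit{ACIE}}},\overline{\delta}_\G)$ generated by $[\E]_{\mathit{ACIE}}$, get $\E\sim[\E]_{\mathit{ACIE}}$ from Lemma~\ref{lemma:equivACIE}, and bound the reachable states by $\equiv_{\mathit{ACIE}}$-classes of sums of distinct elements of a finite closure set, giving at most $2^{|D|}$ states. Your base $D$ of closed-up subexpressions is exactly the paper's $\mathit{cl}(\E)$ (defined there recursively via unfoldings of $\mu$-subformulas, with finiteness asserted by precisely your observation about closures of syntactic subexpressions), and your closure lemma corresponds to the paper's proof of the inclusion~(\ref{eq:incl_kleene2}), carried out by the same induction on ingredients and on the order~(\ref{ind:order5}) with auxiliary facts about $\Empty_{\F\lhd\G}$ and $\Plus_{\F\lhd\G}$.
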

\proof 
For every $\E\in  \Exp_\G$, we set $\Delta_\G(\E) =
<[\E]_{\mathit{ACIE}}>$ (recall that $<s>$ denotes the smallest subcoalgebra
generated by $s$). First note that, by
Lemma~\ref{lemma:equivACIE}, the map $[-]_\mathit{ACIE}$ is a
homomorphism and thus $\E\sim [\E]_\mathit{ACIE}$. We prove, for every
$\E\in  \Exp_\G$, that the subcoalgebra $<[\E]_{\mathit{ACIE}}> =
(V, \overline\delta_{\G})$  has a finite state space $V$ (here, $ \overline\delta_\G$
actually stands for the restriction of $ \overline\delta_\G$ to $V$).
Again, in order to improve readability, below we will use $[-]$ to denote $[-]_{\mathit{ACIE}}$.  

More precisely, we prove, for all $\E \in  \Exp_{\F\lhd \G}$, the
following inclusion 
\begin{equation}\label{eq:incl_kleene2}
V \subseteq \overline V = \{[\E_1\oplus \ldots \oplus \E_k] \mid
\E_1,\ldots,\E_k \in \mathit{cl}(\E) \text{ all distinct}, \E_1,\ldots, \E_k
\in \Exp_\G\}
\end{equation}
Here,  if $k=0$ we take the sum above to be $\emp$ and $\mathit{cl}(\E)$ denotes the smallest set containing all subformulas of $\E$ and the unfoldings of $\mu$ (sub)formulas, that is, the smallest subset satisfying:
\[
\begin{array}{lcllcl}
\mathit{cl}(\emp) &= &\{\emp\}&\mathit{cl}(l[\E_1]) &=& \{l[\E_1]\}
\cup \mathit{cl}(\E_1)\\
\mathit{cl}(\E_1\oplus \E_2) &=&  \{\E_1\oplus\E_2\}\cup
\mathit{cl}(\E_1)\cup \mathit{cl}(\E_2)&\mathit{cl}(r[\E_1]) &=&
\{r[\E_1]\} \cup \mathit{cl}(\E_1)\\
\mathit{cl}(\mu x. \E_1) &=& \{\mu x.\,\E_1\}\cup \mathit{cl}(\E_1
[\mu x. \E_1/x]) & \mathit{cl}(a(\E_1)) &=& \{a(\E_1)\} \cup
\mathit{cl}(\E_1)\\
\mathit{cl}(l<\E_1>) &=& \{l<\E_1>\} \cup \mathit{cl}(\E_1)&\mathit{cl}(\{\E_1\}) &=& \{\{\E_1\}\} \cup
\mathit{cl}(\E_1)
\\
\mathit{cl}(r<\E_1>) &=& \{r<\E_1>\} \cup
\mathit{cl}(\E_1)\end{array}
\]
Note that the set $\mathit{cl}(\E)$ is finite (the number of different unfoldings is finite) and has the property $\E\in
\mathit{cl}(\E)$.

We prove the inclusion in equation~(\ref{eq:incl_kleene2}) in the following way. First,
we observe that $[\E] \in \overline V$, because $\E \in \mathit{cl}(\E)$. Then,
we prove that $(\overline V, \overline\delta_\G)$ (again, $\overline\delta_\G$
actually stands for the restriction of $\overline\delta_\G$ to $\overline V$) is a subcoalgebra of
$(\Exp_\G,\overline\delta_\G)$. Thus, $V \subseteq \overline V$, since $V$, the
state space of $<[\E]>$ is equal to the intersection of all
subcoalgebras of $(\Exp_\G,\overline\delta_\G)$ containing $[\E]$.
 
To prove that $(\overline V, \overline\delta_\G)$ is a subcoalgebra we prove
that, for $\E_1,\ldots, \E_k
\in \Exp_{\F\lhd \G}$,
\begin{equation}\label{eq:kleene2_1_5}
\E_1,\ldots,\E_k \in \mathit{cl}(\E) \text{ all distinct} \Rightarrow \overline\delta_{\F\lhd
\G}([\E_1\oplus\ldots \oplus \E_k]) \in \F(\overline V)
\end{equation}
The intended
result then follows by taking $\F=\G$.

We first prove two auxiliary results, by induction on the structure of
$\F$:
\[
\begin{array}{@{}cl}
\mycirc{1}& (\F [-])(\Empty_{\F\lhd \G}) \in \F(\overline
V)\\[1.2ex]
\mycirc{2}& (\F [-])(\Plus_{\F\lhd \G}(u,v)) \in \F(\overline V)
\Leftrightarrow (\F [-])(u) \in \F(\overline V)\text{ and } (\F [-])(v) \in
\F(\overline V)
\end{array}
\]
for $u,v\in \F(\Exp_\G)$.

\noindent\fbox{$\F=\id$} 
\[
\begin{array}{clr}
\mycirc{1}& (\F [-])(\Empty_{\F\lhd \G}) = [\emp] \in \overline V\\[1.2ex]
\mycirc{2}& (\F [-])(\Plus_{\F\lhd \G}(u,v)) = [u \oplus v]\in \overline
V  
\Leftrightarrow [u] \in \overline V\text{ and } [v] \in
\overline V& u,v\in \Exp_\G
\end{array}
\]
The right to left implication follows because, using
 the $(\mathit{Associativity})$, $(\mathit{Commutativity})$ and
$(\mathit{Idempotency})$ axioms, we can rewrite $u\oplus v$ as
$\E_1\oplus\ldots\oplus \E_k$, with all $\E_1,\ldots,\E_k \in \mathit{cl}(\E)$ distinct. 

\noindent\fbox{$\F=\B$} 
\[
\begin{array}{@{}cl@{\ }r}
\mycirc{1}& (\B [-])(\Empty_{\B\lhd \G}) = \bot_\B  \in \B(\overline
V)\\[1.2ex]
\mycirc{2}& (\B [-])(\Plus_{\B\lhd \G}(u,v)) = u \vee v \in \B(\overline
V)
\Leftrightarrow u \in \B(\overline V)\text{ and } v \in
\B(\overline V)& u,v\in \B(\Exp_\G)= \B
\end{array}
\]
\fbox{$\F=\F_1\times \F_2$} 
\[
\begin{array}{@{}clr}
\mycirc{1}& (\F_1\times \F_2 [-])(\Empty_{\F_1\times \F_2\lhd \G}) \\&=
<(\F_1 [-])(\Empty_{\F_1\lhd \G}),(\F_2 [-])(\Empty_{\F_2\lhd \G})> \in \F_1\times \F_2(\overline
V)\\[1.2ex]
\mycirc{2}& (\F_1\times \F_2 [-])(\Plus_{\F_1\times \F_2\lhd
\G}(<u_1,u_2>,<v_1,v_2>))= 
\\
&<(\F_1 [-])(\Plus_{\F_1 \lhd
\G}(u_1,v_1)), (\F_2 [-])(\Plus_{\F_2\lhd
\G}(u_2,v_2))>  \in \F_1\times \F_2(\overline
V)\\ &
\stackrel{\mathit{(IH)}}\Leftrightarrow u_1,v_1 \in \F_1(\overline V)
\text{ and } u_2,v_2 \in \F_2(\overline V) \\ & 
\Leftrightarrow <u,v> \in
\F_1\times \F_2(\overline V),\ \ \ u=<u_1,u_2>,v=<v_1,v_2>\in \F_1\times \F_2(\Exp_\G)
\end{array}
\]
\fbox{$\F=\F_1\myplus \F_2$} and 
\fbox{$\F=\F_1^A$}: similar to $\F_1\times \F_2$.

\ \\
\fbox{$\F=\pow \F_1$} 
\[
\begin{array}{@{}clr}
\mycirc{1}& (\pow \F [-])(\Empty_{\pow \F\lhd \G}) = \emptyset
 \in \pow \F(\overline
V)\\[1.2ex]
\mycirc{2}& (\pow \F [-])(\Plus_{\pow \F\lhd
\G}(u,v))= 
((\pow \F [-]) (u) \cup (\pow \F [-])(v))\in \pow \F(\overline V) \\ &
\Leftrightarrow (\pow \F [-] (u))\in \pow \F(\overline V)
\text{ and } (\pow \F [-] (v))\in \pow \F(\overline V) \end{array}
\]
Using $\mycirc{2}$, we can simplify our proof goal
(equation~(\ref{eq:kleene2_1_5})) as follows:
\[
\overline\delta_{\F\lhd
\G}([\E_1\oplus\ldots \oplus \E_k]) \in \F(\overline V)
\Leftrightarrow (\F[-])(\delta_{\F\lhd
\G}(\E_i)) \in \F(\overline V), \ \E_i\in \mathit{cl}(\E),\ i=1,\ldots,k
\]
Next, using induction on the product of types of expressions and expressions
(using the order in equation~(\ref{ind:order5})), $\mycirc 1$
and $\mycirc 2$, we prove that $(\F[-])(\delta_{\F\lhd
\G}(\E_i)) \in \F(\overline V)$, for any $\E_i\in \mathit{cl}(\E)$. 
\begin{longtable}{@{}ll}
$(\F[-])(\delta_{\F\lhd \G}  (\emp)) = (\F[-])(\Empty_{\F\lhd
\G})\in \F(\overline V)$ &$(\text{by }\mycirc 1)$\\[1.1ex]
$(\F[-])(\overline\delta_{\F\lhd \G}  (\E_1\oplus\E_2)) 
= (\F[-])(\Plus_{\F\lhd \G}(\delta_{\F\lhd
\G}  (\E_1), \delta_{\F\lhd \G}  (\E_2))\in \F(\overline V)$
&$(\text{$\mathit{IH}$ and }\mycirc 2)$\\[1.1ex]
$(\G[-])(\delta_{\G\lhd \G}  (\mu x. \E)) = (\G[-])(\delta_{\G\lhd \G} (\E[\mu x.
\E/x]))\in \G(\overline V)$&$(\text{$\mathit{IH}$})$\\[1.1ex]
$(\id [-])(\delta_{\id \lhd \G}(\E_i)) = [\E_i]\in \id(\overline V) \ \ \textrm{for} \ \G\neq
\id$ &$(\E_i\in \mathit{cl}(\E))$\\[1.1ex]
$(\B[-])\delta_{\B \lhd \G} (b) = b\in \B(\overline V)$&$(\B(\overline
V)=\B)$\\[1.1ex]
$(\F_1\times \F_2[-])(\delta_{\F_1\times \F_2 \lhd \G} (l<\E>))$ \\
$ \ = <
(\F_1[-])(\delta_{\F_1\lhd \G}(\E)), (\F_2 [-])(\Empty_{\F_2\lhd
\G}) >\in
\F_1\times \F_2(\overline V)$
&$(\text{$\mathit{IH}$ and }\mycirc 1)$\\[1.3ex]
$(\F_1\times \F_2[-])(\delta_{\F_1\times \F_2 \lhd \G} (r<\E>))$\\ $\ = <(\F_1
[-])(\Empty_{\F_1\lhd \G}),
(\F_2[-])(\delta_{\F_2\lhd \G}(\E)) >\in
\F_1\times \F_2(\overline V)
$&$(\text{$\mathit{IH}$ and }\mycirc 1)$\\[1.1ex]
$(\F_1\myplus \F_2[-])(\delta_{\F_1\myplus \F_2 \lhd \G} (l[\E])) =
\kappa_1((\F_1[-])(\delta_{\F_1
\lhd \G}(\E)))\in
\F_1\myplus \F_2(\overline V)$
&$(\text{$\mathit{IH}$})$\\[1.1ex]
$(\F_1\myplus \F_2[-])(\delta_{\F_1\myplus \F_2 \lhd \G} (r[\E])) =
\kappa_2((\F_2[-])(\delta_{\F_2
\lhd \G}(\E)))\in
\F_1\myplus \F_2(\overline V)
$&$(\text{$\mathit{IH}$})$\\[1.1ex]
$(\F^A[-])(\delta_{\F^A \lhd \G} (a(\E))) = \left(\lambda a' .\left\{
\begin{array}{ll}(\F[-])(\delta_{\F \lhd \G} (\E))
&\text{if }a=a'\\\Empty_{\F\lhd
\G}&\text{otherwise}\end{array}\right.\right)\in
\F^A(\overline V)
$&$(\text{$\mathit{IH}$ and }\mycirc
1)$\\[2.1ex]
$(\pow \F[-])(\delta_{\pow \F\lhd \G} (\{\E\})) = \{\,(\F[-])(\delta_{\F
\lhd \G}(\E))\,\}\in
\pow \F(\overline V)$
&$(\text{$\mathit{IH}$})$
\end{longtable}
\qedhere

\subsubsection{Examples}\label{sec:examples_synt}
n this subsection we will illustrate the construction described in the proof of Theorem~\ref{kleene2}: given an expression $\E\in \Exp_\G$ we construct a $\G$-coalgebra $(S,g)$ such that there is $s\in S$ with $s\sim \E$. For simplicity, we will consider
deterministic and partial automata expressions over $A=\{a,b\}$.

Let us start by showing the synthesised automata for the most simple
deterministic expressions -- $\emp$, $l<0>$ and $l<1>$.
\[
\xymatrix{*++[o][F]{\emp} \ar@(r,u)[]_{a,b}} \ \ \
\xymatrix{*++[o][F]{l<0>} \ar[r]^{a,b}&  *++[o][F]{\emp } \ar@(r,u)[]_{a,b}}
\ \ \ \xymatrix{*++[o][F=] {l<1>} \ar[r]^{a,b}&  *++[o][F]{\emp}
\ar@(r,u)[]_{a,b}}
\]
The first two automata recognize the
empty language $\emptyset$ and the last the language $\{\epsilon \}$ containing only the empty
word.

We note that the 
generated automata are not minimal (for instance, the automata for  $l<0>$ and
$\emp$ are bisimilar). Our goal has been to generate a finite
automaton from an expression. From this the minimal automaton
can always be obtained by identifying bisimilar states.

The following automaton, generated from the expression
$r<a(l<1>)>$, recognizes the language \{a\},
\[
\xymatrix{ *+[F-:<3pt>]{r<a(l<1>)>} \ar[r]^-{a}\ar[dr]_{b}& *++[o][F=]{l<1>}
\ar[d]^{a,b}  \\
& *++[o][F]{\emp}  \ar@(r,d)[]^{a,b}}
\]
For an example of an expression containing fixed points, consider $\E =
\mu x.\, r<a(l<0>\oplus l<1>\oplus x)>$. One can easily compute the synthesised
automaton:
\[
\xymatrix{*+[F-:<3pt>]{\mu x.\, r<a(l<0>\oplus l<1>\oplus x)>}
\ar[r]^-{a}\ar@/{}_{2 pc}/[rr]_{b}&
*+[F=:<3pt>]{{l<0>\oplus l<1>\oplus \E}} \ar@(r,u)[]_{a}\ar[r]^-{b} &
*++[o][F]{\emp}  \ar@(r,d)[]^{a,b}}
\]
and observe that it recognizes the language $aa^*$. Here, the role of
the join\hyph semilattice structure is also visible: $l<0>\oplus l<1>\oplus
\E$ specifies that this state is supposed to be non-final ($l<0>$) and final ($l<1>$). The conflict of these two specifications  is solved, when they are combined with $\oplus$, using the semilattice: because $1\vee 0 = 1$ the state is set to be final.

As a last example of deterministic expressions consider
\scalebox{.95}{$\E_1 = \mu
x.\, r<a(x \oplus \mu y. r<a(y)>)>$}.
Applying $\delta_\D$ to $\E_1$ one gets the following (partial)
automaton:
\[
\xymatrix{*+[F-:<3pt>]{\mu
x.\, r<a(x \oplus \mu y.\, r<a(y)>)>}
\ar[r]^-{a}\ar@/{}_{2 pc}/[rr]_{b}&
*+[F-:<3pt>]{\E_1 \oplus \mu y.\, r<a(y)>}
&*++[o][F]{\emp } }
\]
Calculating $\delta_\D(\E_1 \oplus \mu y.\, r<a(y)>)$ yields
\[
\begin{array}{lcl}
\delta_\D(\E_1 \oplus \mu y.\, r<a(y)>)& =& <0,t>\\
&\text{where}& t(a) = \E_1 \oplus \mu y.\,
r<a(y)>\oplus \mu y.\, r<a(y)>>\\
&& t(b) = \emp
\end{array}
\]
Note that the expression $\E_1 \oplus \mu y.\,
r<a(y)>\oplus \mu y.\, r<a(y)>$ is in the same equivalence class as $\E_1 \oplus \mu y.\, r<a(y)>$, which is a state that already exists.
As we saw in the beginning of Section~\ref{sec:expressions_coalg}, by only
applying $\delta_\D$, without $\mathit{ACI}$, one would always generate syntactically different states which instead of
the automaton computed now:
\[
\xymatrix{*+[F-:<3pt>]{\mu
x.\,r< a(x \oplus \mu y.\, r<a(y)>)>}
\ar[r]^-{a}\ar@/{}_{2 pc}/[rr]_{b}& *+[F-:<3pt>]{
\E_1 \oplus \mu y.\, r<a(y)>}\ar@(r,u)[]_-{a}\ar[r]^-{b}
&*++[o][F]{\emp} \ar@(r,d)[]^{a,b} }
\]
would yield the following infinite automaton (with $\E_2 = \mu y.\,
r<a(y)>$):
\[
\xymatrix{*+[F-:<3pt>]{\mu
x.\, r<a(x \oplus \mu y.\, r<a(y)>)>} \ar[r]^-{a}\ar[dr]_{b}&
*+[F-:<3pt>]{\E_1 \oplus \E_2}\ar[r]^-{a}\ar[d]^{b}&
*+[F-:<3pt>]{\E_1 \oplus \E_2 \oplus \E_2}\ar[r]^-{a}\ar[dl]^{b}&\ldots\\
&*++[o][F]{\emp} \ar@(r,d)[]^{a,b} &}
\]
Let us next see a few examples of synthesis for partial automata
expressions, where we will illustrate the role of $\bot$ and $\top$.
In the graphical representation of a partial
automaton $(S,p)$, we will omit transitions for inputs $a$ with
$g(s)(a) = \kappa_1(*)$ and we will draw $\xymatrix{*++[o][F]{s} \ar@{~>}[r]^-{a}&
*++[F]{g(s)(a)}}$ whenever $g(s)(a)\in\{\bot,\top\}$. Note however
that
$\bot\not\in S$ and $\top\not\in S$ and thus will have no defined
transitions. As before, let us first present the corresponding
automata for simple expressions -- $\emp$, $a(l[*])$,
$a(\emp)$  and $a(l[*]) \oplus b(l[*])$.
\[
\xymatrix{*++[o][F]{\emp} \ar@{~>}[r]^-{a,b}& *++[F]{\bot}} \ \ \
\xymatrix{*++[o][F]{a(l[*])} \ar@{~>}[r]^-{b}&  *++[F]{\bot }}
\ \ \ \xymatrix{*++[o][F] {a(\emp)}
\ar[r]^-{a}\ar@{~>}@/^0.7cm/[rr]^-{b}&  *++[o][F]{\emp} \ar@{~>}[r]^-{a,b}&
*++[F]{\bot}}\ \ \ \xymatrix{*+[F-:<3pt>]{a(l[*]) \oplus
b(l[*])} }
\]
Note how $\bot$ is used to encode
underspecification, working as a kind of deadlock state. In the first 
three expressions the behaviour for one or both of the
inputs is missing, whereas in the last expression the specification
is complete. 

The element $\top$ is used to deal with inconsistent
specifications. For instance, consider the expression $a(l[*])
\oplus b(l[*]) \oplus a(r[a(l[*]) \oplus b(l[*])])$. All inputs
are specified, but note that at the outermost level input $a$
appears in two different sub-expressions -- $a(l[*])$ and
$a(r[a(l[*]) \oplus b(l[*])])$ -- specifying at the same time that
input $a$ leads to successful termination and that it leads to a
state where $a(l[*]) \oplus b (l[*])$ holds, which is
contradictory, giving rise to the following automaton.
\[
\xymatrix{*+[F-:<3pt>]{a(l[*]) \oplus b(l[*]) \oplus a(r[a(l[*])
\oplus b(l[*])])
} \ar@{~>}[r]^-{a}& *++[F]{\top}}
\]

\section{A sound and complete axiomatization}\label{sec:axiom}

In the previous section, we have shown how to derive from the type of
a system, given by a functor $\G$, a language $\Exp_\G$ that allows for
specification of $\G$-behaviours. Analogously to Kleene's theorem, we
have proved the correspondence between the behaviours denoted by $\Exp_\G$
and locally finite $\G$-coalgebras. In this section, we will show how
to provide $\Exp_\G$ with a sound and complete axiomatization. Again,
the functor $\G$ will serve as a main guide for the definition. The
defined axiomatization is closely related to Kleene algebra (the set
of expressions has a join semilattice structure) and to the
axiomatization provided by Milner for CCS (uniqueness of fixed points
will be required). When instantiating the definition below to concrete
functors one will recover known axiomatizations, such as the one for
CCS mentioned above or the one for labelled transition systems (with
explicit termination) presented in~\cite{AcetoH92}. The latter will be
discussed in detail in Section~\ref{sec:appl}.

Next, we introduce an equational system  for expressions
of type $\F \lhd \G$. We define the relation
$\mathord{\equiv}\subseteq \Exp_{\F\lhd \G}\times \Exp_{\F\lhd \G}$, written
infix, as the least equivalence relation containing the
following identities:
\begin{enumerate}
\item $(\Exp_{\F \lhd \G}, \oplus, \emp)$ is a join-semilattice.
\begin{align*}
   &\E \oplus \E  \equiv \E && \mathit{(Idempotency)}\\
 &\E_1 \oplus \E_2  \equiv \E_2 \oplus \E_1 &&\mathit{(Commutativity)} \\
  &\E_1 \oplus (\E_2 \oplus \E_3)   \equiv (\E_1 \oplus
\E_2) \oplus \E_3 && \mathit{(Associativity)} \\
 &\emp \oplus \E \equiv \E &&\mathit{(Empty)}     
\end{align*}
\item $\mu$ is the unique fixed point.
\begin{align*}
&\gamma[\mu x.\gamma/x] \equiv \mu x . \gamma && \mathit{(FP)}  \\
&{\gamma[\E/x]  \equiv  \E \Rightarrow \mu
x.\gamma \equiv \E} &&\mathit{(Unique)}
\end{align*}
\item The join-semilattice structure propagates through the
expressions.
\[
\begin{array}{lcll@{\hspace{.5cm}}lcll}
\emp &\equiv& \bot_\B & (\B - \emp) &  b_1 \oplus b_2 &\equiv& b_1 \vee_\B b_2
&(\B - \oplus)  \\
  l<\emp> &\equiv& \emp & (\times - \emp - L) &   l<\E_1 \oplus \E_2>
&\equiv& l<\E_1> \oplus l<\E_2> &(\times - \oplus - L)\\
 r<\emp> &\equiv& \emp &(\times - \emp - R)  &
r<\E_1 \oplus \E_2> &\equiv& r<\E_1> \oplus r<\E_2> &(\times - \oplus - R) 
\\
  a(\emp) &\equiv& \emp &(-^A - \emp)  &
  a(\E_1 \oplus \E_2) &\equiv& a(\E_1) \oplus a(\E_2)&(-^A - \oplus)\\
&& &&  l[\E_1 \oplus \E_2] &\equiv& l[\E_1] \oplus l[\E_2] &(+ - \oplus - L) 
\\ &&&&     r[\E_1 \oplus \E_2] &\equiv& r[\E_1] \oplus
r[\E_2]&(+ - \oplus - R) \\
& &&&  l[\E_1] \oplus r[\E_2] &\equiv& l[\emp] \oplus r[\emp]
&{(+ - \oplus - \top)} 
\end{array}
\]
\item $\equiv$ is a congruence.
\begin{align*}
\E_1\equiv \E_2 \Rightarrow \E[\E_1/x]\equiv \E[\E_2/x]& \text{\ \ \ \
for $x$ free in $\E$}&(\mathit{Cong})
\end{align*}

\item $\alpha$-equivalence
\begin{align*}
\mu x.\gamma \equiv \mu y. \gamma[y/x] &\ \ \   \text{ for 
$y$  not free in $\gamma$}&& \mathit {(\alpha-\mathit{equiv})} 
\end{align*}
\end{enumerate}
It is important to remark that in the third group of rules
there does 
not exist any rule applicable to expressions of type $\pow \F$.

\begin{example}
Consider the non\hyph deterministic automata over the alphabet~$A=\{a\}$:
\[
\xymatrix{*++[o][F]{s_1}
\ar@(l,u)^-{a}}\hspace{2cm}\xymatrix{*++[o][F]{s_2} \ar@/^/[r]^-{a}&
*++[o][F]{s_3} \ar@/^/[l]^-{a}}
\] 
Applying $\expr -$ (as defined in the proof of Theorem~\ref{kleene1}) one can easily compute 
the expressions corresponding to $s_1$ and $s_2$:
\[
\begin{array}{l}
\E_1 = \expr{s_1} = \mu x_1. l<0> \oplus r<a(\{x_1\})>\\ 
\E_2 = \expr{s_2} = \mu y_1. l<0> \oplus r< a(\{\mu y_2. l<0> \oplus r<a(\{\mu y_1. l<0> \oplus r<a(\{y_2\})> \})>\})>
\end{array}
\]
We prove that $\E_2\equiv \E_1$. 
 In the following calculations let $\E = \mu x_1. r<a(\{x_1\})>$.
\[
\begin{array}{@{}lcll}
&& \E_2\equiv \E_1 \\
&\Leftrightarrow & r<a(\{\mu y_2. r<a(\{r<a(\{y_2\})>\})>\})>
\equiv \E & \text{($(\B-\emp)$, $(\times - \emp - L)$, $\mathit{(FP)}$ and $\mathit{(Empty)}$)}\\
&\Leftrightarrow& \mu y_2. r<a(\{r<a(\{y_2\})>\})> \equiv \E &\text{($(\mathit{FP})$ on $\E$ and $(\mathit{Cong})$ twice)}\\
&\Leftarrow &  r<a(\{r<a(\{\E\})>\})>  \equiv \E & \text{(uniqueness of fixed points)}
\\
&\Leftrightarrow &  r<a(\{\E\})> \equiv \E &\text{(fixed point axiom)}
\\
&\Leftrightarrow & \E \equiv \E
&\text{(fixed point axiom)}\end{array}
\]
Note that the $(\mathit{Cong})$ rule was used in almost every step.

For another example, consider the non\hyph deterministic automaton over the alphabet $A=\{a,b\}$:
\[
\xymatrix{*++[o][F]{s_1} \ar[r]^-{a,b}& *++[o][F]{s_2} &
*++[o][F]{s_3}\ar[l]^-{a,b} \ar[r]^-{b}& *++[o][F]{s_4} }
\] 
Using the definition of $\expr -$ one can compute 
the following expressions for $s_1$, $s_2$, $s_3$ and $s_4$:
\[
\begin{array}{l}
\E_1 =  \expr{s_1} =\mu x_1. l<0> \oplus r<a(\{\E_2\})\oplus
b(\{\E_2\})> \\
\E_2 =\expr{s_2} = \mu x_2. l<0> \oplus \emp\\
\E_3 = \expr{s_3} =\mu x_3. l<0> \oplus r< a(\{\E_2\}) \oplus
b(\{\E_2\}\oplus\{\E_4\})>\\
\E_4 = \expr{s_4} =\mu x_4. l<0> \oplus \emp\\
\end{array}
\]
For $\E_2$ we calculate:
\[
\begin{array}{lcll}
\E_2 &\equiv& l<0>\oplus \emp& \text{($\mathit{FP}$)}\\
&\equiv& l<\emp> &  \text{($\mathit{Empty}$) and ($\B -
\emp$)}\\
&\equiv& \emp &  \text{($\times -
\emp - L$)}
\end{array}
\]
Similarly, one has that $\E_4\equiv \emp$.
Next, we prove $\E_1\equiv\E_3$:
\[
\hspace{-.3cm}\begin{array}{@{}lcll}
&& \E_1\equiv \E_3 \\
&\Leftrightarrow &l<0> \oplus r<a(\{\E_2\})\oplus b(\{\E_2\})>\equiv  l<0> \oplus r< a(\{\E_2\}) \oplus
b(\{\E_2\}\oplus\{\E_4\})> & \text{($\mathit{FP}$)}\\
&\Leftrightarrow& l<0> \oplus r<a(\{\emp\})\oplus
b(\{\emp\})>\equiv  l<0> \oplus r< a(\{\emp\}) \oplus
b(\{\emp\}\oplus\{\emp\})> &\text{($\E_2\equiv
\emp\equiv \E_4$)}\\
&\Leftrightarrow& l<0> \oplus r<a(\{\emp\})\oplus
b(\{\emp\})>\equiv  l<0> \oplus r< a(\{\emp\}) \oplus
b(\{\emp\})> &\text{($\mathit{Idempotency}$)}\\
\end{array}
\]
\end{example}
The relation $\equiv$ gives rise to the (surjective) equivalence map
$[-]\colon \Exp_{\F\lhd \G} \to \Exp_{\F\lhd \G}/_{\equiv}$ defined by $[\E] =
\{\E'\mid \E\equiv\E'\}$. The following diagram summarizes the maps we
have defined so far:
\[
\xymatrix{
\Exp_{\F\lhd \G}\ar[d]_{\delta_{\F\lhd \G}} \ar[r]^{[-]} &
\Exp_{\F\lhd \G}/_{\equiv}\\
\F(\Exp_{\F\lhd \G})\ar[r]_{\F([-])} & \F (\Exp_\G/_{\equiv}) 
}
\]
In order to complete the diagram, we next prove that
$\equiv$ is contained in the kernel
of $\F ([-]) \circ \delta_{\F\lhd \G}$. This will
guarantee the existence of a well-defined function 
$$\partial_{\F\lhd
\G}\colon
\Exp_{\F\lhd \G}/_{\equiv} \to \F(\Exp_\G/_{\equiv})$$ which, when $\F=\G$,
provides $\Exp_\G/_{\equiv}$ with a coalgebraic structure $\partial_\G\colon
\Exp_\G/_{\equiv}\to \G(\Exp_\G/_{\equiv})$ (as before, we write $\partial_\G$ to abbreviate
$\partial_{\G\lhd \G}$) and which makes $[-]$ a homomorphism of
coalgebras.
 
\begin{lemma}\label{lemma:h_welldef}
Let $\G$ and $\F$ be non\hyph deterministic functors, with $\F\lhd \G$. For all $\E_1,\E_2\in \Exp_{\F\lhd \G}$ with $\E_1\equiv\E_2$,
$$\F([-])\circ
\delta_{\F\lhd \G}(\E_1) = \F([-])\circ
\delta_{\F\lhd \G}(\E_2)$$
\end{lemma}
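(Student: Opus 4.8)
The plan is to show that the relation $\equiv$ is contained in the kernel of the map $\Phi_{\F\lhd \G} = \F([-])\circ \delta_{\F\lhd \G}$, by induction on the derivation of $\E_1\equiv \E_2$; equivalently, I would verify that $\ker \Phi_{\F\lhd \G}$ contains every generating identity and is closed under reflexivity, symmetry, transitivity, the congruence rule $(\mathit{Cong})$ and the conditional rule $(\mathit{Unique})$ (I will drop the subscript on $\Phi$ when harmless). Reflexivity, symmetry and transitivity are immediate, since $\ker \Phi_{\F\lhd \G}$ is by construction an equivalence relation. For the four semilattice identities I would reuse verbatim the four auxiliary facts \mycirc{$1$}--\mycirc{$4$} established in the proof of Lemma~\ref{lemma:equivACIE}: their proofs use only that $\equiv$ contains $(\mathit{Associativity})$, $(\mathit{Commutativity})$, $(\mathit{Idempotency})$ and $(\mathit{Empty})$, so they hold unchanged for the finer quotient map $[-]=[-]_\equiv$.

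Among the remaining axioms, most are routine. The propagation rules in the third group are checked by unfolding the definitions of $\delta_{\F\lhd \G}$, $\Plus_{\F\lhd \G}$ and $\Empty_{\F\lhd \G}$ on both sides and comparing; each reduces to one of the facts \mycirc{$1$}--\mycirc{$4$} (for instance $(\times - \oplus - L)$ reduces to $\F_2([-])(\Plus_{\F_2\lhd \G}(\Empty_{\F_2\lhd \G},\Empty_{\F_2\lhd \G}))=\F_2([-])(\Empty_{\F_2\lhd \G})$, while $(+ - \oplus - \top)$ holds because both sides send the pair of mismatched injections $\kappa_1,\kappa_2$ to $\top$). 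The fixed-point axiom $(\mathit{FP})$ is immediate, since $\delta_{\G\lhd \G}(\mu x.\gamma)=\delta_{\G\lhd \G}(\gamma[\mu x.\gamma/x])$ holds \emph{by definition}, so the two sides already agree before $\G([-])$ is applied; and $(\alpha\text{-}\mathit{equiv})$ follows because $\delta$ is insensitive to renaming of bound variables. For $(\mathit{Cong})$ I would argue by induction on the context $\E$, using the well-founded order on pairs (type, expression) from equation~(\ref{ind:order5}) that governs the definition of $\delta$, with induction hypothesis $\Phi(\E_1)=\Phi(\E_2)$ supplied by the premise: the base case $\E=x$ is exactly this hypothesis; the $\oplus$ case needs the short auxiliary compatibility that $\F([-])(u)=\F([-])(u')$ and $\F([-])(v)=\F([-])(v')$ imply $\F([-])(\Plus_{\F\lhd \G}(u,v))=\F([-])(\Plus_{\F\lhd \G}(u',v'))$ (itself an induction on $\F$, bottoming out at $\id$ via the congruence property of $\equiv$); and at each guard the relevant successor subexpression is simply passed through $[-]$, where the congruence axiom yields equality of classes directly.

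The one genuinely delicate case is the conditional rule $(\mathit{Unique})$, and this is where I expect the real work to lie. Naively one would want to deduce $\Phi(\mu x.\gamma)=\Phi(\E)$ from the hypothesis $\Phi(\gamma[\E/x])=\Phi(\E)$, but since $\Phi$ records only a single unfolding step this appears to require the conclusion $\mu x.\gamma\equiv \E$ itself, hence to be circular. The resolution is to exploit that $\gamma$ is \emph{guarded} in $x$: every occurrence of $x$ then sits at a successor ($\id\lhd \G$) position, so in $\delta_{\G\lhd \G}(\gamma[t/x])$ the substituend $t$ occurs only inside subexpressions that are fed to $[-]$. This yields a \emph{guarded substitution} fact, namely that $t_1\equiv t_2$ implies $\Phi_{\G\lhd \G}(\gamma[t_1/x])=\Phi_{\G\lhd \G}(\gamma[t_2/x])$, proved using only the congruence property of $\equiv$ to identify the classes of the successors and \emph{not} invoking the statement of the lemma. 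Granting this, the $(\mathit{Unique})$ case closes cleanly: the rule makes $\mu x.\gamma\equiv \E$ available, so the guarded substitution fact gives $\Phi(\gamma[\mu x.\gamma/x])=\Phi(\gamma[\E/x])$, and combining it with the definitional identity $\delta_{\G\lhd \G}(\mu x.\gamma)=\delta_{\G\lhd \G}(\gamma[\mu x.\gamma/x])$ and the induction hypothesis $\Phi(\gamma[\E/x])=\Phi(\E)$ yields $\Phi(\mu x.\gamma)=\Phi(\gamma[\mu x.\gamma/x])=\Phi(\gamma[\E/x])=\Phi(\E)$, as required. Isolating and proving the guarded substitution fact, and in particular verifying that guardedness really forces all occurrences of $x$ into successor positions, is the step I would treat most carefully, since it is precisely what breaks the apparent circularity.
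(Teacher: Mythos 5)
Your proposal is correct and follows essentially the same route as the paper's own proof: induction on the length of derivations of $\equiv$, reuse of the four facts from Lemma~\ref{lemma:equivACIE} for the semilattice axioms, direct computation of $\delta_{\F\lhd\G}$ for the group-3 axioms and $(\mathit{FP})$, and an inner induction along the order of equation~(\ref{ind:order5}) for $(\mathit{Cong})$, with the $(\mathit{Unique})$ case discharged through the congruence machinery applied to the derivable equivalence $\mu x.\gamma\equiv\E$. Your isolated ``guarded substitution fact'' is precisely what the paper's terse step ``(by $(\mathit{Cong})$)'' in the $(\mathit{Unique})$ case implicitly relies on --- the hypothesis at the level of $\F([-])\circ\delta_{\F\lhd\G}$ is needed only at unguarded occurrences of $x$, which guardedness of $\gamma$ rules out --- so you have made the paper's argument explicit rather than taken a different route.
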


\proof
By induction on the length
of
derivations of $\equiv$.

First, let us consider derivations of length 1. We need to prove the
result for all the axioms in items $1.$ and $3.$ plus the axioms
$\mathit{FP}$ and $(\alpha-\mathit{equiv})$.

For the axioms in $1.$ the result follows by
Lemma~\ref{lemma:equivACIE}. The axiom
$\mathit{FP}$ follows trivially because of the definition
of $\delta_{\G}$, since $\delta_{\G}(\mu x . \gamma) =
\delta_{\G}(\gamma[\mu x.\gamma/x])$ and thus $\G([-]) \circ
\delta_\G(\mu x . \gamma)= \G([-])\circ \delta_\G(\gamma[\mu x.\gamma/x])$. 

\medskip

For the axiom $(\alpha-\mathit{equiv})$ we use the
$(\mathit{Cong})$ rule, which is proved below:
$$
\begin{array}{@{}l@{\ }ll}
&\G([-])\circ \delta_\G(\mu x.\gamma) \\
=& \G([-])\circ \delta_\G(\gamma[\mu
x.\gamma/x]) & \text{(def. of $\delta_\G$)}\\
=&  \G([-])\circ \delta_\G(\gamma[\mu
y.\gamma[y/x]/x])&\text{(by $(\mathit{Cong})$)}\\
=& \G([-])\circ \delta_\G(\gamma[y/x][\mu
y.\gamma[y/x]/y])& \text{\small($A[B[y/x]/x] = A[y/x][B[y/x]/y]$, $y$ not
free in $\gamma$)}\\
=& \G([-])\circ \delta_\G(\mu y. \gamma[y/x])) & \text{(def. of $\G([-])\circ \delta_\G$)}
\end{array}$$
\medskip

Let us next show the proof for some of the axioms in $3.$. The omitted
cases are similar. We show for each axiom $\E_1\equiv \E_2$ that
$\delta_{\F\lhd \G} (\E_1)=\delta_{\F\lhd \G} (\E_2)$.
\medskip

\begin{longtable}{@{}ll}
\fbox{$\bot_\B \equiv \emp$}&
\fbox{$b_1\oplus b_2 \equiv b_1\vee_\B b_2$}\\[2.2ex]
$
\delta_{\B\lhd \G}(\bot_\B)
= \bot_\B
= \delta_{\B\lhd \G}(\emp)
$
&
$\delta_{\B\lhd \G}(b_1\vee_\B b_2)
= b_1\vee_\B b_2
= \delta_{\B\lhd \G}(b_1\oplus b_2)
$
\\\\\\
\fbox{$l(\emp) \equiv \emp$}\\[2.2ex]
\multicolumn{2}{l}{
$
\delta_{\F_1\times \F_2\lhd \G}(l(\emp))
=
<\Empty_{\F_1\lhd \G}, \Empty_{\F_2\lhd \G}>
 = \delta_{\F_1\times \F_2\lhd \G}(\emp)
$}\\ \\\\
\fbox{$l(\E_1\oplus\E_2) \equiv l(\E_1)\oplus l(\E_2)$} 
\\[2ex]
\multicolumn{2}{l}{
$
\begin{array}{@{}c@{\ }l}
& \delta_{\F_1\times \F_2\lhd \G}(l(\E_1\oplus\E_2))\\
=&
<\delta_{\F_1\lhd \G}(\E_1\oplus\E_2), \Empty_{\F_2\lhd \G}>)\\
=&
<\Plus_{\F_1\lhd \G}(\delta_{\F_1\lhd \G}(\E_1),\delta_{\F_1\lhd
\G}(\E_2)
), \Plus_{\F_2\lhd \G}(\Empty_{\F_2\lhd \G},\Empty_{\F_2\lhd \G})>)\\
 =& \Plus_{\F_1\times \F_2}(<\delta_{\F_1\lhd \G}(\E_1),
\Empty_{\F_2\lhd \G}>,
<\delta_{\F_1\lhd \G}(\E_2), \Empty_{\F_2\lhd \G} >\\
=&
\delta_{\F_1\times \F_2\lhd \G}(l(\E_1)\oplus l(\E_2)))
\end{array}$}
\\\\
{\fbox{$l[\E_1\oplus\E_2] \equiv l[\E_1]\oplus
l[\E_2]$}} & \fbox{ $l[\E_1]\oplus r[\E_2] \equiv
l[\emp]\oplus r[\emp]$}
\\[2ex]$
\begin{array}{@{}c@{\ }l}
&\delta_{\F_1\myplus \F_2\lhd \G}(l[\E_1\oplus\E_2])\\
=
&\kappa_1(\delta_{\F_1\lhd \G}(\E_1\oplus\E_2))\\
 =& \Plus_{\F_1\myplus \F_2}(\kappa_1(\delta_{\F_1\lhd \G}(\E_1)),
\kappa_1(\delta_{\F_1\lhd \G}(\E_2))\\
=&
\delta_{\F_1\myplus \F_2\lhd \G}(l[\E_1]\oplus l[\E_2])
\end{array}$& $
\begin{array}{@{}c@{\ }l}
&\delta_{\F_1\myplus \F_2\lhd \G}(l[\E_1]\oplus r[\E_2])\\
=& \Plus_{\F_1\myplus \F_2}(\kappa_1(\delta_{\F_1\lhd \G}(\E_1)),
\kappa_2(\delta_{\F_2\lhd \G}(\E_2)))\\
=&\top\\
=& \Plus_{\F_1\myplus \F_2}(\kappa_1(\delta_{\F_1\lhd
\G}(\emp)),
\kappa_2(\delta_{\F_2\lhd \G}(\emp)))\\
=&
\delta_{\F_1\myplus \F_2\lhd \G}(l[\emp]\oplus r[\emp])
\end{array}$
\end{longtable}
Note that if we would have the axioms
$l[\emp] \equiv \emp$ and $r[\emp] \equiv \emp$ in
the axiomatization presented above, this theorem would not hold. 
$$
\begin{array}{l}
\delta_{\F_1\myplus \F_2\lhd \G}(l[\emp])
=\kappa_1([\bot])
\neq 
\bot 
=
\delta_{\F_1\myplus \F_2\lhd \G}(\emp)\\
\delta_{\F_1\myplus \F_2\lhd \G}(r[\emp])
=\kappa_2([\bot])
\neq 
\bot 
=
\delta_{\F_1\myplus \F_2\lhd \G}(\emp)

\end{array}
$$
Derivations with length $k\gr1$ can be obtained by two rules:
$(\mathit{Unique})$ or $(\mathit{Cong})$.
For the first (which uses the second), suppose that we have derived $\mu
x.\gamma \equiv \E$ and that we have already proved $\gamma[\E/x] \equiv
\E$. Then, we have:
\[
\begin{array}{@{}lcll}
\G([-])\circ \delta_\G(\mu x.\gamma) 
&=& \G([-])\circ \delta_\G(\gamma [\mu x.\gamma/x]) &\text{(def. $\delta_\G$)}\\
&=& \G([-])\circ \delta_\G(\gamma [\E /x]) &\text{(by $(\mathit{Cong})$)}\\
&{=}& \G([-])\circ \delta_\G(\E)&\text{(induction hypothesis)}
\end{array}
\]
For $(\mathit{Cong})$, suppose that we
have derived $\E[\E_1/x] \equiv \E[\E_2/x_2]$ and that we have already derived
$\E_1\equiv \E_2$, which gives us, as induction hypothesis, the equality
\begin{equation}\label{eq:ind}
(\F [-])(\delta_{\F\lhd \G}(\E_1))=(\F [-])(\delta_{\F\lhd \G}(\E_2))
\end{equation}
This equation is precisely what we need to prove the case  $\E=x$ (and
thus $\E_1,\E_2:\G\lhd \G$): 
\[
\begin{array}{lcll}
(\G [-])(\delta_{\G}(x[\E_1/x] ) &=& (\G[-])(\delta_{\G}(\E_1))\\
&=& (\G [-])(\delta_{\G}(\E_2)) &\text{(\ref{eq:ind})} \\
&=& (\G [-])(\delta_{\G}(x[\E_2/x]))
\end{array}
\]
For the cases $\E\neq x$, we prove that $\delta_{\F\lhd \G}(\E[\E_1/x]) =
\delta_{\F\lhd \G}(\E[\E_2/x])$, by induction on
the product of types of
expressions and expressions
(using the order defined in equation~(\ref{ind:order5})).
We show a few cases, the omitted ones are similar.
\[
\begin{array}{@{}lcl}
\delta_{\G\lhd \G}  ((\mu y. \E)[\E_1/x]) &=& \delta_{\G\lhd \G}
(\E[\E_1/x][\mu y.
\E/y]))\\
& \stackrel{\mathit{(IH)}}=& \delta_{\G\lhd \G}
(\E[\E_2/x][\mu y.
\E/y]))= \delta_{\G\lhd \G}  ((\mu y. \E)[\E_2/x])\\[1.1ex]
\delta_{\F_1\times \F_2 \lhd \G} (l<\E>[\E_1/x]) &=& <
\delta_{\F_1\lhd \G}(\E[\E_1/x]), \Empty_{\F_2\lhd \G} >\\&
\stackrel{\mathit{(IH)}}=& <
\delta_{\F_1\lhd \G}(\E[\E_2/x]), \Empty_{\F_2\lhd \G} >
=\delta_{\F_1\times \F_2 \lhd \G} (l<\E>[\E_2/x])\\[1.1ex]
\delta_{\F_1\myplus \F_2 \lhd \G} (l[\E][\E_1/x])& =&
\kappa_1(\delta_{\F_1
\lhd \G}(\E[\E_1/x]))\\&\stackrel{(\mathit{IH})}= &\kappa_1(\delta_{\F_1
\lhd \G}(\E[\E_2/x]))=\delta_{\F_1\myplus \F_2 \lhd \G} (l[\E][\E_2/x])
\end{array}
\]
\qedhere
Thus, we have a well-defined function $\partial_{\F \lhd
\G}\colon\Exp_{\F\lhd \G}/_{\equiv}\to
\F(\Exp_{\G}/_{\equiv})$, which makes the diagram above commute, that is $\partial_{\F\lhd \G}([\E]) = (\F[-])\circ
\delta_{\F\lhd \G}(\E)$. This provides the set
$\Exp_\G/_\equiv$ with a coalgebraic
structure $\partial_\G\colon \Exp_\G/_\equiv \to \G(\Exp_\G/_\equiv)$ which makes $[-]$ a
homomorphism between the coalgebras $(\Exp_\G, \delta_\G)$ and
$(\Exp_\G/_\equiv, \partial_\G)$. 
\subsection{Soundness and Completeness}\label{sec:sound_compl}
Next we show that the axiomatization introduced in the previous section is sound and complete.

Soundness is a direct consequence of the fact that the equivalence map $[-]$ is a coalgebra homomorphism.  

\begin{theorem}[Soundness]
Let $\G$ be a non\hyph deterministic functor. For all
$\E_1,\E_2\in \Exp_{\G}$,
\[
\E_1 \equiv \E_2 \Rightarrow \E_1 \sim \E_2
\]
\end{theorem}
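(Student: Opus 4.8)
The plan is to derive soundness directly from the structural results already established, rather than constructing a fresh bisimulation by hand. The key observation, recorded just after Lemma~\ref{lemma:h_welldef}, is that the quotient map $[-]\colon (\Exp_\G,\delta_\G) \to (\Exp_\G/_\equiv, \partial_\G)$ is a homomorphism of $\G$-coalgebras. Together with the finality of $(\Omega_\G,\omega_\G)$, this is essentially all we need.

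First I would unfold the hypothesis: $\E_1 \equiv \E_2$ means exactly $[\E_1] = [\E_2]$ in $\Exp_\G/_\equiv$, by definition of the equivalence map. Next I would transport this identity into the final coalgebra. Since $[-]$ is a homomorphism and $\mathbf{beh}_{\Exp_\G/_\equiv}\colon \Exp_\G/_\equiv \to \Omega_\G$ is a homomorphism, their composite $\mathbf{beh}_{\Exp_\G/_\equiv}\circ[-]$ is a homomorphism from $(\Exp_\G,\delta_\G)$ into the final coalgebra. By uniqueness of such a map (finality), it must coincide with $\mathbf{beh}_{\Exp_\G}$. Evaluating at $\E_1$ and $\E_2$ then gives
\[
\mathbf{beh}_{\Exp_\G}(\E_1) = \mathbf{beh}_{\Exp_\G/_\equiv}([\E_1]) = \mathbf{beh}_{\Exp_\G/_\equiv}([\E_2]) = \mathbf{beh}_{\Exp_\G}(\E_2),
\]
so $\E_1$ and $\E_2$ are behaviourally equivalent as states of the single coalgebra $(\Exp_\G,\delta_\G)$. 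Finally I would invoke the correspondence stated in Section~\ref{sec:ndc}: for non\hyph deterministic functors, $\mathbf{beh}_S(s) = \mathbf{beh}_T(t)$ implies $s \sim t$. Applying it with $s = \E_1$ and $t = \E_2$ (both in $(\Exp_\G,\delta_\G)$) yields $\E_1 \sim \E_2$, as required.

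There is no genuinely hard step here: the real work was done in Lemma~\ref{lemma:h_welldef}, which is what makes $[-]$ a coalgebra homomorphism, so the theorem is in effect a corollary. The only points demanding care are (i) the appeal to finality to identify $\mathbf{beh}_{\Exp_\G}$ with $\mathbf{beh}_{\Exp_\G/_\equiv}\circ[-]$, and (ii) the use of the right-to-left direction of the bisimilarity/behavioural-equivalence coincidence, which is functor-dependent and holds precisely because $\G$ is non\hyph deterministic.

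An alternative, more hands-on route would be to show directly that the relation $\{<\E_1,\E_2> \mid \E_1\equiv\E_2\}$ is a $\G$-bisimulation on $(\Exp_\G,\delta_\G)$, exploiting that it is exactly the kernel of the homomorphism $[-]$. I would expect this path to be more delicate, since it forces explicit reasoning about the relation lifting $\overline\G$ and about weak-pullback preservation, whereas the route via finality requires only the facts already assembled in the excerpt.
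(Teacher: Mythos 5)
Your proof is correct and follows essentially the same route as the paper's: both unfold $\E_1\equiv\E_2$ into $[\E_1]=[\E_2]$, use finality together with the fact that $[-]$ is a coalgebra homomorphism (Lemma~\ref{lemma:h_welldef}) to get $\mathbf{beh}_{\Exp_\G}(\E_1)=\mathbf{beh}_{\Exp_{\G}}(\E_2)$, and then conclude $\E_1\sim\E_2$ from the coincidence of behavioural equivalence and bisimilarity. You even make explicit the same point the paper relies on implicitly, namely that the right-to-left direction of that coincidence is functor-dependent and holds for the non\hyph deterministic functors considered here.
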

\proof 
Let $\G$ be a non\hyph deterministic functor, let $\E_1,\E_2\in \Exp_{\G}$ and suppose that
$\E_1\equiv\E_2$. Then, $[\E_1] = [\E_2]$ and, thus
$$\mathbf{beh}_{\Exp_\G/_{\equiv}}([\E_1])=
\mathbf{beh}_{\Exp_\G/_{\equiv}}([\E_2])$$ where $\mathbf{beh}_S$
denotes, for any $\G$-coalgebra $(S,g)$, the unique map into the 
final coalgebra. The uniqueness of the map
into the final coalgebra and the fact that $[-]$ is
a coalgebra homomorphism implies that
$\mathbf{beh}_{\Exp_\G/_{\equiv}}\circ [-] = \mathbf{beh}_{\Exp_{\G}}$ which then yields $$
\mathbf{beh}_{\Exp_{\G}}(\E_1)
=  \mathbf{beh}_{\Exp_{\G}}(\E_2)$$ 
Since in the final
coalgebra
only the bisimilar elements are identified, $\E_1\sim\E_2$
follows.~
\qed

For completeness a bit more of work is required. Let us explain
upfront the key steps of the proof. The goal is to prove that 
$\E_1\sim \E_2 \Rightarrow \E_1\equiv \E_2$.
First, note that we have 
\begin{equation}\label{eq:comp}
\E_1\sim \E_2 \Leftrightarrow \mathbf{beh}_{\Exp_{\G}}(\E_1)
=  \mathbf{beh}_{\Exp_{\G}}(\E_2)\Leftrightarrow \mathbf{beh}_{\Exp_\G/_{\equiv}}([\E_1])=
\mathbf{beh}_{\Exp_\G/_{\equiv}}([\E_2])
\end{equation}
We then prove that $\mathbf{beh}_{\Exp_\G/_{\equiv}}$ is injective, which
is a sufficient condition to guarantee that $\E_1\equiv \E_2$ (since
it implies, together with~(\ref{eq:comp}), that $[\E_1]=[\E_2]$).

We proceed as follows. First, we factorize
 the map $\mathbf{beh}_{\Exp_\G/_{\equiv}}$ into an epimorphism followed by a
monomorphism~\cite[Theorem 7.1]{Rutten00} as shown in the following
diagram (where $I= \mathbf{beh}_{\Exp_\G/_{\equiv}}(\Exp_\G/_{\equiv})$):
\[
\xymatrix{
\Exp_\G/_{\equiv}\ar@{-->}@/{}^{1.4pc}/[rr]^-{\mathbf{beh}_{\Exp_\G/_{\equiv}}}
\ar@{>>}[r]^-{e}
\ar[d]_{\partial_{\G}} &
I\ar@{^{(}->}[r]^-{m} \ar[d]_{\overline\omega_\G} &
\Omega_\G\ar[d]^{\omega_\G}\\
\G(\Exp_\G/_{\equiv})\ar[r]& \G(I)\ar[r]
& \G(\Omega_\G)\\
}
\]
Then, we prove that (1) $(\Exp_\G/_{\equiv}, \partial_{\G})$ is a locally
finite coalgebra (Lemma~\ref{lemma:h_locally_finite}) and (2) both
coalgebras $(\Exp_\G/_{\equiv}, \partial_{\G})$ and $(I,
\overline\omega_\G)$ are final in the category of locally finite $\G$-coalgebras
(Lemmas~\ref{lemma:final1} and \ref{lemma:final2}, respectively).
Since final coalgebras are unique up to isomorphism, it follows that
$e\colon \Exp_\G/_{\equiv}\to I$
is in fact an isomorphism and therefore $\mathbf{beh}_{\Exp_\G/_{\equiv}}$ is injective, which will give us completeness.

In the case of the deterministic
automata functor $\D=2\times \id^A$, the set $I$ will be precisely the
set of regular languages, the class of languages that can be denoted
by regular expressions. This means that final locally finite
coalgebras generalize regular languages (in the same way that final coalgebras
generalize the set of all languages). 

We proceed with presenting and proving the extra lemmas needed in
order to prove completeness. We start by showing that the coalgebra 
$(\Exp_\G/_{\equiv}, \partial_{\G})$ is locally finite (note
that this implies that $(I,
\overline\omega_\G)$ is also locally finite) and that $\partial_{\G}$ is
an isomorphism. 
\begin{lemma}\label{lemma:h_locally_finite}
The coalgebra $(\Exp_\G/_{\equiv}, \partial_{\G})$ is a locally finite coalgebra.
Moreover, $\partial_{\G}$ is an isomorphism.
\end{lemma}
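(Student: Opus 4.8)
The plan is to prove the two assertions separately. Local finiteness follows almost directly from Theorem~\ref{kleene2}. Indeed, the four axioms defining $\equiv_{\mathit{ACIE}}$ all belong to group~1 of the axiomatisation of $\equiv$, so $\equiv_{\mathit{ACIE}}\subseteq\equiv$ and the quotient map factors as $[-]_\equiv = q\circ[-]_{\mathit{ACIE}}$ for a unique $q\colon\Exp_\G/_{\equiv_{\mathit{ACIE}}}\to\Exp_\G/_\equiv$. Since $[-]_{\mathit{ACIE}}$ is a surjective coalgebra homomorphism onto $(\Exp_\G/_{\equiv_{\mathit{ACIE}}},\overline\delta_\G)$ and $[-]_\equiv$ is a coalgebra homomorphism onto $(\Exp_\G/_\equiv,\partial_\G)$ (by Lemma~\ref{lemma:h_welldef} and the construction of $\partial_\G$), cancelling the epi $[-]_{\mathit{ACIE}}$ shows that $q$ is itself a coalgebra homomorphism. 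As homomorphisms carry generated subcoalgebras onto generated subcoalgebras, $<[\E]_\equiv> = q(<[\E]_{\mathit{ACIE}}>)$, which by Theorem~\ref{kleene2} is the image of a finite set and hence finite.

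It remains to show that $\partial_\G$ is a bijection, and I will do so by exhibiting a two-sided inverse built from the construction $\gamma$ of the proof of Theorem~\ref{kleene1}. That construction makes sense for $c\in\F(X)$ with $X$ any set of ``states''; taking $X=\Exp_\G$ and reading the base clause as $\gamma^{\id}_\E=\E$ (the expression itself in place of a variable), we obtain for every ingredient $\F\lhd\G$ and every $c\in\F(\Exp_\G)$ an expression $\gamma^\F_c\in\Exp_{\F\lhd\G}$. Writing $[-]$ for $[-]_\equiv$, I would define $\rho\colon\G(\Exp_\G/_\equiv)\to\Exp_\G/_\equiv$ by $\rho(y)=[\gamma^\G_c]$ for any lift $c$ of $y$ along the surjection $\G([-])$.

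The argument then reduces to three claims, each proved by induction on the order of~(\ref{ind:order5}) (equivalently on the structure of $\F$). First, (a): if $\F([-])(c)=\F([-])(c')$ then $\gamma^\F_c\equiv\gamma^\F_{c'}$, so that $\rho$ is well defined; this needs only $(\mathit{Cong})$, the group-3 structural axioms, and the ACI axioms to reorder summands in the $\pow$ case. Second, (b): $\F([-])(\delta_{\F\lhd\G}(\gamma^\F_c))=\F([-])(c)$, which uses that $\F([-])$ turns $\Plus_{\F\lhd\G}$ into a componentwise join with unit $\Empty_{\F\lhd\G}$, exactly the identities of Lemma~\ref{lemma:equivACIE} (valid for $[-]_\equiv$ since $\equiv\supseteq\equiv_{\mathit{ACIE}}$). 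Third, (c): $\gamma^\F_{\delta_{\F\lhd\G}(\E)}\equiv\E$ for all $\E\in\Exp_{\F\lhd\G}$. Granting these, $\partial_\G\circ\rho=\mathrm{id}$ is immediate from (b), while $\rho\circ\partial_\G=\mathrm{id}$ follows from (c) by choosing the lift $c=\delta_\G(\E)$; hence $\partial_\G$ is bijective. The degenerate functor $\G=\id$, for which $\Exp_\G/_\equiv$ and $\Omega_\G$ are singletons, is treated trivially and separately.

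I expect claim (c) to be the only delicate point, and within it the fixed-point case. For $\E=\mu x.\gamma$ the clause $\delta_\G(\mu x.\gamma)=\delta_\G(\gamma[\mu x.\gamma/x])$ gives $\gamma^\G_{\delta_\G(\mu x.\gamma)}=\gamma^\G_{\delta_\G(\gamma[\mu x.\gamma/x])}$; since $\E$ is guarded we have $N(\gamma[\mu x.\gamma/x])<N(\mu x.\gamma)$ (as noted after Definition~\ref{def:N}), so the induction hypothesis yields $\gamma^\G_{\delta_\G(\gamma[\mu x.\gamma/x])}\equiv\gamma[\mu x.\gamma/x]$, and the axiom $(\mathit{FP})$ folds this back to $\mu x.\gamma$. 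The remaining cases of (c) are structural and rely on two auxiliary identities, proved by a parallel induction: $\gamma^\F_{\Empty_{\F\lhd\G}}\equiv\emp$ and $\gamma^\F_{\Plus_{\F\lhd\G}(u,v)}\equiv\gamma^\F_u\oplus\gamma^\F_v$, where the $\top$-value of $\Plus$ on a summand type $\F_1\myplus\F_2$ is absorbed precisely by the axiom $(+ - \oplus - \top)$; the one-step prefix cases $l<->$, $r<->$, $l[-]$, $r[-]$, $a(-)$ and $\{-\}$ then reduce to the induction hypothesis at a strictly smaller type together with the unit axioms $l<\emp>\equiv\emp$ and its companions. Everything outside (c) is bookkeeping over the shape of $\F$ and reuse of Lemmas~\ref{lemma:equivACIE} and~\ref{lemma:h_welldef}.
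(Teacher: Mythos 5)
Your proof is correct and takes essentially the same route as the paper's: local finiteness via Theorem~\ref{kleene2} plus the observation that $\Exp_\G/_\equiv$ is a quotient of $\Exp_\G/_{\equiv_{\mathit{ACIE}}}$, and invertibility of $\partial_\G$ by building the inverse from the $\gamma$-construction of Theorem~\ref{kleene1} and proving the two-sided inverse identities (your (b) and (c), the paper's \mycirc{2} and \mycirc{1}) by induction on the order of~(\ref{ind:order5}), with the auxiliary $\Empty_{\F\lhd\G}$/$\Plus_{\F\lhd\G}$ identities and the axiom $(+ - \oplus - \top)$ handling exactly the cases you identify. The only difference is presentational: you phrase the inverse via lifts along the surjection $\G([-])$ and make the well-definedness check (a) explicit, whereas the paper defines $\overline\gamma^{\id}_{[\E]}=\E$ by an implicit choice of representative and leaves that check tacit.
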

\proof 
Local finiteness
is a direct consequence of the generalized Kleene's theorem
(Theorem~\ref{kleene2}). In the proof of Theorem~\ref{kleene2} we showed that, 
given $\E\in \Exp_{\G}$, the
subcoalgebra $<[\E]_{\mathit{ACIE}}>$ is finite. Thus, the
subcoalgebra $<[\E]>$ is also finite (since $\Exp_\G/_\equiv$ is a
quotient of $\Exp_\G/_{\equiv_{ACIE}}$).

To see that $\partial_{\G}$ is an isomorphism,
first define, for every $\F\lhd \G$,
\begin{equation}\label{eq:h_inv}
\partial^{-1}_{\F\lhd \G} (c) = [\overline\gamma^\F_{c}] 
\end{equation}
where $\overline\gamma^\F_{c}$ is defined, for $\F\neq \id$, as
$\gamma^\F_{c}$ in the proof of Theorem~\ref{kleene1}, and for $\F=\id$
as $\overline\gamma^{\id}_{[\E]} = \E$. Then, we prove that
$\partial^{-1}_{\F\lhd \G}$ has
indeed the properties \mycirc{1} 
$\partial^{-1}_{\F\lhd \G} \circ \partial_{\F\lhd \G}  = \mathit{id}_{\Exp_{\F\lhd \G}/_\equiv}$ and \mycirc{2} 
$\partial_{\F\lhd \G} \circ
\partial^{-1}_{\F\lhd \G} = \mathit{id}_{\F(\Exp_{\F\lhd \G}/_\equiv)}$.
Instantiating $\F=\G$ one derives that $\delta_\G$ is an isomorphism. It is enough to prove for 
\mycirc{1} that $\overline\gamma^\F_{\partial_{\F\lhd \G}([\E])}\equiv \E$ and
for \mycirc{2} that $\partial_{\F\lhd \G}([\overline\gamma^\F_{c}])=c$. We illustrate a few cases. The omitted ones are similar.

\medskip

\noindent\mycirc{1} By induction on the product of types of
expressions and expressions
(using the order defined in equation~(\ref{ind:order5})).
$$
\begin{array}{@{}l}
\overline\gamma^\id_{\partial_{\id\lhd \G}([\E])} = \E\\[1.5ex]
 \overline\gamma^{\F_1\times \F_2}_{\partial_{\F_1\times \F_2\lhd \G}([r<\E>])} =
l<\overline\gamma^{\F_1}_{\partial_{\F_1\lhd \G}(\emp)}>\oplus
r<\overline\gamma^{\F_2}_{\partial_{\F_2\lhd \G}(\E)}>
\stackrel{(\mathit{IH})}\equiv l<\emp>\oplus
r<\E> \equiv r<\E>\\[1.5ex]
\overline\gamma^\G_{\partial_{\G}([\mu x.\E])} =
\overline\gamma^\G_{\partial_{\G}([\E[\mu x.\E/x]])}
\stackrel{(\mathit{IH})}\equiv
\E[\mu x.\E/x]\equiv \mu x.\E
\end{array}
$$
Note that the cases $\E=\emp$ and 
$\E=\E_1\oplus\E_2$ require an extra proof (by induction on $\F$).
More precisely, one needs to prove that 
$$
\mycirc{$a$}\ \ 
 \overline\gamma^\F_{\F[-](\Empty_{\F\lhd \G})} \equiv \emp
\text{ and } \mycirc{$b$}\ \  \overline\gamma^\F_{\F[-](\Plus_{\F\lhd \G}(x_1,x_2))} \equiv \overline\gamma^\F_{\F[-](x_1)}\oplus
\overline\gamma^\F_{\F[-](x_2)}$$ It is an easy proof by induction. We
illustrate here only the cases $\F=\id$, $\F=\B$ and $\F=\F_1\times \F_2$. 
\begin{longtable}{@{}lll}
$\mycirc{$a$}$&$\overline\gamma^\id_{[\emp]} = \emp$\\[1.2ex]
&$\overline\gamma^\B_{[\bot_\B]} = \bot_\B \equiv \emp$\\[1.2ex]
&$\overline\gamma^{\F_1\times \F_2}_{<\F_1[-](\Empty_{\F_1\lhd \G}),
\F_2[-](\Empty_{\F_2\lhd \G})>}$& $=
l<\overline\gamma^{\F_1}_{\F_1[-](\Empty_{\F_1\lhd \G})}>\oplus
r<\overline\gamma^{\F_2}_{\F_2[-](\Empty_{\F_2\lhd \G})}>$\\
&&$\stackrel{(\mathit{IH})}\equiv l<\emp>\oplus  r<\emp> \equiv
\emp$\\[2ex]
$\mycirc{$b$}$ & $\overline\gamma^\id_{[x_1\oplus x_2]} = x_1\oplus x_2 =
\overline\gamma^\id_{[x_1]}\oplus \overline\gamma^\id_{[x_2]}$\\[1.4ex]
&$\overline\gamma^\B_{[x_1\vee_\B x_2]} = x_1\vee_\B x_2 \equiv  x_1\oplus
x_2$ &$= \overline\gamma^\B_{[x_1]}\oplus \overline\gamma^\B_{[x_2]}$\\[1.4ex]
&$\overline\gamma^{\F_1\times \F_2}_{{\F_1\times
\F_2}[-](\Plus_{{\F_1\times \F_2}\lhd \G}(<u_1,v_1>,<u_2,v_2>))}$
&$\stackrel{\phantom{(\mathit{IH})}}= 
\overline\gamma^{\F_1\times \F_2}_{<\Plus_{\F_1}(u_1,v_1),
\Plus_{\F_2}(u_2,v_2)>}$ \\&&$\stackrel{\phantom{(\mathit{IH})}}=
l<\overline\gamma^{\F_1}_{\Plus_{\F_1}(u_1,v_1)}>\oplus
r<\overline\gamma^{\F_2}_{\Plus_{\F_2}(u_2,v_2)}>$
\\[1.2ex]
&&$
\stackrel{(\mathit{IH})}\equiv
l<\gamma^{\F_1}_{u_1}\oplus\gamma^{\F_1}_{v_1}>\oplus
r<\gamma^{\F_2}_{u_2}\oplus \gamma^{\F_2}_{v_2}>$
\\[1.2ex]
&&$\stackrel{\phantom{(\mathit{IH})}}\equiv
(l<\gamma^{\F_1}_{u_1}>\oplus r<\gamma^{\F_2}_{u_2}>) \oplus
(l<\gamma^{\F_1}_{v_1}>\oplus r<\gamma^{\F_2}_{v_2}>)$ \\[1.2ex]
&&$\stackrel{\phantom{(\mathit{IH})}}=
\overline\gamma^{\F_1\times \F_2}_{<u_1,u_2>} \oplus
\overline\gamma^{\F_1\times \F_2}_{<v_1,v_2>}$
\end{longtable}
\noindent\mycirc{2} By induction on the structure of $\F$.
\[
\begin{array}{lcl}
\partial_{\F_1\myplus \F_2 \lhd \G}([\overline\gamma^{\F_1\myplus \F_2}_{c}])&=&
\begin{cases} \partial_{\F_1\myplus \F_2\lhd
\G}([l[\overline\gamma^{\F_1}_{c'}]])= \kappa_1(\partial_{\F_1\lhd
\G}([\overline\gamma^{\F_1}_{c'}]))&c= \kappa_1(c')\\
\partial_{\F_1\myplus \F_2\lhd
\G}([r[\overline\gamma^{\F_2}_{c'}]]) = \kappa_2(\partial_{\F_2\lhd
\G}([\overline\gamma^{\F_2}_{c'}]))
&c=\kappa_2(c') \\ \partial_{\F_1\myplus \F_2\lhd \G}([\emp])=\bot &c=\bot \\
\partial_{\F_1\myplus \F_2 \lhd
\G}([l[\emp]\oplus r[\emp]])=\top & c=\top \end{cases}\\
&\stackrel {(\mathit{IH})}=& c
\end{array}
\]
\[
\begin{array}{lcl}
\partial_{\pow \F \lhd \G}([\overline\gamma^{\pow \F}_{C}])&=&
\begin{cases} \partial_{\pow \F \lhd
\G}([\emp])= \emptyset & C=\emptyset\\
\partial_{\pow \F \lhd
\G}([\bigoplus_{c\in C}\overline\gamma^{\F_1}_{c}])=\{\partial_{\F \lhd
\G}([\overline\gamma^{\F_1}_{c}])\mid c\in C\} &
\text{otherwise}\end{cases} \\
&\stackrel {(\mathit{IH})}=& C
\end{array}
\]
\qed

Next, we prove the analogue of the following useful and intuitive
equality on regular expressions. Given a deterministic
automaton $<o,t>\colon S\to 2\times S^A$ and a state $s\in S$, the
associated regular expression $r_s$  can be written as
\begin{equation}\label{eq:regexp_decomp}
r_s = o(s) + \sum_{a\in A} a\cdot r_{t(s)(a)}
\end{equation}
using the axioms of Kleene algebra~\cite[Theorem 4.4]{Brz64}.
\begin{lemma}\label{lemma:gamma}
Let $(S,g)$ be a locally finite $\G$-coalgebra, with $\G\neq \id$, and let $s\in S$, with
$<s>=\{s_1,\ldots,s_n\}$ (where $s_1=s$). Then:
\begin{eqnarray}\label{gamma_eq} 
\expr{s_i} &\equiv& \gamma_{g(s_i)}^\G \{\expr {s_1}/x_1\} \ldots \{\expr
{s_n}/x_n\}
\end{eqnarray}
\end{lemma}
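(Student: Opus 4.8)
The plan is to exhibit $\expr{s_i}$ as a single top-level $\mu$-expression and to peel off that outermost fixed point with one application of the axiom $\mathit{(FP)}$, after which the claim reduces to the substitution bookkeeping already set up in the proof of Theorem~\ref{kleene1}. Recall from there that $\expr{s_i}=A_i^n$, that $A_i=\mu x_i.\gamma^\G_{g(s_i)}$, and that, writing $\sigma=\{A^0_1/x_1\}\cdots\{A^{n-1}_n/x_n\}$ for the full staged substitution, one has $A_i^n=A_i\,\sigma$ directly from the recurrence $A_i^{k+1}=A_i^k\{A^k_{k+1}/x_{k+1}\}$. Since $x_i$ is bound by the leading $\mu x_i$, the factor $\{A^{i-1}_i/x_i\}$ of $\sigma$ is idle, so
\[
\expr{s_i}=A_i^n=\mu x_i.\beta_i,\qquad \beta_i=\gamma^\G_{g(s_i)}\,\sigma_{\neq i},
\]
where $\sigma_{\neq i}$ is $\sigma$ with its $i$-th factor deleted and the substitution is performed under the binder $\mu x_i$, so that the free occurrences of $x_i$ arising from the substituends are captured (this is exactly what encodes the recursive loops). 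As $\expr{s_i}$ is closed, $\beta_i$ has at most $x_i$ free. Note that $\G\neq\id$ is used precisely so that the construction $A_i=\mu x_i.\gamma^\G_{g(s_i)}$ is available.

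Now I apply the fixed-point axiom to this genuine outermost $\mu$: since $\expr{s_i}=\mu x_i.\beta_i$ is closed, $\mathit{(FP)}$ yields the clean instance
\[
\expr{s_i}\;=\;\mu x_i.\beta_i\;\equiv\;\beta_i[\,\expr{s_i}/x_i\,].
\]
It remains to identify the right-hand side with $\gamma^\G_{g(s_i)}\{\expr{s_1}/x_1\}\cdots\{\expr{s_n}/x_n\}$, and this is where the work lies. Substituting the closed expression $\expr{s_i}$ for $x_i$ in $\beta_i=\gamma^\G_{g(s_i)}\,\sigma_{\neq i}$ reinstates precisely the deleted factor of $\sigma$: using the syntactic identity (\ref{eq:subst2}) repeatedly — its side condition being met because each substituend $A^{k-1}_k$ with $k>i$ no longer contains $x_i$ free — together with $A^{i-1}_i\{A^{i}_{i+1}/x_{i+1}\}\cdots\{A^{n-1}_n/x_n\}=A_i^n=\expr{s_i}$, one checks that performing $\sigma_{\neq i}$ and then the capture-free $[\,\expr{s_i}/x_i\,]$ has the same effect as the full $\sigma$. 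Hence $\beta_i[\expr{s_i}/x_i]=\gamma^\G_{g(s_i)}\,\sigma$; and since $\gamma^\G_{g(s_i)}$ is binder-free, applying the staged substitution $\sigma$ to it simply replaces each variable $x_k$ by $x_k\,\sigma=\expr{s_k}$ — the very fact computed for $\xi^{\id}_{s_k}$ in the proof of Theorem~\ref{kleene1}. Therefore $\beta_i[\expr{s_i}/x_i]=\gamma^\G_{g(s_i)}\{\expr{s_1}/x_1\}\cdots\{\expr{s_n}/x_n\}$, which is exactly (\ref{gamma_eq}).

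The main obstacle is entirely the substitution bookkeeping in the last step, and it is genuinely delicate because the substitution $\{-/-\}$ used in the construction is capturing: the intermediate expressions $A^{k-1}_k$ still carry $x_{k+1},\ldots,x_n$ free, and these are deliberately captured by surrounding $\mu$-binders so as to encode the cycles of the coalgebra. The reason it is essential to peel the fixed point off by a single top-level $\mathit{(FP)}$ — rather than applying $\mathit{(FP)}$ to $A_i$ abstractly and then pushing $\sigma$ across the resulting equivalence — is that the latter would amount to applying a capturing substitution to both sides of a derived equivalence, a step which is as strong as the conclusion itself and hence circular. By contrast, the route above uses $\mathit{(FP)}$ only on the closed expression $\expr{s_i}=\mu x_i.\beta_i$ and then a capture-free substitution of the closed term $\expr{s_i}$, so that everything is reduced to the already-established syntactic identities of Theorem~\ref{kleene1}.
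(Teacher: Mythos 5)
Your proof is correct and follows essentially the same route as the paper's: rewrite $\expr{s_i}$ as a closed top-level fixed point $\mu x_i.\,\gamma^\G_{g(s_i)}\sigma_{\neq i}$ by pushing the staged substitution under the binder, apply $\mathit{(FP)}$ (legitimate as syntactic replacement since the term is closed, as the paper notes in a footnote), and then use the identities (\ref{eq:subst1})/(\ref{eq:subst2}) to recognize the result as $\gamma^\G_{g(s_i)}\{\expr{s_1}/x_1\}\ldots\{\expr{s_n}/x_n\}$. The only cosmetic difference is in the last bookkeeping step, where the paper iterates (\ref{eq:subst2}) to upgrade each $A^{k-1}_k$ to $A^n_k$ in place, while you argue once and for all via binder-freeness of $\gamma^\G_{g(s_i)}$ and the computation $x_k\sigma=\xi^{\id}_{s_k}=\expr{s_k}$ from Theorem~\ref{kleene1} --- the same calculation, packaged differently.
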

\proof 
Let $A_i^k$, where $i$ and $k$ range from $1$ to $n$, be the terms
defined as in the proof of Theorem~\ref{kleene1}. Recall that
$\expr{s_i}=A^n_i$. We calculate:
\[
\begin{array}{@{}l@{\ }l@{\ \ }l@{}}
& \expr{ s_i }\\ = & A^n_i \\
=& (\mu x_i. \gamma^\G_{g(s_i)}) \{A^0_1/x_1\} \ldots
\{A_n^{n-1}/x_n\}\\[1ex] 
=& \mu x_i. (\gamma^\G_{g(s_i)} \{A^0_1/x_1\} \ldots
\{A_{i-1}^{i-2}/x_{i-2}\}\{A^i_{i+1}/x_{i+1}\} \ldots \{A_n^{n-1}/x_n\}
)\\[1ex] 
\equiv& \gamma^\G_{g(s_i)} \{A^0_1/x_1\} \ldots
\{A_{i-1}^{i-2}/x_{i-2}\}\{A^i_{i+1}/x_{i+1}\} \ldots \{A_n^{n-1}/x_n\}
\{A^n_i/x_i\} &\text{(fixed point axiom\footnotemark )}\\[1ex] 
=& \gamma^\G_{g(s_i)} \{A^0_1/x_1\} \ldots \{A_n^{n-1}/x_n\} &  \text{(by
\ref{eq:subst1})}\\[1ex] 
=&  \gamma^\G_{g(s_i)} \{A^0_1\{A^1_2/x_2\}\ldots \{A_n^{n-1}/x_n\}/x_1\} \ldots \{A_n^{n-1}/x_n\} &  \text{(by
\ref{eq:subst2})}\\[1ex] 
=&  \gamma^\G_{g(s_i)} \{A^n_1/x_1\}\{A_2^1/x_2\}
\ldots \{A_n^{n-1}/x_n\} &  \text{(def. $A_1^{n}$)}\\[2ex] 
\vdots& \text{(repeat last 2 steps for $A_2^1,\ldots,
A_{n-1}^{n-2}$)}\\[2ex]
=&  \gamma^\G_{g(s_i)} \{A^n_1/x_1\}\{A_2^n/x_2\}
\ldots \{A_n^n/x_n\} &  \text{($A_{n-1}^{n} = A^n_n$)}\\
\end{array}
\]
\footnotetext{Note that the fixed point axiom can be formulated using syntactic replacement rather than substitution -- $\gamma\{\mu x.\gamma/x\} \equiv \mu x.\gamma$ -- since $\mu x . \gamma$ is a closed term.}
\qed

Instantiating~(\ref{gamma_eq}) for $<o,t>\colon S\to
2\times S^A$, one can easily spot the similarity with
equation~(\ref{eq:regexp_decomp}) above:
\[
\expr s \equiv l<o(s)> \oplus r\Big<\bigoplus_{a\in A} a\big(\expr{
t(s)(a)}\big)\Big>
\]

Next, we prove that there exists a coalgebra homomorphism between any
locally finite $\G$- coalgebra $(S,g)$ and $(\Exp_\G/_{\equiv}, \partial_{\G})$.

\begin{lemma}\label{lemma:ceil-exists}
Let $(S,g)$ be a locally finite $\G$-coalgebra. There exists a coalgebra
homomorphism $\ceil{-}
\colon S \to \Exp_\G/_{\equiv}$.
\end{lemma}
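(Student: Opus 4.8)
The plan is to define the homomorphism by reusing the expressions constructed in Theorem~\ref{kleene1}. For $s\in S$ with generated subcoalgebra $<s>=\{s_1,\ldots,s_n\}$ and $s_1=s$, I set $\ceil{s}=[\expr{s}]$, the $\equiv$-class of the expression $\expr{s}\in\Exp_\G$ produced there (if $\G=\id$ the statement is immediate, since every state maps to $[\emp]$, so assume $\G\neq\id$). The first task is to check that this is a well-defined total function $S\to\Exp_\G/_\equiv$. There are two apparent sources of ambiguity: the enumeration of $<s>$ and the fact that a state can sit inside many generated subcoalgebras. The former is harmless, since, as remarked after Theorem~\ref{kleene1}, different enumerations yield expressions that differ only by renaming of bound variables, and $\alpha$-equivalence is contained in $\equiv$. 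For the latter I would observe that $\expr{s}$ is in fact determined by $<s>$ alone: the only variables $x_j$ occurring (freely, at any stage of the substitution process) in $\gamma^\G_{g(s)}$ and its iterated unfoldings correspond to states reachable from $s$, i.e.\ to states of $<s>$, so all substitutions for states outside $<s>$ are vacuous. Consequently, for any $s_j\in<s>$ the expression $\expr{s_j}$ computed inside $<s>$ coincides, up to renaming, with the one computed inside $<s_j>$, so it genuinely represents $\ceil{s_j}$.

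Next I would establish the homomorphism identity $\partial_\G\circ\ceil{-}=\G(\ceil{-})\circ g$, that is, $\partial_\G(\ceil{s_i})=\G(\ceil{-})(g(s_i))$ for every $s_i\in<s>$. Unfolding definitions, $\partial_\G(\ceil{s_i})=(\G[-])(\delta_\G(\expr{s_i}))$. I would feed this the decomposition of Lemma~\ref{lemma:gamma}, namely $\expr{s_i}\equiv\gamma^\G_{g(s_i)}\{\expr{s_1}/x_1\}\cdots\{\expr{s_n}/x_n\}$, which is legitimate because $\equiv$ lies in the kernel of $\G([-])\circ\delta_\G$ (Lemma~\ref{lemma:h_welldef}). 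Writing $\sigma$ for the substitution $\{\expr{s_1}/x_1\}\cdots\{\expr{s_n}/x_n\}$, it then suffices to prove the sublemma that for every ingredient $\F\lhd\G$ and every $c\in\F<s>$,
\[
(\F[-])\big(\delta_{\F\lhd\G}(\gamma^\F_c\,\sigma)\big)=\F(\ceil{-})(c),
\]
where $\ceil{-}$ on the right is restricted to $<s>$. Instantiating $\F=\G$ and $c=g(s_i)$ yields exactly the required equation.

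The sublemma is proved by induction on the structure of $\F$, along the order of equation~(\ref{ind:order5}). The base cases are immediate: $\gamma^\id_{s_j}\sigma=\expr{s_j}$ with $\delta_{\id\lhd\G}(\expr{s_j})$ mapping under $[-]$ to $\ceil{s_j}$, while $\gamma^\B_b\sigma=b$ is fixed by both $\delta_{\B\lhd\G}$ and $\B[-]$. The cases for $\times$, $\myplus$, $(-)^A$ and $\pow$ unfold $\gamma^\F_c$ according to its defining clauses, push $\delta_{\F\lhd\G}$ inside using its definition on $\oplus$, $l<->$, $r<->$, $l[-]$, $r[-]$, $a(-)$ and $\{-\}$, and then apply the induction hypothesis componentwise. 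The one recurring technicality is the bookkeeping of the auxiliary constants $\Empty_{\F'\lhd\G}$ and pointwise sums $\Plus_{\F'\lhd\G}$ that $\delta$ introduces; after applying $\F[-]$ these are absorbed precisely by the identities $(\F'[-])(\Plus_{\F'\lhd\G}(\Empty_{\F'\lhd\G},x))=(\F'[-])(x)$ together with the associativity, commutativity and idempotency facts of Lemma~\ref{lemma:equivACIE}.

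I expect the main obstacle to be not the homomorphism computation itself — essentially the content already packaged in items \mycirc{1} and \mycirc{2} of the proof of Theorem~\ref{kleene1} and in Lemma~\ref{lemma:gamma} — but the well-definedness step: making precise that the syntactic construction of $\expr{s}$, phrased relative to a chosen enumeration of an ambient finite subcoalgebra, descends to a genuine function on all of $S$ after quotienting by $\equiv$. Once that is settled the homomorphism property follows mechanically, and it is exactly what is needed downstream: combined with Lemma~\ref{lemma:h_locally_finite}, the map $\ceil{-}$ witnesses that $(\Exp_\G/_\equiv,\partial_\G)$ is final among locally finite $\G$-coalgebras.
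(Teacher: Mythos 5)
Your proof follows the paper's own argument essentially verbatim: you define $\ceil{-}=[-]\circ\expr{-}$, reduce the homomorphism identity via Lemma~\ref{lemma:gamma} (with Lemma~\ref{lemma:h_welldef} justifying passing to $\equiv$-classes), and prove exactly the paper's generalized equality~(\ref{eq:interm}), $\partial_{\F\lhd\G}([\gamma^\F_c\,\sigma])=\F\ceil{-}(c)$ for $c\in\F<s>$, by induction on the ingredient $\F$, instantiating $\F=\G$, $c=g(s_i)$ at the end. The only addition is your explicit well-definedness check (independence of the enumeration of $<s>$ and of the ambient generated subcoalgebra), which the paper dispatches with the remark after Theorem~\ref{kleene1} that different enumerations yield the same expression up to renaming of bound variables --- a welcome but inessential refinement.
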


\proof 
We define $\ceil{-}= [-]\circ \expr{-}$, where $\expr{-}$ is as
 in the proof of Theorem~\ref{kleene1}, associating to a state $s$ of
a locally finite coalgebra an expression $\expr s$ with $s\sim \expr s$. To prove that $\ceil{-}$ is a
homomorphism we need to verify that $(\G\ceil -)\circ g = \partial_\G\circ
\ceil -$.

If $\G=\id$, then  $(\G\ceil -)\circ g (s_i) = [\emp] = \partial_\G (\ceil
{s_i})$. For $\G\neq \id$ we calculate, using Lemma~\ref{lemma:gamma}:
\begin{eqnarray*}
\partial_\G(\ceil {s_i})
= \partial_\G([\gamma_{g(s_i)}^\G [\expr {s_1}/x_1] \ldots [\expr
{s_n}/x_n]]) 
\end{eqnarray*}
and we then prove the more general equality, for $\F\lhd \G$ and
$c\in \F<s>$:
\begin{equation}\label{eq:interm}
\partial_{\F\lhd \G} ([\gamma_{c}^\F [\expr {s_1}/x_1] \ldots [\expr
{s_n}/x_n]]) = \F \ceil - (c)
\end{equation}
The intended equality then follows by taking $\F=\G$ and $c=g(s_i)$. 
Let us prove the equation (\ref{eq:interm}) by induction on $\F$.

\medskip 
\fbox{$\F=\id$} $c= s_j \in <s>$
\begin{eqnarray*}
\partial_{\id\lhd \G} ([\gamma_{s_j}^{\id} [\expr {s_1}/x_1] \ldots [\expr
{s_n}/x_n]])= [\expr {s_j}]
 =  \ceil{s_j}
\end{eqnarray*}

\fbox{$\F=\B$} $c= b \in \B$
\begin{eqnarray*}
\partial_{\B\lhd \G} ([\gamma_{b}^{\B} [\expr {s_1}/x_1] \ldots [\expr
{s_n}/x_n]]) = [b]
 =  \B \ceil{-} (b)
\end{eqnarray*}

\fbox{$\F=\F_1\times \F_2$} $c=<c_1,c_2> \in (\F_1\times \F_2) <s>$

\begin{eqnarray*}
&&\partial_{\F_1\times \F_2\lhd \G} ([\gamma_{<c_1,c_2>}^{\F_1\times \F_2} [\expr {s_1}/x_1] \ldots [\expr
{s_n}/x_n]])\\ &=& \partial_{\F_1\times \F_2\lhd \G} ([l(\gamma_{c_1}^{\F_1})\oplus
r(\gamma_{c_2}^{\F_2})[\expr {s_1}/x_1] \ldots [\expr
{s_n}/x_n]])\\
&=& < \partial_{\F_1\lhd \G} ([\gamma_{c_1}^{\F_1}[\expr {s_1}/x_1] \ldots [\expr
{s_n}/x_n]]), \partial_{\F_2\lhd \G}
([\gamma_{c_2}^{\F_2}[\expr {s_1}/x_1] \ldots [\expr
{s_n}/x_n]]) >\\
& \stackrel{(\mathit{IH})}= & <\F_1 \ceil - (c_1), \F_2 \ceil - (c_2)>\\
&=& (\F_1\times \F_2 \ceil -) (c)
\end{eqnarray*}

\fbox{$\F=\F_1\myplus \F_2$}, 
\fbox{$\F=\F_1^A$} and 
\fbox{$\F=\pow \F_1$}: similar to $\F_1\times \F_2$.
\qedhere\medskip

\noindent We can now prove that the coalgebras $(\Exp_\G/_{\equiv},
\partial_{\G})$ and $(I,\overline \omega_\G)$ are both final in the
category of locally finite $\G$-coalgebras.
\begin{lemma}\label{lemma:final1}
The coalgebra $(I,\overline \omega_\G)$ is final in the category
$\mathit{Coalg}(\G)_{{\mathbf{LF}}}$.
\end{lemma}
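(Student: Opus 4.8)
The plan is to read off finality of $(I,\overline\omega_\G)$ directly from the epi--mono factorization already displayed, using two facts about the ambient final coalgebra: that $\Omega_\G$ is final in all of $\mathit{Coalg}(\G)$, and that $m\colon I \hookrightarrow \Omega_\G$ is a \emph{monomorphism} of coalgebras. Finality in $\mathit{Coalg}(\G)_{\mathbf{LF}}$ amounts to producing, for each locally finite $(S,g)$, a homomorphism $S \to I$ and showing it is the only one.

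First I would check that $(I,\overline\omega_\G)$ is genuinely an object of $\mathit{Coalg}(\G)_{\mathbf{LF}}$, i.e.\ that it is locally finite, so that the finality claim even typechecks. This holds because $e\colon \Exp_\G/_{\equiv}\to I$ is a surjective coalgebra homomorphism whose domain is locally finite by Lemma~\ref{lemma:h_locally_finite}: given $y\in I$, pick a preimage $x$ with $e(x)=y$; then $<x>$ is finite, and $e(<x>)$ is a finite subcoalgebra of $I$ containing $y$, so $<y>\subseteq e(<x>)$ is finite as well.

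For existence, given any locally finite $(S,g)$, Lemma~\ref{lemma:ceil-exists} already supplies a coalgebra homomorphism $\ceil{-}\colon S \to \Exp_\G/_{\equiv}$. Post-composing with the homomorphism $e$ gives $e\circ\ceil{-}\colon S \to I$, which is the required map into $I$. For uniqueness, suppose $h_1,h_2\colon S \to I$ are both homomorphisms. Then $m\circ h_1$ and $m\circ h_2$ are homomorphisms $S \to \Omega_\G$, and since $\Omega_\G$ is final both must coincide with $\mathbf{beh}_S$; as $m$ is monic this forces $h_1 = h_2$.

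The argument is genuinely light: no induction on functor structure is needed, since all the real work is hidden in Lemma~\ref{lemma:ceil-exists} (existence of $\ceil{-}$) and in the cited factorization result \cite[Theorem 7.1]{Rutten00} (which guarantees that $e$ and $m$ are coalgebra homomorphisms with $m$ mono). The only point requiring a little care is confirming that $I$ is locally finite, so that it legitimately lives in $\mathit{Coalg}(\G)_{\mathbf{LF}}$; everything else is just transporting the universal property of $\Omega_\G$ along the mono $m$.
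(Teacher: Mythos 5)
Your proof is correct and follows essentially the same route as the paper's: existence via post-composing the homomorphism $\ceil{-}$ from Lemma~\ref{lemma:ceil-exists} with $e$, and uniqueness by transporting along the mono $m$ into the final coalgebra $\Omega_\G$. Your explicit verification that $I$ is locally finite (as a quotient of $\Exp_\G/_{\equiv}$ under the surjective homomorphism $e$) is a small welcome addition, which the paper only states in passing just before Lemma~\ref{lemma:h_locally_finite}.
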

\vfill\eject
\proof 
We want to show that for any locally finite $\G$-coalgebra $(S,g)$, there exists a {\em unique} 
homomorphism~$(S,g) \to (I,\overline \omega_\G)$. The existence is
guaranteed by Lemma~\ref{lemma:ceil-exists}, where the homomorphism {$\ceil{-}\colon S
\to \Exp_\G/_{\equiv} $} is defined. Post-composing this homomorphism with
$e$ (defined above) we get a coalgebra homomorphism $e\circ \ceil - \colon S \to I$.
If there is another homomorphism $f\colon S\to I$, then by post-composition with the inclusion 
$m\colon I\hookrightarrow \Omega$ we get two homomorphisms ($m\circ f$
and $m\circ e\circ \ceil -$) into the
final $\G$-coalgebra. Thus, $f$ and $e\circ \ceil -$ must be equal.
\qed

\begin{lemma}\label{lemma:final2}
The coalgebra $(\Exp_\G/_{\equiv}, \partial_{\G})$ is final in the category
$\mathit{Coalg}(\G)_{\mathbf{LF}}$.
\end{lemma}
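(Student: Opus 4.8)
The plan is to establish the two defining properties of a final object in $\mathit{Coalg}(\G)_{\mathbf{LF}}$ --- existence and uniqueness of a homomorphism from an arbitrary locally finite coalgebra $(S,g)$ into $(\Exp_\G/_{\equiv}, \partial_\G)$ --- by leaning on the material already developed for Theorem~\ref{kleene1}. Existence is immediate: Lemma~\ref{lemma:ceil-exists} exhibits the homomorphism $\ceil{-} = [-]\circ\expr{-}\colon S \to \Exp_\G/_{\equiv}$. All the work therefore goes into uniqueness, and the tool for it is the fixed-point axiom $\mathit{(Unique)}$, which is exactly why Lemma~\ref{lemma:gamma} was proved in advance: it says that the family $(\ceil{s_i})_i$ solves a guarded system of equations built from the local behaviours $\gamma^\G_{g(s_i)}$, and $\mathit{(Unique)}$ is what forces such a system to have at most one solution modulo $\equiv$.

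Concretely, I would fix $s\in S$ with $<s> = \{s_1,\dots,s_n\}$ (finite by local finiteness) and let $h\colon(S,g)\to(\Exp_\G/_{\equiv},\partial_\G)$ be an arbitrary homomorphism; write $h(s_i)=[\F_i]$. The first step is to show that $(\F_1,\dots,\F_n)$ satisfies the same system as in Lemma~\ref{lemma:gamma}, namely
\[
\F_i \equiv \gamma^\G_{g(s_i)}\{\F_1/x_1\}\ldots\{\F_n/x_n\}, \qquad i=1,\dots,n.
\]
Since $\partial_\G$ is an isomorphism with $\partial^{-1}_{\G\lhd\G}(c)=[\overline\gamma^\G_c]$ (Lemma~\ref{lemma:h_locally_finite}), the homomorphism equation $\partial_\G([\F_i]) = \G(h)(g(s_i))$ rewrites as $[\F_i] = [\overline\gamma^\G_{\G(h)(g(s_i))}]$. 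It then remains to prove the purely syntactic identity $\overline\gamma^\F_{\F(h)(c)} = \gamma^\F_c\{\F_1/x_1\}\ldots\{\F_n/x_n\}$ for every ingredient $\F$ and every $c\in\F<s>$, by induction on the structure of $\F$; the base case $\F=\id$ reads $\overline\gamma^\id_{h(s_j)}=\overline\gamma^\id_{[\F_j]}=\F_j=x_j\{\F_1/x_1\}\ldots$, and the remaining cases propagate exactly as in the induction for equation~(\ref{eq:interm}) in the proof of Lemma~\ref{lemma:ceil-exists}.

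Once both $(\F_i)_i$ and $(\expr{s_i})_i$ are known to solve this single guarded system, uniqueness of the solution modulo $\equiv$ follows by eliminating the variables one at a time, precisely along the iteration $A_i^k$ used to build $\expr{s_i}$ in the proof of Theorem~\ref{kleene1}: each elimination replaces an equation $x_i \equiv \gamma_i$ (guarded in $x_i$, since $\G\neq\id$) by its unique solution through $\mathit{(Unique)}$ and $\mathit{(FP)}$, so that after $n$ steps one obtains $\F_i\equiv\expr{s_i}$, that is, $h(s_i)=[\expr{s_i}]=\ceil{s_i}$. Hence $h=\ceil{-}$ and uniqueness holds. The degenerate case $\G=\id$ is handled separately and trivially, since then $\Omega_\G\cong 1$ and every homomorphism is constant. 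I expect the main obstacle to be the bookkeeping in this last step: $\mathit{(Unique)}$ is stated for a single binder, so turning it into a uniqueness statement for an $n$-variable mutually recursive system requires reproducing the substitution manipulations (equations~(\ref{eq:subst1}) and~(\ref{eq:subst2})) that underlie the $A_i^k$ construction, and verifying that guardedness is preserved at every elimination step.
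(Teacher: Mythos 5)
Your proposal is correct and follows essentially the same route as the paper's own proof: existence via Lemma~\ref{lemma:ceil-exists}, then uniqueness by using $\partial_\G^{-1}(c)=[\overline\gamma^\G_c]$ to show that any homomorphism $h$ yields a solution of the guarded system $\F_i\equiv\gamma^\G_{g(s_i)}[\F_1/x_1]\ldots[\F_n/x_n]$ (the paper's equation~(\ref{eq:m})), and eliminating the variables one at a time with $\mathit{(Unique)}$ along the $A_i^k$ iteration, exactly as the paper does (illustrated there for $n=3$), including the appeal to the substitution identities~(\ref{eq:subst1})--(\ref{eq:subst2}). The bookkeeping obstacle you flag is precisely what the paper handles by exploiting that the $f_{s_i}$ are closed terms, so substitutions commute and guardedness is preserved at each elimination step.
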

\proof 
We want to show that for any locally finite $\G$-coalgebra $(S,g)$, there exists a {\em unique} 
homomorphism~$(S,g) \to (\Exp_\G/_{\equiv}, \partial_{\G})$. We only
need to prove uniqueness, since the existence is guaranteed by
Lemma~\ref{lemma:ceil-exists}, where {$\ceil{-}\colon S \to \Exp_\G/_{\equiv} $} is defined. 

Suppose we have
another homomorphism $f\colon S\to \Exp_\G/_{\equiv}$. Then, we
shall prove that $f=\ceil -$. Let, for any $s\in S$, $f_s$ denote any 
representative of $f(s)$ (that is, $f(s)=[f_s]$). First, observe that 
because $f$ is a homomorphism the following holds for every $s\in S$:
\begin{equation}\label{eq:m}
f(s) = (\partial_{\G}^{-1} \circ \G(f) \circ g)(s) \Leftrightarrow
f_s\equiv  \gamma_{g(s)}^\G [f_{s_1}/x_1]\ldots [f_{s_n}/x_n]
\end{equation}
where $<s> = \{s_1,\ldots,s_n\}$, with $s_1=s$ (recall that
$\partial^{-1}_{\G}$ was defined in~(\ref{eq:h_inv}) and note that
$\overline\gamma^\G_{(\G(f) \circ g)(s)} = \gamma^\G_{g(s)} [f_{s_i}/x_{i}]$).

Next, we prove that $f_{s_i}\equiv \expr{s_i}$ (which is equivalent to $f(s_i) = \ceil {s_i}$), for all
$i=1,\ldots n$. For simplicity we will here prove the case $n=3$. The general case is identical but notationally heavier.
First, we prove that $f_{s_1} \equiv A_1[f_{s_2}/x_2][f_{s_3}/x_3]$.
\[
\begin{array}{@{}lll}
&f_{s_1} \equiv \gamma_{g(s_1)}^\G [f_{s_1}/x_1]  [f_{s_2}/x_2]  [f_{s_3}/x_3] &\text{(by~(\ref{eq:m})})\\[.8ex]
\Leftrightarrow& f_{s_1} \equiv \gamma_{g(s_1)}^\G [f_{s_2}/x_2]  [f_{s_3}/x_3] [f_{s_1}/x_1]  &\text{(all $f(s_i)$ are closed)}\\[.8ex]
\Rightarrow& f_{s_1} \equiv \mu x_1. \gamma_{g(s_1)}^\G [f_{s_2}/x_2]  [f_{s_3}/x_3]  &\text{(by uniqueness of fixed points)}\\[.8ex]
\Leftrightarrow& f_{s_1} \equiv A_1[f_{s_2}/x_2]  [f_{s_3}/x_3]  &\text{(def. of $A_1$)}
\end{array}
\]
Now, using what we have computed for $f_{s_1}$ we prove that $f_{s_2} \equiv A_2^1[f_{s_3}/x_3]$.
{
\[
\begin{array}{@{}lll} 
&f_{s_2} \equiv \gamma_{g(s_2)}^\G [f_{s_1}/x_1]  [f_{s_2}/x_2]  [f_{s_3}/x_3] &\text{(by~(\ref{eq:m})})\\[.8ex]
\Leftrightarrow&f_{s_2} \equiv \gamma_{g(s_2)}^\G [A_1/x_1]  [f_{s_2}/x_2]  [f_{s_3}/x_3]
&\text{(expressions for $f_{s_1}$ and (\ref{eq:subst2}))}\\[.8ex]
\Leftrightarrow& f_{s_2} \equiv \gamma_{g(s_2)}^\G [A_1/x_1]  [f_{s_3}/x_3] [f_{s_2}/x_2] &\text{(all $f(s_i)$ are closed)}\\[.8ex]
\Rightarrow& f_{s_2} \equiv \mu x_2. \gamma_{g(s_2)}^\G [A_1/x_1]  [f_{s_3}/x_3]  &\text{(by uniqueness of fixed points)}\\[.8ex]
\Leftrightarrow& f_{s_2} \equiv A_2^1[f_{s_3}/x_3]  &\text{(def. of $A_2^1$)}
\end{array}
\]}
At this point we substitute $f_{s_2}$ in the expression for $f_{s_1}$ by $A_2^1[f_{s_3}/x_3]$ which yields:
\[
f_{s_1} \equiv  A_1[A^1_2 [f_{s_3}/x_3]  /x_2]  [f_{s_3}/x_3]  \equiv A_1[A^1_2/x_2]  [f_{s_3}/x_3] 
\]
Finally, we prove that $f_{s_3} \equiv A_3^2$:
\[
\begin{array}{@{}lll} 
&f_{s_3} \equiv \gamma_{g(s_3)}^\G [f_{s_1}/x_1]  [f_{s_2}/x_2]  [f_{s_3}/x_3] &\text{(by~(\ref{eq:m})})\\[.8ex]
\Leftrightarrow&f_{s_3} \equiv \gamma_{g(s_3)}^\G [A_1/x_1]  [A_2^1/x_2]  [f_{s_3}/x_3]
&\text{(expr. for $f(s_i)$ and (\ref{eq:subst2}))}\\[.8ex]
\Rightarrow& f_{s_3} \equiv \mu x_3. \gamma_{g(s_3)}^\G [A_1/x_1]  [A_2^1/x_2]  &\text{(by uniqueness of fixed points)}\\[.8ex]
\Leftrightarrow& f_{s_3} \equiv A_3^2  &\text{(def. of $A_3^2$)}
\end{array}
\]
Thus, we have $f_{s_1} \equiv A_1[A^1_2/x_2]  [A^2_3/x_3] $, $f_{s_2}
\equiv A_2^1  [A^2_3/x_3]$ and $f_{s_3} \equiv A^2_3$. Note that
$A_2^1  [A^2_3/x_3]  \equiv A_2^1  \{A^2_3/x_3\} $ since $x_2$ is not
free in $A^2_3$. Similarly, since $x_1$ is not free in $A^1_2$ and
$A^2_3$, we have that $A_1[A^1_2/x_2]  [A^2_3/x_3]  \equiv A_1\{A^1_2/x_2\}  \{A^2_3/x_3\}$. Thus $f(s_i) = \ceil{s_i}$, for all $i=1,2,3$.
\qed

As a consequence of
Lemma~\ref{lemma:final2}, we have that if $\G_1$ and $\G_2$ are
isomorphic
functors then $\Exp_{\G_1}/_\equiv$ and $\Exp_{\G_2}/_\equiv$ are
also isomorphic (for instance, this would be true for $\G_1(X) = \B
\times (X \times A)$ and
$\G_2(X) =  A \times (\B \times X)$). 

We remark that Lemma~\ref{lemma:final1} could have been proved as a
consequence of Lemma~\ref{lemma:final2}, by observing that
$(I,\overline \omega_\G)$ is, by construction, a quotient of
$(\Exp_\G/_{\equiv}, \partial_{\G})$.  

At this point, because final objects are unique up-to isomorphism, we
know that $e\colon \Exp_\G/_\equiv \to I$ is an isomorphism and hence we
can conclude that the map $\mathbf{beh}_{\Exp_\G/_{\equiv}}$ is injective, since it factorizes into
an isomorphism followed by a mono. This fact is the last
thing we need to prove completeness.

\begin{theorem}[Completeness]
Let $\G$ be a non\hyph deterministic functor. For all
$\E_1,\E_2\in \Exp_{\G}$,
\[
\E_1 \sim \E_2 \Rightarrow \E_1 \equiv \E_2
\]
\end{theorem}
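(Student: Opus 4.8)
The plan is to show that the map $\mathbf{beh}_{\Exp_\G/_{\equiv}}$ is injective, since injectivity together with the equivalences recorded in~(\ref{eq:comp}) immediately forces $[\E_1]=[\E_2]$, which is precisely $\E_1\equiv\E_2$. Concretely, suppose $\E_1\sim\E_2$. By~(\ref{eq:comp}) this is equivalent to $\mathbf{beh}_{\Exp_\G/_{\equiv}}([\E_1])=\mathbf{beh}_{\Exp_\G/_{\equiv}}([\E_2])$, so once injectivity is available we are done.

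First I would exploit the epi--mono factorization $\mathbf{beh}_{\Exp_\G/_{\equiv}}=m\circ e$ displayed in the diagram above, where $e\colon\Exp_\G/_{\equiv}\to I$ is a (surjective) coalgebra homomorphism onto the image $(I,\overline\omega_\G)$ and $m\colon I\hookrightarrow\Omega_\G$ is a coalgebra mono. Since a monomorphism is in particular injective, it suffices to prove that $e$ is injective, and in fact I will argue that $e$ is an isomorphism.

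The key step is to invoke the two finality results established above: by Lemma~\ref{lemma:final2} the coalgebra $(\Exp_\G/_{\equiv},\partial_\G)$ is final in $\mathit{Coalg}(\G)_{\mathbf{LF}}$, and by Lemma~\ref{lemma:final1} so is $(I,\overline\omega_\G)$ (both are locally finite, the latter being a quotient of the former). Because final objects in any category are unique up to a unique isomorphism, the homomorphism $e$ between these two final locally finite coalgebras must be an isomorphism. Consequently $\mathbf{beh}_{\Exp_\G/_{\equiv}}=m\circ e$ is a composite of an isomorphism followed by a mono, hence injective; this yields $[\E_1]=[\E_2]$ and therefore $\E_1\equiv\E_2$.

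There is essentially no remaining obstacle, as all the substantial work has already been carried out in the preceding lemmas. The only point deserving care is to make sure that the factorization really takes place inside $\mathit{Coalg}(\G)$ --- that $e$ and $m$ are genuine coalgebra homomorphisms and that $(I,\overline\omega_\G)$ carries a bona fide $\G$-coalgebra structure --- so that finality is applicable; this is exactly the content of the image factorization for coalgebras cited from~\cite{Rutten00}. Once this is in place, the argument is a purely categorical assembly of Lemmas~\ref{lemma:final1} and~\ref{lemma:final2}.
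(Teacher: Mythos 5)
Your proposal is correct and follows essentially the same route as the paper: the paper likewise reduces completeness to the injectivity of $\mathbf{beh}_{\Exp_\G/_{\equiv}}$, obtained from the epi--mono factorization of~\cite[Theorem 7.1]{Rutten00} together with the two finality results (Lemmas~\ref{lemma:final1} and~\ref{lemma:final2}), so that $e$ is an isomorphism between final objects of $\mathit{Coalg}(\G)_{\mathbf{LF}}$ and $\mathbf{beh}_{\Exp_\G/_{\equiv}} = m\circ e$ is injective. Your attention to the facts that the factorization is a factorization of coalgebra homomorphisms and that $(I,\overline\omega_\G)$ is locally finite (as a quotient of $(\Exp_\G/_{\equiv},\partial_\G)$) matches the paper's own remarks.
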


\proof 
Let $\G$ be a non\hyph deterministic functor, let $\E_1,\E_2\in \Exp_{\G}$ and suppose that
 $\E_1\sim\E_2$.
Because only bisimilar elements are identified in the final coalgebra
we know that it must be the case that $\mathbf{beh}_{\Exp_\G}(\E_1)
=  \mathbf{beh}_{\Exp_\G}(\E_2) $ and thus, since the equivalence class
map $[-]$ is a homomorphism, $\mathbf{beh}_{\Exp_\G/_{\equiv}}([\E_1])=
\mathbf{beh}_{\Exp_\G/_{\equiv}}([\E_2])$. Because
$\mathbf{beh}_{\Exp_\G/_{\equiv}}$ is injective we have that $[\E_1]=
[\E_2]$. Hence, $\E_1\equiv\E_2$.
\qed

\section{Two more examples}\label{sec:appl}

In this section we apply our framework to two other examples: 
labelled transition systems (with explicit termination) and automata on guarded strings.
These two automata models are directly connected to, respectively, basic process algebra and 
Kleene algebra with tests. To improve readability we will present the corresponding languages 
using a more user-friendly syntax than the canonically derived one.

\paragraph{\textbf{Labelled transition systems.}} Labelled transition systems (with explicit termination) are coalgebras
for the functor $1+(\pow \id)^A$.  As we will show below, instantiating our framework for this functor produces a language that
is equivalent to the closed and guarded expressions generated by the
following grammar, where $a\in A$ and $x\in X$ ($X$ is a set of
fixed point variables):
\[
P\, ::= \mathbf{0} \mid P+P \mid a.P \mid \delta\mid \surd \mid \mu x. P
\mid x
\]
together with the equations (omitting the congruence and $\alpha$-equivalence rules)

\begin{align*}
&P_1 + P_2 \equiv P_2+P_1 && P_1 + (P_2+P_3) \equiv (P_1+P_2)+P_3\\
&P+P  \equiv P &&  P+ \mathbf{0} \equiv P\\
&P+ \delta \equiv P\ (\star)&
&\surd + \delta \equiv \surd + P\ (\star)& (\star) \text{ if }  P\not\equiv
\mathbf{0}\text{ and } P\not\equiv \surd\\
&P[\mu x.P/x] \equiv \mu x . P && P[Q/x]  \equiv  Q \Rightarrow (\mu
x.P)  \equiv Q
\end{align*}
Note that, as expected, there is no law that allows us to prove $a.(P+Q)\equiv a.P+a.Q$. Moreover, 
 observe that this syntax and axiomatization is very similar to the one presented 
in~\cite{AcetoH92}. In the syntax above, $\delta$
represents deadlock, $\surd$ successful termination and $\mathbf 0$
the totally undefined process.

We will next show how the beautified syntax above was derived from the
canonically derived syntax for the expressions $\E\in \Exp_{1+(\pow
\id)^A}$, which is given by the set of closed and guarded expressions
defined by the following BNF:
\[
\begin{array}{l}
\E ::= \emp \mid \E\oplus\E \mid x \mid \mu x.\E \mid l[\E_1]
\mid r[\E_2]\\
\E_1 ::= \emp \mid \E_1\oplus\E_1 \mid *\\
\E_1 ::= \emp \mid \E_2\oplus\E_2 \mid a(\E')\\
\E' ::= \emp \mid \E'\oplus\E' \mid \{\E\}
\end{array}
\]
We define two maps between this grammar and the grammar presented above.
Let us start to show how to translate $P$'s into $\E$'s, by defining 
a map $(-)^\dagger$ by induction on the structure of $P$:
$$
\begin{array}{lcl}
(\mathbf{0})^\dagger &=& \emp\\
(P_1+P_2)^\dagger &=&  (P_1)^\dagger \oplus (P_2)^\dagger\\
(\mu x. P)^\dagger &=& \mu x. P^\dagger\\
x^\dagger  &=& x
\end{array}
\begin{array}{lcl}
(a.P)^\dagger &=& r[a(\{P^\dagger\})]\\ 
(\surd)^\dagger &=& l[*] \\
(\delta)^\dagger &=& r[\emp]\\
\end{array}
$$
And now the converse translation:
$$
\begin{array}{lcl}
(\emp)^\ddagger &=& \mathbf{0}\\
(\E_1\oplus\E_2)^\ddagger &=&  (\E_1)^\ddagger + (\E_2)^\ddagger\\
(\mu x. \E)^\ddagger &=& \mu x. \E^\ddagger\\
x^\ddagger  &=& x\\
(l[\emp])^\ddagger &=& \surd \\
(l[\E_1\oplus\E_1'])^\ddagger &=& (l[\E_1])^\ddagger + 
(l[\E_1'])^\ddagger\\
\end{array}
\begin{array}{lcl}
(l[*])^\ddagger &=& \surd \\
(r[\emp])^\ddagger &=& \delta \\ 
(r[\E_2\oplus\E_2'])^\ddagger &=&(r[\E_2])^\ddagger  + (r[\E_2'])^\ddagger\\ 
(r[a(\emp)])^\ddagger &=& \delta \\ 
(r[a(\E_1'\oplus\E_2')])^\ddagger &=&(r[a(\E_1')])^\ddagger + (r[a(\E_2')])^\ddagger \\ 
(r[a(\{\E\})])^\ddagger &=& a.\E^\ddagger \\ 
\end{array}
$$
One can prove that if $P_1\equiv P_2$ (using the equations above) 
then $(P_1)^\dagger\equiv (P_2)^\dagger$  (using the automatically derived equations 
for the functor) and also that  $\E_1\equiv \E_2$ implies $(\E_1)^\ddagger \equiv (\E_2)^\ddagger$. 

\paragraph{\textbf{Automata on guarded strings.}} It has recently been shown~\cite{kozen08} that automata on guarded
strings (acceptors of the join irreducible elements of the free Kleene algebra
with tests on generators $\Sigma, T$) are coalgebras for the functor
$\B\times \id^{At\times \Sigma}$, where $At$ is the set of 
atoms, {\em i.e.} minimal nonzero elements of the free Boolean algebra $\B$
generated by $T$ and $\Sigma$ is a set of actions.  Applying our framework to this functor yields a
language that is equivalent to the closed and guarded expressions
generated by the following grammar, where $b\in \B$ and $a\in \Sigma$:
\[
P\, ::= \mathbf{0} \mid <b> \mid P+P \mid b \to a.P \mid \mu x. P
\mid x
\]
accompanied by the equations (omitting the congruence and $\alpha$-equivalence rules)
\begin{align*}
&P_1 + P_2 \equiv P_2+P_1 && P_1 + (P_2+P_3) \equiv (P_1+P_2)+P_3\\
&P+P \equiv P&&P+ \mathbf{0} \equiv P \\ 
&<b_1>+<b_2> \equiv <b_1\vee_\B b_2> && \mathbf{0} \equiv <\bot_\B>\\
& (b \to a.\mathbf{0})\equiv  \mathbf{0} && (\bot_\B \to a.P) 
\equiv \mathbf{0}\\
& (b \to a.P_2) +  (b \to a.P_2) 
\equiv  b \to a.(P_1+P_2)&&
 (b_1 \to a.P) +  (b_2 \to a.P) \equiv  (b_1\vee_\B b_2) \to a.P\\
&P[\mu x.P/x] \equiv \mu x . P&&
P[Q/x] \equiv  Q \Rightarrow (\mu
x.P) \equiv Q
\end{align*}
We will not present a full comparison of this syntax to the one of Kleene algebra 
with tests~\cite{kozen08} (and propositional Hoare triples). The
differences between our syntax and that of KAT are similar to the ones
between regular expressions and the language $\Exp_\D$ for the functor
representing deterministic automata (see
Definition~\ref{def:regexp_to_exp}). 
Similarly to the LTS example one can define maps between the beautified syntax and the automatically generated one and prove its correctness.

\section{Polynomial and finitary coalgebras}\label{sec:pol_fin}

The functors we considered above allowed us to modularly derive
languages and axiomatizations for a large class of coalgebras. If we
consider the subset of $\ndf$ without the $\pow$ functor, the class of
coalgebras for these functors almost coincides with polynomial
coalgebras (that is, coalgebras for a polynomial functor). The only
difference comes from the use of join-semilattices for constant
functors and $\myplus$ instead of the ordinary
coproduct, which played an important role in order for
us to be able to have underspecification and overspecification.  
We will next show how to derive expressions and axiomatizations
directly for polynomial coalgebras, where no underspecification or
overspecification is allowed. 

Before we show the formal definition, let us provide some intuition. 
The main changes\footnote{This syntax was suggested to us by B. Klin,
during CONCUR'09.}, compared to the previous sections, would be not to 
have $\emp$ and $\oplus$ and consider an expression $<-,->$ for the
product instead of the two expressions $l<->$ and $r<->$ which we
considered and an expression $<a_1(-), a_2(-), \ldots, a_n(-)>$ for
the exponential (with $A=\{a_1,\dots a_n\}$). As an example, take the
functor $\D(X) =
2\times X^A$ of deterministic automata. The expressions corresponding
to this functor would then be the set of closed and guarded
expressions given by the following BNF:
\[
\E::= x \mid \mu x.\E \mid <0, <a_1(\E), a_2(\E), \ldots, a_n(\E)>>
\mid <1,<a_1(\E), a_2(\E), \ldots, a_n(\E)>>
\]
This syntax can be perceived as an explicit and complete description
of the automaton. This means that underspecification is nonexistent and
the compactness of regular expressions is lost. As an example of the
verbosity present in this new language, take $A=\{a,b,c\}$ and
consider the language that accepts words with only $a$'s and has at
last one $a$ (described by $aa^*$ in Kleene's regular expressions). In
the language $\Exp_\D$ it would be written as $\mu x. a(l<1>\oplus x)$.
Using the approach described above it would be encoded as the
expression
\[
\mu x.  <0,
<a(<1,<a(x),b(\mathit{empty}),c(\mathit{empty})>>),b(\mathit{empty}),c(\mathit{empty})>>
\]
where $\mathit{empty}= \mu y. <0,<a(y), b(y), c(y)>$ is the expression
denoting the empty language. The approach we presented before, by
allowing underspecification, provides
a more user-friendly syntax and stays close to the know syntaxes for
deterministic automata and LTSs. 

In what follows we will formally present a language for polynomial
coalgebras. We start by introducing the definition of polynomial functor, which we
take from~\cite{adamek06}. 

\begin{definition}[Polynomial Functor]
Sums of the Cartesian power functors are called polynomial functors:
\[
\Pol_\Sigma(X) = \coprod_{\sigma \in \Sigma} X^{\mathit{ar}(\sigma)}
\]
Here, $\coprod$ stands for ordinary coproduct and 
 the indexing set $\Sigma$ is a signature, that is a possibly infinite
collection of symbols $\sigma$, each of 
which is equipped with a finite cardinal $\mathit{ar}(\sigma)$, called
the arity of $\sigma$. 
\end{definition}

\begin{definition}[Expressions and axioms for polynomial functors]
Let $\Pol_\Sigma$ be a polynomial functor. The set $\Exp_{\Pol_\Sigma}$ of expressions for ${\Pol_\Sigma}$ is given by the closed and guarded expressions generated by the following BNF, where $\sigma\in \Sigma$ and $x\in V$, for $V$ a set of fixed point variables:
\[
\E_i ::= x \mid \mu x. \E \mid \sigma(\E_1, \ldots, \E_{\mathit{ar}(\sigma)})
\]
accompanied by the equations:
\begin{align*}
&\gamma[\mu x.\gamma/x] \equiv \mu x . \gamma && \mathit{(FP)}  \\
&{\gamma[\E/x]  \equiv  \E \Rightarrow \mu
x.\gamma \equiv \E} &&\mathit{(Unique)}
\\
&\E_1\equiv \E_2 \Rightarrow \E[\E_1/x]\equiv \E[\E_2/x], \text{\ \ \ \  if
$x$ is free in $\E$}&& (\mathit{Cong})
\\
&\mu x.\gamma \equiv \mu y. \gamma[y/x],\ \ \   \text{ if
$y$ is not free in $\gamma$}& &\mathit {(\alpha-\mathit{equiv})} 
\end{align*}

\end{definition}
Providing the set $\Exp_{\Pol_\Sigma}$ with a coalgebraic structure is
 achieved using induction on the number of unguarded occurrences of nested fixed points:
\[
\begin{array}{l}
\delta \colon \Exp_{\Pol_\Sigma}\to \coprod\limits_{\sigma \in \Sigma} (\Exp_{\Pol_\Sigma})^{\mathit{ar}(\sigma)}\\
\delta(\mu x.\E) = \delta(\E[\mu x.\E /x])\\
\delta(\sigma(\E_1,\ldots,\E_{\mathit{ar}(\sigma)})) = \kappa_\sigma(<\E_1,\ldots,\E_{\mathit{ar}(\sigma)}>) 
\end{array}
\]
We are now ready to state and prove Kleene's theorem.
\begin{theorem}[Kleene's theorem for polynomial functors]
Let $\Pol_\Sigma$ be a polynomial functor.

\begin{enumerate}[\em(1)]
\item  For every locally finite coalgebra $(S,g\colon S\to \Pol_\Sigma(S))$ and for every $s\in S$ there exists an expression  $\E\in \Exp_{\Pol_\Sigma}$ such that $\E\sim s$.
\item For every expression $\E\in \Exp_{\Pol_\Sigma}$ there is a finite  coalgebra $(S, g\colon S\to \Pol_\Sigma(S))$ with $s\in S$ such that $s\sim \E$.
\end{enumerate}
\end{theorem}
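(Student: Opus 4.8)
The plan is to mirror the two-part structure of the generalized Kleene theorem (Theorems~\ref{kleene1} and~\ref{kleene2}) that was established for non\hyph deterministic functors, adapting each direction to the polynomial setting where the coproduct $\coprod$ replaces $\myplus$ and there is no $\oplus$, $\emp$, or join\hyph semilattice structure on expressions. Since the syntax for $\Exp_{\Pol_\Sigma}$ contains only variables, fixed points, and the single constructor $\sigma(\E_1,\ldots,\E_{\mathit{ar}(\sigma)})$, the constructions will in fact be \emph{simpler} than before: there is no need for the auxiliary $\Empty$ and $\Plus$ liftings, no $\mathit{ACIE}$ quotient, and no order-of-types/expressions induction driven by $\oplus$.

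\medskip
\noindent\textbf{Part (1).} Given a locally finite coalgebra $(S,g)$ and a state $s$, I would let $<s> = \{s_1,\ldots,s_n\}$ with $s_1 = s$ and associate a fixed-point variable $x_i$ to each $s_i$, exactly as in the proof of Theorem~\ref{kleene1}. For each state I would read off the transition $g(s_i) = \kappa_\sigma(<t_1,\ldots,t_{\mathit{ar}(\sigma)}>)$ and set $A_i = \mu x_i.\,\sigma(x_{j_1},\ldots,x_{j_{\mathit{ar}(\sigma)}})$, where each component $t_k$ is some state $s_{j_k}\in <s>$ and is represented by its variable $x_{j_k}$. Because $\Pol_\Sigma$ is polynomial, $g(s_i)$ lands in exactly one summand, so each equation is a single guarded application of one $\sigma$; this is why we do not need the elaborate $\gamma^\F_c$ recursion over ingredients. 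I would then iteratively substitute, $A_i^{k+1} = A_i^k\{A_{k+1}^k/x_{k+1}\}$, obtaining closed expressions $\expr{s_i} = A_i^n$, and verify that $R = \{<s_i,\expr{s_i}> \mid s_i\in <s>\}$ is a bisimulation by checking $<g(s_i),\delta(\expr{s_i})> \in \overline{\Pol_\Sigma}(R)$. The substitution bookkeeping is identical to equations~(\ref{eq:subst1}) and~(\ref{eq:subst2}) and the $\delta(\mu x.\E) = \delta(\E[\mu x.\E/x])$ clause handles the unfolding step.

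\medskip
\noindent\textbf{Part (2).} For the converse I would show that the subcoalgebra $<\E>$ generated by any $\E\in\Exp_{\Pol_\Sigma}$ under $\delta$ is finite. The key is that, unlike the non\hyph deterministic case, there is no $\oplus$ that can accumulate syntactically distinct but behaviourally equal summands, so no $\mathit{ACIE}$-style quotient is required to force finiteness. I would define the closure $\mathit{cl}(\E)$ as the least set containing $\E$, closed under taking the immediate subexpressions of each $\sigma(\E_1,\ldots,\E_{\mathit{ar}(\sigma)})$ and under unfolding $\mu x.\E_1 \mapsto \E_1[\mu x.\E_1/x]$; as in Theorem~\ref{kleene2} this set is finite because the number of distinct unfoldings of the finitely many fixed-point subterms is finite. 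Then I would prove $<\E>\subseteq \{[\E'] \mid \E'\in\mathit{cl}(\E)\}$ by checking that $\delta$ maps each closure element into $\Pol_\Sigma(\mathit{cl}(\E))$, which is immediate from the two clauses defining $\delta$.

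\medskip
\noindent The main obstacle is comparatively mild but worth flagging: the well-definedness of $\delta$ and the finiteness argument both rest on the guardedness condition ensuring that unfolding a fixed point strictly decreases the number of unguarded nested $\mu$'s, so that the induction terminates; since the paper's $N$-measure was tailored to a syntax including $\oplus$, I would restate the termination of $\delta$ using a measure counting only unguarded occurrences of $\mu$ (as the text already indicates for the polynomial $\delta$) and confirm that guardedness of $\E$ guarantees every variable occurs under some $\sigma(-)$, so that a single unfolding step reaches strictly smaller expressions. The bisimulation verification in Part~(1) and the subcoalgebra check in Part~(2) are then routine specializations of the already-established non\hyph deterministic proofs.
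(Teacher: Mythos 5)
Your proposal is correct and takes essentially the same route as the paper's proof: part (1) sets up and solves the same system of equations $A_i = \mu x_i.\,\sigma(x_{j_1},\ldots,x_{j_{\mathit{ar}(\sigma)}})$ via the iterated substitutions of Theorem~\ref{kleene1} and exhibits the same bisimulation $R$, while part (2) observes, just as the paper does, that $\mathit{cl}(\E)$ is a finite subcoalgebra of $(\Exp_{\Pol_\Sigma},\delta)$, so that no $\mathit{ACIE}$-style quotient is needed in the absence of $\emp$ and $\oplus$. The only (harmless) slip is notational: in part (2) you write the target set as $\{[\E'] \mid \E'\in \mathit{cl}(\E)\}$ with equivalence-class brackets, but since there is no quotient in the polynomial setting the inclusion is simply $<\E>\ \subseteq\ \mathit{cl}(\E)$.
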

\begin{proof}
Point $1.$ amounts to solve a system of equations. Let $<s> = \{s_1,\ldots, s_n\}$. We associate with each $s_i \in <s>$ an expression $\expr{s_i} = A^n_i$, where $A^n_i$ is defined inductively as in the proof of~\ref{kleene1}, with $A^{k+1}_i = A^k_i\{A^k_{k+1}/x_{k+1}\}$ and $A^0_i = A_i$ given by
\[
A_i = \mu x_{s_i}. \sigma(x_{s_1'}, \ldots, x_{s_{\mathit{ar}(\sigma)}'}), \ \ \ g(s_i) = \kappa_\sigma(s_{1}', \ldots, s_{\mathit{ar}(\sigma)}')
\]
It remains to prove that $s_i \sim \expr{s_i}$, for all $s_i\in <s>$. We observe that
\[
R = \{ <s_i, \expr{s_i}> \mid s_i \in <s> \}
\]
is a bisimulation, since, for $g(s_i) = \kappa_\sigma(s_{1}', \ldots, s_{\mathit{ar}(\sigma)}')$, we have
\[
\begin{array}{@{}ll}
\ \ \ \; \delta(\expr{s_i}) \\
= \delta((\mu x_i.  \sigma(x_{s_1'}, \ldots, x_{s_{\mathit{ar}(\sigma)}'}))  \{A^0_{1}/x_{1}\} \ldots \{A^{n-1}_{n}/x_{n}\}) \\[1.5ex]
=\delta(\mu x_i.  \sigma(x_{s_1'}, \ldots, x_{s_{\mathit{ar}(\sigma)}'})  \{A^0_{1}/x_{1}\} \ldots  \{A^{i-2}_{i-1}/x_{i-1}\}  \{A^i_{i+1}/x_{i+1}\} \ldots\{A^{n-1}_{n}/x_{n}\})\\[1.5ex]
=\delta( \sigma(x_{s_1'}, \ldots, x_{s_{\mathit{ar}(\sigma)}'})  \{A^0_{1}/x_{1}\} \ldots  \{A^{i-2}_{i-1}/x_{i-1}\}  \{A^i_{i+1}/x_{i+1}\} \ldots\{A^{n-1}_{n}/x_{n}\}[A^n_i/x_i])&  \\[1.5ex]
=\delta( \sigma(x_{s_1'}, \ldots, x_{s_{\mathit{ar}(\sigma)}'})  \{A^0_{1}/x_{1}\} \ldots  \{A^{i-2}_{i-1}/x_{i-1}\}  \{A^i_{i+1}/x_{i+1}\} \ldots\{A^{n-1}_{n}/x_{n}\}\{A^n_i/x_i\})&
\\[1.5ex]
=\delta( \sigma(x_{s_1'}, \ldots, x_{s_{\mathit{ar}(\sigma)}'})\{A^0_{1}/x_{1}\} \ldots  \{A^{i-2}_{i-1}/x_{i-1}\} \{A^n_i/x_i\} \{A^i_{i+1}/x_{i+1}\} \ldots\{A^{n-1}_{n}/x_{n}\})&\\[1.5ex]
=\kappa_\sigma(\expr{s_1'},\ldots, \expr{s_{\mathit{ar}(\sigma)}'})
\end{array}
\]
For point $2$, we observe that the subcoalgebra $<\E>$, for any $\E\in \Exp_{\Pol_\Sigma}$ is finite, since the set $\mathit{cl}(\E)$ containing all sub-formulas and unfoldings of fixed points of $\E$, which is finite, is a subcoalgebra of  $(\Exp_{\Pol_\Sigma},\delta)$. The fact that in this point, contrary to what happened in Theorem~\ref{kleene2}, we do not need to quotient the set of expressions is a direct consequence of the absence of underspecification or, more concretely, of the expressions $\emp$ and $\oplus$.
\end{proof}

The proof of soundness and completeness would follow a similar strategy as in the previous section and we will omit it here.

In order to be able to compare the language introduced in this section
with the language obtained in our previous approach, we have to define
an infinitary version of the operator $\myplus$ and extend the
framework accordingly. We start by defining the aforementioned operator on sets:
$\mybigplus_{i\in I} X_i = \left(\coprod_{i\in I} X_i\right) \cup \{\bot,\top\} $
and the corresponding functor, for which we shall use the same symbol, is defined pointwise in the same way as for $\myplus$.
Note that $\myplus$ is a special case of this operator (resp. functor)
for $I$ a two element set. In fact, for simplicity, we shall only
consider this operator for index sets $I$ with  two or more elements. 

There is a natural transformation between polynomial functors and
the class of non-deterministic functors extended with $\mybigplus$:
every polynomial functor $\Pol_\Sigma$ is mapped to
$$\overline{\Pol}_\Sigma(X) = \mybigplus_{\sigma\in \Sigma} \
X^{\mathit{ar}(\sigma)}$$ 
Next, we slightly alter the definition of expressions. Instead of the
expressions $l[-]$ and $r[-]$ we had before for $\myplus$ we add
an expression $i[-]$ for each $i\in I$ and the expected typing rule:
\[
\rules{\vdash \E\colon \F_j\lhd \G\ \ \ \ \ \ j\in I}
      {\vdash j[\E] \colon  \mybigplus_{i\in I} \F_i\lhd \G}
\]
All the other elements in our story are adjusted in the expected
way. We show what happens in the axiomatization.
For $\myplus$ we had the rules 
\[
\begin{array}{l}
 l[\E_1 \oplus \E_2] \equiv l[\E_1] \oplus l[\E_2] \ \ \ \ \ 
   r[\E_1 \oplus \E_2] \equiv r[\E_1] \oplus
r[\E_2]\ \ \ \ \ \ 
 l[\E_1] \oplus r[\E_2] \equiv l[\emp] \oplus r[\emp]
 \end{array}
\]
which are now replaced by
\[
\begin{array}{l}
i[\E_1]\oplus i[\E_2] \equiv i[\E_1\oplus \E_2]\ \ \ \ \ \ 
i[\E_1]\oplus j[\E_2] \equiv k[\emp]\oplus l[\emp], \ \ \ \
i\neq j, k\neq l
\end{array}
\]

It is natural to ask what is the relation between the sets of
expressions $\Exp_{\Pol_\Sigma}$ and $\Exp_{\overline{\Pol}_\Sigma}$.
The set  $\Exp_{\Pol_\Sigma}$ is bijective to the subset of
$\Exp_{\overline{\Pol}_\Sigma}$ containing only fully specified
expressions, that is expressions $\E$ for which the subcoalgebra
$<\E>$ does not contain any state for which
$\delta_{\overline{\Pol}_\Sigma}$ evaluates to $\bot$ and $\top$.
This condition is purely semantical and we were not able to find a purely syntactic restriction that would capture it.

We next repeat the exercise above for finitary functors. 
A finitary functor $\Fin$ is a functor that is a quotient of a
polynomial functor, {\em i.e.} there exists a natural transformation
$\eta\colon \Pol_\Sigma \to \Fin$, whose components $\eta_X\colon \Pol_\Sigma(X) \to
\Fin(X)$ are epimorphisms.  We define $\Exp_{\Fin} =
\Exp_{\Pol_\Sigma}$. 
\begin{theorem}[Kleene's theorem for finitary functors]
Let $\Fin$ be a finitary functor.
\begin{enumerate}[\em(1)]
\item Let $(S,f)$ be a
locally-finite $\Fin$-coalgebra. Then, for any $s \in S$, there exists an
expression $\expr s \in \Exp_{\Fin}$ such that $s\sim
\expr s$. 
\item Let $\E\in \Exp_{\Fin}$. Then, there exists a finite
$\Fin$-coalgebra $(S,f)$ with $s\in S$ such that $s\sim \E$. 
\end{enumerate} 
\end{theorem}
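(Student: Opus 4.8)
The plan is to transport both statements from the already-established Kleene theorem for the polynomial functor $\Pol_\Sigma$ along the epimorphic natural transformation $\eta\colon\Pol_\Sigma\to\Fin$, using that by definition $\Exp_{\Fin}=\Exp_{\Pol_\Sigma}$. Recall that $\eta$ induces a functor $\mathit{Coalg}(\Pol_\Sigma)\to\mathit{Coalg}(\Fin)$ sending $(S,g)$ to $(S,\eta_S\circ g)$; in particular the $\Fin$-coalgebra structure I put on $\Exp_{\Fin}$ is $\eta_{\Exp_{\Pol_\Sigma}}\circ\delta$, the image under $\eta$ of the structure $\delta$ defined for $\Pol_\Sigma$. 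The key observation is that this functor preserves bisimulations: if $R$ is a $\Pol_\Sigma$-bisimulation between $(X,\alpha)$ and $(Y,\beta)$, then the same $R$ is a $\Fin$-bisimulation between $(X,\eta_X\circ\alpha)$ and $(Y,\eta_Y\circ\beta)$. Indeed, for $\langle x,y\rangle\in R$ there is $z\in\Pol_\Sigma(R)$ with $\Pol_\Sigma(\pi_1)(z)=\alpha(x)$ and $\Pol_\Sigma(\pi_2)(z)=\beta(y)$; then $\eta_R(z)\in\Fin(R)$ and naturality of $\eta$ gives $\Fin(\pi_1)(\eta_R(z))=\eta_X(\alpha(x))$ and $\Fin(\pi_2)(\eta_R(z))=\eta_Y(\beta(y))$, so that $\langle\eta_X(\alpha(x)),\eta_Y(\beta(y))\rangle\in\overline{\Fin}(R)$. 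Consequently $\sim_{\Pol_\Sigma}$ implies $\sim_{\Fin}$ after applying $\eta$; note that epimorphy of $\eta$ is not needed here, only naturality.

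For part (1), I would first pass to the finite $\Fin$-subcoalgebra $T=\langle s\rangle=\{s_1,\ldots,s_n\}$ (with $s_1=s$), which carries the restricted structure $f|_T\colon T\to\Fin(T)$. Since $\eta_T\colon\Pol_\Sigma(T)\to\Fin(T)$ is surjective, I choose for each $i$ a preimage $g(s_i)\in\Pol_\Sigma(T)$ with $\eta_T(g(s_i))=f(s_i)$; this yields a $\Pol_\Sigma$-coalgebra $(T,g)$ on a finite carrier, hence locally finite, satisfying $\eta_T\circ g=f|_T$. Applying part (1) of Kleene's theorem for polynomial functors produces expressions $\expr{s_i}\in\Exp_{\Pol_\Sigma}=\Exp_{\Fin}$ together with a $\Pol_\Sigma$-bisimulation witnessing $s_i\sim_{\Pol_\Sigma}\expr{s_i}$ between $(T,g)$ and $(\Exp_{\Pol_\Sigma},\delta)$. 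Pushing this forward along $\eta$ carries $(T,g)$ to $(T,f|_T)$ and $(\Exp_{\Pol_\Sigma},\delta)$ to $(\Exp_{\Fin},\eta_{\Exp_{\Pol_\Sigma}}\circ\delta)$, and by the preservation property the same relation is a $\Fin$-bisimulation; hence $s_i\sim_{\Fin}\expr{s_i}$, and in particular $s\sim\expr s$.

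For part (2), given $\E\in\Exp_{\Fin}=\Exp_{\Pol_\Sigma}$, part (2) of the polynomial Kleene theorem supplies a finite $\Pol_\Sigma$-coalgebra $(S,g)$ — concretely the subcoalgebra $\langle\E\rangle$ of $(\Exp_{\Pol_\Sigma},\delta)$, finite because $\mathit{cl}(\E)$ is finite — and a state $s\in S$ with $s\sim_{\Pol_\Sigma}\E$. Applying $\eta$ gives the finite $\Fin$-coalgebra $(S,\eta_S\circ g)$, and the preservation property transports $s\sim_{\Pol_\Sigma}\E$ to $s\sim_{\Fin}\E$, with $\E$ regarded in $(\Exp_{\Fin},\eta_{\Exp_{\Pol_\Sigma}}\circ\delta)$. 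This establishes the claim.

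The main obstacle is ensuring local finiteness of the lifted $\Pol_\Sigma$-coalgebra in part (1): lifting $f$ on the whole of $S$ could let the chosen preimages introduce spurious $\Pol_\Sigma$-successors (an element of $\Fin(X)$ may have $\eta_X$-preimages of strictly larger support) and thereby break local finiteness. Restricting to the finite subcoalgebra $T$ first, and choosing the preimages $g(s_i)$ inside $\Pol_\Sigma(T)$, sidesteps this entirely, since any coalgebra on a finite carrier is locally finite. The remaining technical core is the naturality lemma that bisimilarity is preserved along $\eta$, which is routine once set up.
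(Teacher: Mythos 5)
Your proof is correct and follows essentially the same route as the paper: restrict to the finite subcoalgebra $T=\langle s\rangle$, lift $f|_T$ along the surjection $\eta_T$ to a $\Pol_\Sigma$-structure (the paper's $f^\sharp$), apply the polynomial Kleene theorem, and transport bisimilarity along $\eta$ (for part (2), compose with $\eta_S$ exactly as the paper does). The only difference is that you spell out the bisimulation-preservation lemma that the paper compresses into the phrase ``consequence of naturality,'' which is a welcome elaboration rather than a deviation.
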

\proof
Let $\Fin$ be a finitary functor (quotient of a polynomial
functor $\Pol_\Sigma$). 

\mycirc{1} Let $(S,f)$ be a locally finite $\Fin$-coalgebra and
let $s\in S$. We denote by $T = \{s_1,\ldots, s_n\}$ the state space
of the subcoalgebra $<s>$ (with $s_1=s$). We then have that there
exists an $f^\sharp$ making the following diagram commute:
\[
\xymatrix{T \ar[d]_{f^\sharp}\ar[r]^{\mathit{id}}& T\ar[d]^f\ar[r] & S\ar[d]^f\\
\Pol_\Sigma(T)\ar@{->>}[r]_{\eta_S} & \Fin(T)\ar[r]& \Fin(S)}
\]
We then build $\expr{s}$ w.r.t $f^\sharp$ just as in
Theorem~\ref{kleene1} (note that $(T,f^\sharp)$ is finite) and 
the result follows because $\expr s \sim_{\Fin} s \Leftarrow \expr s
\sim_{\Pol_\Sigma} s$ (consequence of naturality).

\mycirc{2} Let $\E\in \Exp_{\Fin}$. By Theorem~\ref{kleene2},
there
exists a finite $\Pol_\Sigma$-coalgebra $(S, f)$ with $s\in S$ such that
$s\sim_{\Pol_\Sigma}
\E$. Thus, we take $(S, \eta_S\circ f)$ and we have a finite
$\Fin$-coalgebra with $s\in S$ such that $\E\sim_{\Fin} s$.
\qed

For the axiomatization a bit more ingenuity is required. 
One needs to derive which extra axioms are induced by the epimorphism
and
then prove that they are sound and complete. 

For instance, the finite powerset functor (which we included in the
syntax of non-deterministic functors) 
is the classical example of a finitary functor. It is the quotient of
the polynomial functor $\Pol_\Sigma(X)
= 1+X+X^2+\ldots$ (this represents lists of length $n$) by identifying
lists that contain precisely the same elements (that is, eliminating
repeated elements and abstracting from the ordering).

The syntax for $\Exp_{\Pol_\Sigma}$ is the set of
closed and guarded expressions given by the following BNF:
\[
\E ::= x \mid \mu x.\E \mid i(\E_1,\ldots,\E_i), \ \ \ i\in \mathbb{N}\\
\]
together with the axioms for the fixed point,
$(\alpha-\mathit{equiv})$ and $\mathit{(Cong)}$. 

Taking into account the restriction mentioned we would have to include
the extra axioms:
\[
\begin{array}{l}
i(\E_1,\ldots,\E_i) \equiv i(\E_1',\ldots,\E_i') \text { if }
\{\E_1,\ldots \E_i\} =  \{\E_1',\ldots \E_i'\}\\
i(\E_1,\E_2,\ldots,\E_i) \equiv (i-1)(\E_1,\E_3,\ldots,\E_i) \text { if }
\E_1\equiv \E_2\\

\end{array}
\]
In this case, one can see that this set of axioms is sound and
complete, by simply proving, for $\Pol_\Sigma(X)
= 1+X+X^2+\ldots$, $\Exp_{\Pol_\Sigma}/_\equiv \cong \Exp_\pow/_\equiv$
(since we already had a language and sound and complete axiomatization
for the $\pow$ functor). The restricted syntax and axioms needs to 
be derived for each concrete finitary functor. 
Finding a uniform way of defining such restricted syntax/axioms and
also uniformly proving soundness and completeness is a challenging problem and it is left as future work. 

\section{Discussion}\label{sec:conclusions5}

We presented a systematic way of deriving, from the type of a system,
a language of (generalized) regular expressions and a sound and
complete axiomatization thereof. We showed the analogue of Kleene's
theorem, proving the correspondence of the behaviours captured by
the expressions and the systems under consideration. The whole
approach was illustrated with five examples: deterministic finite
automata, partial deterministic automata, non\hyph deterministic automata,
labelled transition systems and automata on guarded strings. Moreover,
all the results presented in~\cite{BRS08} for Mealy machines can be recovered as
a particular instance of the present framework. 
 
Iterative theories have been introduced by Elgot~\cite{elgot} as a model of
computation and they formalize potentially infinite computations as
solutions of recursive equations. The main example of an iterative theory is the theory of
regular trees, that is trees which have on finitely many distinct
subtrees. Ad\'amek, Milius and Velebil have presented Elgot's work
from a coalgebraic perspective~\cite{AMV03,adamekmscs2006}, simplified
some of his original proofs, and generalized the notion of free
iterative theory to any finitary endofunctor of every locally
presentable category. The language modulo the axioms we will associate
with each functor is closely related to the work above: it is an 
initial iterative algebra. This also shows the connection of our work with the
work by Bloom and \'Esik on iterative algebras/theories~\cite{BE93}.
 It would be
interesting to investigate the connections with iterative
algebras further. 

In~\cite{jacobs06}, a bialgebraic review of deterministic automata and
regular expressions was presented. One of the main results of~\cite{jacobs06}
was a description of the free algebra and Brzozowski coalgebra
structure on regular expressions as a bialgebra with respect to a GSOS
law. We expect that this extends to our framework, but fully working this out is left as future
work. 

In this paper we studied coalgebras for $\mathbf{Set}$ functors. It is
an important and challenging question to extend our results to other
categories. Following our work, S. Milius~\cite{milius:lics10} has 
showed how to derive a language and sound and complete
axiomatization for the functor $\mathbb R \times \id$ in the category
of vector spaces and linear maps. It would also be interesting to study functors
over metric spaces~\cite{TuriR98,BreugelW06}.

In his seminal paper~\cite{kleene}, S.~Kleene introduced an
algebraic description of regular languages: regular expressions. This
was the precursor of many papers, including this one. Salomaa \cite{salomaa} presented a sound and complete axiomatization
for proving the equivalence of regular expressions. This was later
refined by Kozen in \cite{kozen}: he showed that Salomaa's
axiomatization is non-algebraic, in the sense that it is unsound under
substitution of alphabet symbols by arbitrary regular expressions, and
presented an algebraic axiomatization. In~\cite{milner}, Milner
introduced a set of expressions for finite LTS's and proved an
analogue of Kleene's theorem: each expression denotes the behaviour
of a finite LTS and, conversely, the behaviour of a finite LTS can
be specified by an expression. He also provided an axiomatization for his expressions, with the property that two expressions are provably equivalent if and only if they are bisimilar.

Our approach is inspired by the work of Kleene, Kozen and Milner. For that
reason, we have $\emp$ and $\oplus$ in the syntax of our
expressions, which allow to have underspecification and
overspecification. These features had to be reflected in the type of
the coalgebras we are able to deal with: the class of functors
considered include join-semilattices as constant functors and
$\myplus$ instead of the ordinary coproduct, which has allowed us to
remain in the category $\mathbf{Set}$. The fact that underspecification and
overspecification can be captured by a semilattice structure, plus the
fact that the axiomatization provides the set of expressions with a
join semilattice structure, hint (as one of the reviewers pointed out)
that the whole framework could have been studied directly in the
category of join-semilattices. This is indeed true, but, for
simplicity, we decided to remain in the category $\mathbf{Set}$. It is
not clear how much could be gained by directly working on join
semi-lattices. 

The connection between regular expressions and coalgebras was first
explored in~\cite{Rut98c}. There deterministic  automata, the set of formal languages
and regular expressions are all presented as coalgebras of the
functor $2 \times \id^A$ (where $A$ is the alphabet, and $2$ is the
two element set). It is then shown that the standard semantics of
language acceptance of automata and the assignment of languages to
regular expressions both arise as the unique homomorphism into the
final coalgebra of formal languages. The coalgebra structure on the
set of regular expressions is determined by their so-called {\em
Brzozowski} derivatives~\cite{Brz64}. In the present paper, the set
of expressions for the functor $\D(S) = 2 \times S^A$ differs from
the classical definition in that we do not have Kleene star and full
concatenation (sequential composition) but, instead, the least fixed
point operator and action prefixing. Modulo that difference, the
definition of a coalgebra structure on the set of expressions in
both~\cite{Rut98c} and the present paper is essentially the same.
All in all, one can therefore say that standard regular expressions
and their treatment in \cite{Rut98c} can be viewed as a special
instance of the present approach. This is also the case for the
generalization of the results in~\cite{Rut98c} to automata on guarded
strings~\cite{kozen08}. Finally, the present paper extends the results
in our FoSSaCS'08 paper~\cite{BRS08},
where a sound and complete specification language and a synthesis
algorithm for Mealy machines is given. Mealy machines are coalgebras
of the functor $(\B \times \id)^A$, where $A$ is a finite input
alphabet and $\B$ is a finite semilattice for the output alphabet. Part of the material of the present paper is based on two conference papers: our FoSSaCS'09 paper~\cite{regexp} and our LICS'09 paper~\cite{BRS09b}.

In the last few years, several proposals of specification languages
for coalgebras
appeared \cite{Moss99,Rossiger00,Jacobs01,Gol02,CirsteaP04,Bonsangue-Kurz05,Bonsangue-Kurz06,SP07,KV07}.
Our approach is similar in
spirit to that of~\cite{Gol02,Rossiger00,Jacobs01,SP07} in that we use the
ingredients of a functor for typing expressions, and differs
from~\cite{Rossiger00,Jacobs01} because we do not need an explicit "next-state"
operator, as we can deduce it from the type information. The modal operators associated to a functor
in~\cite{Rossiger00,Jacobs01,SP07} can easily be related with the
expressions considered in our language. As an example, consider the
expression $<\pi_2>[\kappa_1]<\alpha> \bot$, written in the syntax
of~\cite{Rossiger00}, which belongs to the language associated with the
functor $2\times (\id + 1)$ (the modal operator $<\alpha>$ is next
operator associated with the identity functor). In our language, this
would be represented by $r<l[\emp]>$.

Apart from~\cite{KV07}, the languages mentioned above do not include fixed point
operators. Our language of regular expressions can be seen as an
extension of the coalgebraic logic
of~\cite{Bonsangue-Kurz05} with fixed point operators, as well as the multi-sorted logics of~\cite{SP07}, and it is similar to a fragment of the
logic presented in~\cite{KV07}. However, our goal is rather
different: we want (1) a finitary language that characterizes
exactly all \emph{locally finite} coalgebras; (2) a Kleene like
theorem for the language or, in other words, a map (and not a
relation) from expressions to coalgebras and vice-versa. Similar to
many of the works above, we also derive a modular
axiomatization, sound and complete with respect to observational
equivalence. From the perspective of modal logic, the second half of
Kleene's theorem, where we show how to construct a coalgebra from an
expression, is the same as constructing a canonical model.
In~\cite{SP07}, the models presented for the multi-sorted logics are
multi-sorted coalgebras, whereas here we remain in the world of
coalgebras in the category  \textbf{Set}, constructing, from an
expression in $\Exp_\G$, for a given functor $\G$, a $\G$-coalgebra.
 Further exploring the connections with the approach presented
in~\cite{SP07} is a promising research path, opening the door to extending our framework for more general classes of functors.   

In conclusion, we mention a recent generalization of the present
approach: all the results presented in this paper can be extended in order to accommodate systems with \emph{quantities}, such as probability or costs~\cite{BBRS09}. The main
technical challenge
is that quantitative systems have an inherently non-idempotent
behaviour and thus the proof of Kleene's theorem and the
axiomatization require extra care. This extension allows for the derivation of specification languages and axiomatizations for a wide variety of systems, which include weighted automata, simple probabilistic systems (also known as Markov chains) and systems with mixed probability and non-determinism (such as Segala systems). For instance, we have derived a language and an axiomatization for the so-called stratified systems. The language is equivalent to the one presented in~\cite{GSS95}, but no axiomatization was known. 

The derivation of the syntax and axioms associated with each non-deterministic functor has been implemented in the coinductive prover CIRC~\cite{circ}. This allows for automatic reasoning about the equivalence of expressions specifying  systems. 

\paragraph{\textbf{Acknowledgements.}} The authors are grateful for useful
comments from several people: Filippo Bonchi, Helle Hansen, Bartek
Klin, Dexter Kozen, Clemens Kupke, Stefan Milius, Prakash Panagaden,
Ana Sokolova, Yde Venema and Erik de Vink.  The title of this paper
was inspired by the title of a section of a paper of Dexter
Kozen~\cite{kozen08}. The proof of soundness and completeness was
simplified (when compared with the one presented in our LICS
paper~\cite{BRS09b}) inspired by recent work of Stefan Milius on
expressions for linear systems (personal communication). Finally, we
would like to thank the three anonymous reviewers for their very detailed
reports, which greatly improved the presentation of the paper.\vspace{-24 pt}

\bibliographystyle{plain}
\bibliography{refs}
\vspace{-20 pt}
\end{document}